\theoremstyle{plain}
\newtheorem{theorem}{Theorem}
\newtheorem{proposition}[theorem]{Proposition}
\newtheorem {corollary}[theorem]{Corollary}
\newtheorem{definition}[theorem]{Definition}
\newtheorem{lemma}[theorem]{Lemma}
\newtheorem{example}[theorem]{Example}
\newcommand{\class}[1]{\mathcal{#1}}
\newcommand{\vij}{v_{i,j}}
\newcommand{\vv}[2]{v_{#1,#2}}
\newcommand{\gsr}{\textit{GSR}}
\newcommand{\GSR}{\textit{GSR}}
\newcommand{\ftg}{\textit{FTG}}
\newcommand{\FTG}{\textit{FTG}}
\newcommand{\lnext}{{\sf Lnext}}
\newcommand{\Lnext}{{\sf Lnext}}
\newcommand{\cover}{\textit{cover}}
\newcommand{\cfl}{\textit{CFL}}
\newcommand{\CFL}{\textit{CFL}}
\newcommand{\Duval}{{\sf Duval}}
\newcommand{\ctf}{{\sf cover\_to\_FTG}}
\begin{document}

\sloppy

\journal{Discrete Mathematics}

\begin{frontmatter}

\title{An Efficient Generalized Shift-Rule for the Prefer-Max De Bruijn Sequence}

\author[tec]{Gal Amram}
\ead{galamram@technion.ac.il}

\author[bgu]{Amir Rubin}
\ead{amirrub@cs.bgu.ac.il}

\address[tec]{Department of Electrical Engineering, Technion-Israel Institute of
Technology} 
\address[bgu]{Department of Computer Science, Ben-Gurion University of The Negev}

\begin{abstract}
  One of the fundamental ways to construct De Bruijn sequences is by using a shift-rule. A shift-rule receives a word as an argument and computes the symbol that appears after it in the sequence. An optimal shift-rule for an $(n,k)$-De Bruijn sequence runs in time $O(n)$. We propose an extended notion we name a generalized-shift-rule, which receives a word, $w$, and an integer, $c$, and outputs the $c$ symbols that comes after $w$. An optimal generalized-shift-rule for an $(n,k)$-De Bruijn sequence runs in time $O(n+c)$. We show that, unlike in the case of a shift-rule, a time optimal generalized-shift-rule allows to construct the entire sequence efficiently. We provide a time optimal generalized-shift-rule for the well-known prefer-max and prefer-min De Bruijn sequences.

\end{abstract}

\begin{keyword}
De Bruijn sequence \sep Ford sequence \sep prefer-max sequence \sep shift rule
\MSC[2010] 94A55 \sep 05C45 \sep 05C38
\end{keyword}

\end{frontmatter}

\section{Introduction}

De Bruijn sequences were rediscovered many times over the years, starting from 1894 by Flye-Sainte Marie~\cite{Fly1894}, and finally by De Bruijn himself in 1946~\cite{de1946}. For two positive non-zero integers, $k$ and $n$, an $(n,k)$-De Bruijn ($(n,k)$-DB, for abbreviation) sequence is a cyclic sequence over the alphabet $\{0,\dots,k-1\}$ in which every word of length $n$ over $\{0,\dots,k-1\}$ appears exactly once as a subword. 
It is cyclic in the sense that some words are generated by  concatenating the suffix of length $m<n$ of the sequence, with its prefix of length $n-m$.

A construction for a family of $(n,k)$-DB sequences is an algorithm that receives the two arguments, $n$ and $k$ (occasionally, $k$ is fixed and only $n$ is given as argument), and outputs an $(n,k)$-DB sequence. Obviously, a trivial time lower bound for a construction is $\Omega(k^n)$, as this is the exact length of an $(n,k)$-DB sequence. Many constructions for a variety of families of De Bruijn sequences are known, (for example, \cite{alhakim,amram2019efficient,grandmama,etzion,FM,gabric,newalgo,martin,simple,sawada17}) and some of them are also time optimal.

A specifically famous family of $(n,k)$-DB sequences is the prefer-max family~\cite{ford57,martin}, which is constructed by the well-known ``granddaddy" greedy algorithm~\cite{martin} (see also~\cite[Section~7.2.1.1]{knuth2011}). The algorithm constructs the sequence symbol by symbol, where at each step the maximal value is added to the initial segment constructed so far, so that the new suffix of length $n$ does not appear elsewhere. A symmetric approach produces the prefer-min DB sequence. Besides this highly inefficient algorithm,
many other constructions for the prefer-max and prefer-min sequences have been proposed in the literature. A classic result by Fredricksen and Kessler~\cite{FK}, and Fredricksen and Maiorana~\cite{FM} shows that the prefer-max sequence is in fact a concatenation of certain words, a result we use in this work. This block construction was later proved to be time optimal in~\cite{ruskey92}. Another efficient block concatenation construction was suggested in~\cite{ralston81}.

A common and important way of generating DB sequences is by using a shift-rule (also named a shift-register). A shift-rule for an $(n,k)$-DB sequence receives a word of length $n$, $w$, as an input, and outputs the symbol that follows $w$ at the sequence. Here, $n$ and $k$ are parameters of the algorithm. Obviously, a shift-rule must run in $\Omega(n)$ time since it must read every symbol in its input to produce the correct output. Shift-rules are important since, unlike block constructions,  they can be applied on words that appear at the middle of the sequence.

Several efficient shift-rules for DB sequences are known for $k=2$ (see~\cite{simple} for a comprehensive list). However, only recently efficient shift-rules were discovered for non-binary sequences. Sawada et al.~\cite{sawada17} introduced a new family of DB sequences and provided a linear time shift-rule for these sequences. Amram et al.~\cite{amram2019efficient} introduced an efficient shift-rule for the famous prefer-max and prefer-min DB sequences.

We note that, generally, a construction for a DB sequence provides an exponential time shift-rule, since, on many inputs,  it is required to construct almost the whole sequence to find the desired symbol the shift-rule should output. On the other hand, a shift-rule for a DB sequence provides a construction in $O(nk^n)$ time, by finding the next symbols one by one, which is not an optimal approach. 

We see that none of these two methods, a general construction and a shift-rule, dominates the other, and we propose here a third way, which generalizes both methods. A \emph{Generalized-Shift-Rule} ($\GSR$ for abbreviation) for an $(n,k)$-DB sequence is an algorithm that receives two arguments: a word, $w$, of length $n$, and a positive integer $c$. The $\GSR$ outputs the $c$ symbols that follow $w$ at the sequence. Since the algorithm must read its input and must write $c$ symbols, $\Omega(n+c)$ is a trivial time lower bound for a $\GSR$. An optimal $\GSR$ provides an optimal shift-rule when used with $c=1$. In addition, an optimal $\GSR$ provides an optimal construction by invoking it with $c=k^n-n$, or by invoking it $\frac{k^n}{n}$ times with $c=n$ (for example).

Although a $\gsr$ is defined here for the first time, researchers have noted the advantages behind this notion, and mentioned that their shift-rule possesses the properties we seek for in this paper. In~\cite{sawada17} Sawada et al. described a shift-rule with $O(1)$-amortized time per bit. As this seems to contradict the trivial time lower bound mentioned earlier, this statement requires a clarification. The shift-rule proposed in~\cite{sawada17} has the interesting property that after using it once, it can be invoked $c$ more times and, by carefully retaining data from one invocation to another, it can produce the next $c$ symbols in $O(c)$-amortized time. Hence, in fact, Sawada et al. noted and mentioned that their shift-rule also forms a time optimal $\GSR$. A similar remark can be found in~\cite{simple}.

In this paper, we present an optimal $\GSR$ for the well-known prefer-max and prefer-min DB sequences.
Our $\GSR$ construction takes advantage of the seminal block-construction of~\cite{FM} for those sequences, in the following manner.
The sequences are constructed in~\cite{FM} as a concatenation of certain words: $L_1\cdots L_i\cdots$. In Section~\ref{Sec:GSR-FTG} we note that a $\GSR$ can be constructed by solving a similar problem, named \textit{filling-the-gap}. The problem is to find a word $x$ that completes $w$ into a suffix of $L_1\dots L_i$ for some $L_i$. If this suffix is of insufficient length, we use another algorithm, presented in Section~\ref{Sec:Lnext}, which finds the words that follow $L_i$ in that block-construction. In Section~\ref{Sec:FTG} we present a filling-the-gap algorithm which provides, as explained, a $\GSR$ for the prefer-min and prefer-max sequences. In addition, in Section~\ref{Sec:Preliminaries} we define notations used throughout the paper, and conclusions are given in Section~\ref{Sec:Discussion}.

\section{Preliminaries}
\label{Sec:Preliminaries}

For an integer $k$, consider the alphabet $\{0,\dots,k-1\}$, ordered naturally by $0<\cdots < k-1$. Hence, the set of all words over $\{0,\dots,k-1\}$, denoted $\{0,\dots,k-1\}^*$, is totally ordered by the lexicographic order, which we simply denote by `$<$'. As usual, the empty word is denoted by $\varepsilon$. We say that a word, $w$, is an \emph{$m$-word} if $|w|=m$. Furthermore, the \emph{$m$-prefix} of $w$ is the prefix of $w$ of length $m$, and the \emph{$m$-suffix} of $w$ is the suffix of $w$ of length $m$. These notions are defined, of course, only when $|w|\geq m$. 

 A word, $v$, is a \emph{rotation} of a word, $w$, if $w=xy$ and $v=yx$. In addition, $v$ is a \emph{non-trivial rotation} of $w$, if $x\neq w$ and $y\neq w$. Note that a word $w$ can be equal to some of its non-trivial rotations. This happens when $w=x^t$ for some non-empty word, $x$, and an integer $t>1$. In this case, $w$ is said to be \emph{periodic}. A \emph{Lyndon word}~\cite{lyndonwords} is a non-empty word that is strictly smaller than all its non-trivial rotations. Hence, in particular, a Lyndon word is aperiodic.

The prefer-max $(n,k)$-DB sequence is the cyclic sequence  constructed by the greedy algorithm that starts with $0^n$, and repeatedly adds the largest possible symbol in $\{0,\dots,k-1\}$ so that no $n$-word appears twice as a subword of this sequence, until the sequence length is $k^n$, and then rotates the obtained sequence to the left $n$ times. As an example, for $n=2$ and $k=3$, this greedy process produces the sequence: $002212011$ and the prefer-max $(2,3)$-DB sequence is: $221201100$. Analogously, the prefer-min $(n,k)$-DB sequence is produced by the greedy algorithm which starts with $(k-1)^n$, and repeatedly concatenates the smallest possible symbol so that no repetition occurs, and afterwards rotates the resulting sequence to the left $n$ times. 

We note that the prefer-min $(n,k)$-DB sequence and the prefer-max $(n,k)$-DB sequence can be derived one from the other by replacing each symbol, $m$, with $k-1-m$.  Therefore, a $\GSR$ for one of these sequences can be easily transformed into a $\GSR$ for the other one as well. We present here a $\GSR$ algorithm for the prefer-min $(n,k)$-DB sequence. 

From this point on, we refer to $n$ and $k$ as fixed, unknown, parameters, larger than $1$ (to avoid trivialities). We measure time complexity of all algorithms given here in terms of the parameter $n$, assuming that arithmetic operations can be computed in constant time, regardless of how large the numbers they are applied on. 

Let  $\class L_1,\class L_2,\class L_3,\dots$ be the (finite) sequence of all Lyndon words over $\{0,\dots,k-1\}$ of length at most $n$, sorted lexicographically. Let $N(n,k)$ be the number of all Lyndon words over $\{0,\dots,k-1\}$ whose length divides $n$, and let $L_1,L_2,\dots, L_{N(n,k)}$ be an enumeration of them, sorted lexicographically. For a Lyndon word, $L_i$, let $r_i=\frac{n}{|L_i|}$. Since $|L_i|$ divides $n$, $r_i$ is a positive integer. Note that every $L_i$ is equal to some $\class L_j$ such that $j\geq i$. The main result of~\cite{FM} (with a straightforward adaptation) is:

\begin{theorem}
\label{Theorem:L0L1PrefMin}
The prefer-min $(n,k)$-DB sequence is: $L_1 \cdots L_{N(n,k)}$.
\end{theorem}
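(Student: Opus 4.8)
The plan is to reduce Theorem~\ref{Theorem:L0L1PrefMin} to the classical Fredricksen--Maiorana result and then verify that the ``straightforward adaptation'' to the prefer-\emph{min} setting (and to the rotated, cyclic form of the sequence) really is straightforward. First I would recall the standard FKM statement: if one lists all Lyndon words over $\{0,\dots,k-1\}$ whose length divides $n$ in \emph{lexicographic} order, say $M_1,M_2,\dots,M_t$, then the concatenation $M_1 M_2 \cdots M_t$ is an $(n,k)$-De Bruijn sequence, and moreover it is precisely the lexicographically smallest one, i.e.\ the prefer-min sequence in its "un-rotated" form (the one that the greedy algorithm produces starting from $(k-1)^n$). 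Since $L_1,\dots,L_{N(n,k)}$ is by definition exactly this enumeration, the content of the theorem is the identification of $L_1\cdots L_{N(n,k)}$ with the prefer-min sequence as we have defined it.

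The key steps, in order, are: (1) State and cite the FKM theorem in the form "the concatenation of the length-$n$-divisor Lyndon words in lexicographic order is a De Bruijn sequence." (2) Argue that this concatenation, read as a cyclic sequence, coincides with the sequence produced by the greedy prefer-min algorithm. The cleanest route is the known characterization (Fredricksen--Kessler, Fredricksen--Maiorana; see also Knuth) that the greedy prefer-largest/prefer-smallest construction yields exactly the lexicographically extremal De Bruijn sequence, so I would show that the Lyndon concatenation is lexicographically smallest among all $(n,k)$-De Bruijn sequences: any De Bruijn sequence must contain $(k-1)^n$ and hence, as a necklace representative, cannot start below the block structure forced by the Lyndon words; more carefully, one shows the FKM sequence is minimal by a block-exchange/greedy argument, which is exactly what~\cite{FM} and~\cite{FK} do. (3) Handle the prefer-min versus prefer-max bookkeeping: the paper has already observed (just above the theorem) that prefer-min and prefer-max are related by $m\mapsto k-1-m$, and the original FKM result is usually phrased for one of the two conventions, so I would simply note the symmetry and apply it. (4) Account for the final left-rotation by $n$ positions that appears in the paper's definition of the prefer-min sequence: since the statement is about a \emph{cyclic} sequence, rotation does not change it, so $L_1\cdots L_{N(n,k)}$ and its $n$-fold left rotation denote the same cyclic object; I would make this remark explicit so the reader sees why the rotation in the definition is harmless here.

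Concretely I would structure the proof as: recall that $(k-1)^n$ occurs in every $(n,k)$-De Bruijn sequence, so every such sequence has a cyclic rotation beginning with $(k-1)^n$; among these representatives, prefer-min selects the lexicographically least infinite (cyclic) word, which by definition of the greedy rule is what the algorithm computes; then invoke~\cite{FM,FK} for the fact that $M_1\cdots M_t$, the ordered concatenation of divisor-length Lyndon words, is that lexicographically least representative (this is the substantive input, quoted as a known theorem); finally match notation, $M_j = L_j$ and $t = N(n,k)$, and discard the rotation by cyclicity.

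The main obstacle is not any new mathematics — it is entirely a matter of reconciling conventions. The delicate points are: (a) the original FKM theorem is most often stated for binary alphabets or for the prefer-\emph{one}/prefer-\emph{max} convention, and for the "aperiodic prefix of necklaces" rather than literally "Lyndon words whose length divides $n$," so I must make sure the reformulation is faithful (these are equivalent: a necklace of length $n$ whose aperiodic prefix is a Lyndon word $\ell$ with $|\ell|\mid n$ contributes exactly $\ell$ repeated $n/|\ell|$ times, and the FKM concatenation of necklaces' aperiodic prefixes is literally $\ell_1\ell_2\cdots$); and (b) checking that the greedy prefer-min algorithm as defined here — start from $(k-1)^n$, append the smallest admissible symbol, then rotate left by $n$ — produces precisely the cyclic sequence $L_1\cdots L_{N(n,k)}$ and not some rotation or reflection of it. Both are routine once spelled out, so the proof will essentially be: "this is Theorem (FKM) after translating notation and symmetry," with a short paragraph verifying each translation.
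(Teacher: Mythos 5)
Your proposal matches the paper's treatment: the paper gives no proof of this theorem at all, simply citing Fredricksen--Maiorana with the remark that a ``straightforward adaptation'' (the prefer-min/prefer-max complementation, the rotation by $n$, and the identification of the FKM blocks with Lyndon words of divisor length) yields the stated form. Your plan is exactly that adaptation spelled out, so it is correct and follows essentially the same route, just with the bookkeeping made explicit.
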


 As an example, for $n=k=3$ we concatenate in an increasing order all Lyndon words of length one or three. We get the following sequence, decomposed into Lyndon words:
 $$\text{prefer-min $(3,3)$-DB}=0|001|002|011|012|022|1|112|122|2$$

As said, our strategy in constructing a $\GSR$ for the prefer-min sequence is to fill the gap between the input, $w$, to a word $L_i$ in the sequence, and then to concatenate Lyndon words until we find the required $c$ symbols that follow $w$. To this end, we refer to the sequence $L_1,\dots,L_{N(n,k)}$ as cyclic, meaning that,
for $t\in \mathbb{Z}$ and $i\in \{1,\dots,N(n,k)\}$, we set $L_{t\cdot N(n,k)+i}=L_i$.

\section{An Efficient {\sf {Lnext}} Algorithm  }
\label{Sec:Lnext}

As a first step in constructing a $\GSR$ algorithm, we analyze a relatively simple case. By Theorem~\ref{Theorem:L0L1PrefMin}, for every $i\leq N(n,k)$ the sequence $L_1\cdots L_i$ is a prefix of the prefer-min  sequence. We consider the case where we are given an $n$-word, $w$, that happens to be a suffix of $L_1\cdots L_i$. 
To find the next $c$ symbols, we can compute the next words in the block construction, $L_{i+1},\cdots L_{i+j}$, so that the sequence $L_{i+1}\cdots L_{i+j}$ is of length at least $c$.

For dealing with this restricted case, we design an algorithm that computes efficiently the function: $\Lnext(L_i)=L_{i+1}$. Moreover, for technical reasons that will arise later, we also want to apply the algorithm over Lyndon words whose length does not necessarily divide $n$. Therefore, for a Lyndon word,
$\class L_j\neq L_{N(n,k)}$, we define $\Lnext(\class L_j)$ to be the lexicographically smallest $L_i$ such that
$\class L_j<L_i$. For $\class L_j = L_{N(n,k)}$, we define   $\lnext(L_{N(n,k)})=L_1$ (that is, $\Lnext(k-1)=0$).
In this section, we present an $\Lnext$ algorithm with $O(n)$ time complexity.

\begin{proposition}
$\forall \mathcal{L}_i$ Algorithm~\ref{Algo_next_n_Lyndon} computes $\Lnext(\mathcal{L}_i)$ in $O(n)$ time.
\label{prop_next_n_Lyndon}
\end{proposition}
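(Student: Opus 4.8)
## Proof Proposal

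The plan is to give an explicit combinatorial description of $\Lnext$ in terms of the structure of a Lyndon word, and then verify that the description can be implemented in linear time. The key observation is a classical one about enumerating Lyndon words in lexicographic order (essentially the step underlying the FKM/Duval algorithm): given a Lyndon word $\mathcal{L}_j$ over $\{0,\dots,k-1\}$ of length $\ell \leq n$, the lexicographically next Lyndon word of length \emph{at most} $n$ is obtained as follows. Write $\mathcal{L}_j = u (k-1)^t$ where $t \geq 0$ is maximal, so that $u$ ends in a symbol strictly less than $k-1$ (the case $\mathcal{L}_j = (k-1)$ is the special wrap-around case handled separately). Form the word $v$ of length $\ell - t$ obtained from $u$ by incrementing its last symbol by one. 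Then $v$ is a Lyndon word (a short argument: $v > \mathcal{L}_j$, and every proper suffix of $v$ either is a suffix of $u$ — hence strictly larger than the corresponding prefix of $\mathcal{L}_j$ and therefore of $v$ — or is the last symbol, which got strictly larger). To reach length exactly $n$, repeat $v$ periodically and truncate to length $n$, i.e. take the $n$-prefix of $v^{\infty}$; call this word $v'$. The candidate $\Lnext(\mathcal{L}_j)$ is then the \emph{longest prefix of $v'$ that is itself a Lyndon word}, equivalently the first block produced by Duval's factorization of $v'$. The final subtlety is the divisibility constraint defining $L_i$ versus $\mathcal{L}_i$: since we need $\Lnext(\mathcal{L}_j) = L_i$ with $|L_i|$ dividing $n$, I would argue that the Lyndon-prefix-of-$v'$ just described automatically has length dividing $n$, because $v'$ is already $n/|v|$-periodic-looking when $v$ is Lyndon, and the first Duval block of an $n$-prefix of a repeated Lyndon word has length dividing $n$.

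The key steps, in order, are: (1) state and prove the structural lemma that the lexicographically next Lyndon word of length $\leq n$ after $\mathcal{L}_j = u(k-1)^t$ is the first Duval block of the $n$-prefix of $v^{\infty}$, where $v$ is $u$ with its last symbol incremented; (2) prove that this first Duval block has length dividing $n$, so it equals some $L_i$, and that it is the \emph{smallest} such $L_i$ exceeding $\mathcal{L}_j$ — this is where one must rule out that some shorter-period Lyndon word $L_i$ with $\mathcal{L}_j < L_i$ was skipped; (3) handle the wrap-around case $\mathcal{L}_j = L_{N(n,k)} = (k-1)$, returning $L_1 = 0$; (4) unwind Algorithm~\ref{Algo_next_n_Lyndon} and check that each phase — scanning off the trailing $(k-1)$'s, incrementing, periodically extending to length $n$, and running one pass of Duval's factorization algorithm on an $n$-word — runs in $O(n)$ time, using that Duval's algorithm is linear.

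The main obstacle I expect is step (2): the interaction between "lexicographically next Lyndon word of length at most $n$" and "lexicographically next Lyndon word whose length divides $n$." It is easy to believe the FKM-style successor rule, but one must be careful that passing through the $n$-prefix of $v^{\infty}$ and taking its first Duval block does not overshoot — i.e. that no Lyndon word $L_i$ with $|L_i| \mid n$ lies strictly between $\mathcal{L}_j$ and the word the algorithm returns. The clean way to see this is to recall (from the FKM construction, Theorem~\ref{Theorem:L0L1PrefMin}) that the necklace-prefix sequence of the prefer-min sequence is exactly $L_1, L_2, \dots$, and that the $n$-prefix of $v^{\infty}$ is precisely the $n$-window of the prefer-min sequence starting right after $\mathcal{L}_j$; its first Lyndon block is then $L_{i}$ with $\mathcal{L}_j \leq \mathcal{L}_{?} $ forcing minimality. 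Concretely I would prove: for any Lyndon word $z$ with $|z| \mid n$ and $z > \mathcal{L}_j$, we have $z \geq (\text{first Duval block of the }n\text{-prefix of }v^\infty)$, by comparing prefixes symbol-by-symbol and invoking the maximality of $t$ and of the Duval block. Once this lemma is in hand, correctness of the algorithm is immediate and the complexity bound is a routine inspection of the pseudocode.
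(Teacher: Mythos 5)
Your key structural lemma is false, so the argument does not go through. You claim that if $\mathcal{L}_j=u(k-1)^t$ with $u$ not ending in $k-1$, and $v$ is $u$ with its last symbol incremented, then $\Lnext(\mathcal{L}_j)$ is the first Duval block (longest Lyndon prefix) of the $n$-prefix $v'$ of $v^{\infty}$, and moreover that this block's length ``automatically'' divides $n$. Counterexample: $k=2$, $n=7$, $\mathcal{L}_j=0010111$. Here $u=0010$, $v=0011$, $v'=0011001$, whose longest Lyndon prefix (equivalently, first \CFL{} factor) is $0011$, of length $4\nmid 7$; but $\Lnext(0010111)=0011011$, reached by three applications of Algorithm~\ref{Algo_Duval}: $0010111\mapsto 0011\mapsto 001101\mapsto 0011011$ (this is exactly the paper's example following Definition~\ref{Def_vi}). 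The same example refutes your fallback claim that $v'$ is the $n$-window of the prefer-min sequence starting right after $\mathcal{L}_j$: by Theorem~\ref{Theorem:L0L1PrefMin} that window is the next block $0011011$, not $0011001$. The underlying error is the order of operations: Duval's successor rule first extends the current word periodically to length $n$ and only then strips trailing $(k-1)$'s and increments; stripping and incrementing $\mathcal{L}_j$ itself and then extending, as you do, is a different operation, and no single extraction of a Lyndon prefix can recover $\Lnext$, because in general one must pass through several intermediate Lyndon words whose lengths do not divide $n$ (here $0011$ and $001101$).

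Separately, your step (4) does not describe Algorithm~\ref{Algo_next_n_Lyndon}, so even with a correct successor characterization the complexity claim would not follow by ``routine inspection.'' The algorithm runs a \textbf{while} loop that repeatedly appends $u'^{t}$, where $u'$ is derived from the current $m$-prefix of $x$ with $m=n-|x|$; its correctness rests on showing that each such iteration compresses exactly $t$ consecutive invocations of Duval's algorithm, i.e.\ $\Duval(xu'^j)=xu'^{j+1}$ for $j<t$ (Lemma~\ref{Lemma:Duval_X_u-j_is_X_u-j+1}, yielding Corollary~\ref{Corollary_next-Lyndon_is_Naive-Next-Lyndon}), and its linearity rests on the halving bound $m_i\leq n/2^i$ (Lemma~\ref{Lemma_m_i_log}), which limits the loop to $O(\log n)$ iterations of geometrically decreasing cost. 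Your proposal contains neither of these ingredients, and its single-pass description (strip, increment, extend, one Duval factorization) is not the algorithm whose correctness and $O(n)$ running time the proposition asserts.
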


In~\cite{duval1988generation}, Duval describes an algorithm to build the next Lyndon word from a given one and proves:
\begin{theorem}
	\label{theoram_Duval_Li_Li}
	On every input $\class L_i \neq k-1$, Algorithm~\ref{Algo_Duval} returns $\class L_{i+1}$ in $O(n)$ time. 
\end{theorem}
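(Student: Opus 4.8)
The plan is to treat the two assertions of the theorem -- correctness and the $O(n)$ bound -- in turn. I take Algorithm~\ref{Algo_Duval} to be Duval's classical ``next Lyndon word'' procedure: on input $\class L_i$, of length $m\le n$, it forms the length-$n$ periodic extension $v=v[1]\cdots v[n]$ with $v[j]=\class L_i[((j-1)\bmod m)+1]$, sets $p$ to the largest index with $v[p]<k-1$, and returns $u=v[1]\cdots v[p-1]\,(v[p]+1)$. The time bound is then immediate: writing out $v$ is $n$ symbol operations, locating $p$ and dropping the trailing $(k-1)$'s is at most $n$ more, and the final increment is a single arithmetic step, which is $O(1)$ under the paper's cost model; hence the whole computation is $O(n)$.

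For correctness I would prove that the output $u$ satisfies: (i)~$u$ is a Lyndon word with $|u|\le n$; (ii)~$\class L_i<u$; (iii)~no Lyndon word $z$ with $|z|\le n$ has $\class L_i<z<u$. Since the $\class L_j$'s enumerate all Lyndon words of length at most $n$ in lexicographic order, (i)--(iii) say precisely that $u=\class L_{i+1}$. Note first that $u$ is even well formed: $p$ exists because $\class L_i\neq k-1$, for otherwise the periodic extension $v$ would be $(k-1)^n$, forcing $\class L_i$ to be a power of $k-1$ and hence, being Lyndon, equal to the single letter $k-1$.

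Assertion (ii) and the length half of (i) are short. We have $|u|=p\le n$. For (ii), compare $u$ with $\class L_i$, which is a prefix of $v$: if $p>m$ then $\class L_i$ is a proper prefix of $u$; if $p\le m$ then $u$ and $\class L_i$ agree on positions $1,\dots,p-1$ while $u[p]=\class L_i[p]+1>\class L_i[p]$; either way $\class L_i<u$. For the Lyndon half of (i) I would invoke the characterization ``a word is Lyndon iff it is strictly smaller than each of its proper suffixes'' together with the standard lemma that raising the last letter of a prenecklace whose last letter is below $k-1$ yields a Lyndon word, applied to $v[1]\cdots v[p]$; this word is a prenecklace since it is a prefix of $(\class L_i)^{t}$ for large $t$, and $(\class L_i)^{t}$ is a necklace because $\class L_i$ is Lyndon.

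The crux, and the step I expect to be the main obstacle, is the minimality claim (iii). Here I would suppose a Lyndon word $z$ with $\class L_i<z<u$ and force a contradiction. Since $z$ strictly exceeds $\class L_i$ yet stays below $u$, and $u$ differs from $v$ only by the $+1$ at position $p$, one first pins $z$ down to have $v[1]\cdots v[p-1]$ as a prefix with $z[p]\le v[p]$ wherever that position is defined. One then plays this against the period-$m$ structure of $v$: a suffix of $v[1]\cdots v[p-1]$ beginning at an index congruent to $1$ modulo $m$ is again a prefix of $v$, hence of $\class L_i$, and iterating this exhibits a proper suffix of $z$ that is $\le z$, contradicting Lyndonness -- unless the same accounting instead forces $z\ge u$. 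Making this dichotomy precise, together with the corner cases $p\le m$ and $|z|<p$, is exactly the combinatorial analysis of Duval~\cite{duval1988generation}, which one may alternatively simply cite.
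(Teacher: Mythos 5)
The paper does not prove this statement at all: it is quoted as Duval's result, with the proof delegated entirely to the citation \cite{duval1988generation}. Your sketch is correct where it is concrete --- the $O(n)$ time bound, the inequality $\class L_i<u$, and the Lyndon-ness of the output via the prenecklace lemma --- and for the only genuinely hard step, minimality of the successor, you fall back on the same citation, so you are in effect taking the paper's route with some standard details filled in.
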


\begin{algorithm}[ht]
\caption{\Duval}
\textit{Input:} A Lyndon word, $\class L_i\neq k-1$\\
\textit{Output:} $\class L_{i+1}$

\hrulefill
\begin{algorithmic}[1]

\State $x\gets (\class L_i)^tu$, where $u$ is a proper prefix prefix of $\class L_i$ and $|(\class L_i)^tu|=n$
\State remove largest suffix of $x$ of the form $(k-1)^l$
\State increase the last symbol of $x$ by 1
\State return $x$

\end{algorithmic}
\label{Algo_Duval}
\end{algorithm}

Note that the length of the output of the algorithm may not divide $n$. However, we use it to construct a naive algorithm which achieves that goal, with a time complexity of $O(n^2)$. We first describe this naive version, which merely invokes Algorithm~\ref{Algo_Duval} several times, and then improve it to run in linear time.

\begin{algorithm}[h]
\caption{\sf{Naive-Lnext}}
\textit{Input:} a Lyndon word, $\mathcal{L}_i$\\
\textit{Output:} $\lnext(\class L_i)$

\hrulefill

\begin{algorithmic}[1]

\If{$\class L_i=k-1$}
\State return $0$ \EndIf

\State   $x\gets \Duval(\mathcal{L}_i)$ 

\If {$|x| \mid n$} 
\State return $x$
\EndIf 

\If {$|x|< \frac{n}{2}$} 
\State $x\gets \Duval(x)$
\EndIf \hfill// now $|x|>\frac{n}{2}$ 

\State $\mathit{length} \gets |x|$

\While{$\mathit{length} \nmid n$}  
\State $x\gets \Duval(x)$, $\mathit{length}\gets |x|$\EndWhile

\State return $x$. 
\end{algorithmic}
\label{Algo_Naive-Next-Lyndon}
\end{algorithm}

Note that $\forall \class L_i$, Algorithm~\ref{Algo_Naive-Next-Lyndon} outputs $\lnext(\class L_i)$. At each iteration of the loop in lines 12-14, the algorithm invokes Duval's algorithm, until it finds a Lyndon word whose length divides $n$. This establishes a worst case runtime of $\Theta(n^2)$. 

The reader may note that the  {\bfseries if} instructions in lines 5-7 and lines 8-10 can be omitted. However, we aim to construct a linear time $\Lnext$ algorithm, and we do that by modifying the {\bfseries while} loop. Then, it will be important that the loop acts on words whose length is larger than $\frac{n}{2}$. Thus, lines 5-7 and 8-10 are added to simplify the comparison between this naive $\lnext$ version and our linear time $\lnext$ version.

To improve the runtime of this algorithm, we identify cases in which the outcome of several loop iterations can be computed directly. These are the cases in which calling Duval's algorithm again and again results in concatenating the same sequence several times. For illustration, assume that at some point the algorithm reaches line 12 when $x$ stores a word $\class L_j=00(k-1)^l$, such that $|\class L_j| = 2+l \geq \frac{n}{2}$. The Lyndon words that follow $\class L_j$ are:
$$00(k-1)^l 01, 00(k-1)^l 0101,\dots,00(k-1)^l(01)^{\lfloor \frac{n-l-2}{2} \rfloor}.$$ Instead of applying $\Duval$ $t=\lfloor \frac{n-l-2}{2} \rfloor$ times, we can save time by computing $t$, and go to $00(k-1)^l(01)^t$ without traversing all words in that list. This allows us to compute $\lnext(\class L_i)$ in linear time, as we do in Algorithm~\ref{Algo_next_n_Lyndon}.

\begin{algorithm}[ht]
\caption{\Lnext}

\textit{Input:} a Lyndon word,  $\mathcal{L}_i$\\
\textit{Output:} $\lnext(\class L_i)$

\hrulefill
\begin{algorithmic}[1]
 \If{$\class L_i=k-1$}
\State return $0$ \EndIf

\State  $x\gets \Duval(\class L_i)$
\If {$|x| \mid n$} 
\State return $x$
\EndIf 

\If {$|x|< \frac{n}{2}$}
\State $x\gets \Duval(x)$
\EndIf \hfill // now $|x| > \frac{n}{2}$ 

\State $\mathit{length} \gets |x|$

\While {$\mathit{length}\nmid n$}   
  \State $m\gets n-\mathit{length}$
 
 \State $u\gets$ prefix of $x$ of length $m$
 
 \State $u'\gets$ the word obtained from $u$ by removing its suffix that includes only occurrences of $k-1$, and increasing the last symbol by $1$ 
 
 \State $t\gets\lfloor \frac{m}{|u'|} \rfloor$
 
 \State $x\gets xu'^{t}$, $\mathit{length}\gets \mathit{length}+|u'^t|$
 \EndWhile
 
 \State return $x$
\end{algorithmic}
\label{Algo_next_n_Lyndon}
\end{algorithm}

In order to prove Algorithm~\ref{Algo_next_n_Lyndon} correctness, we show that both algorithms, Algorithm~\ref{Algo_Naive-Next-Lyndon} and Algorithm~\ref{Algo_next_n_Lyndon}, have the same output for every legal input. We start with two observations, derived from Duval's algorithm.

\begin{corollary}
\label{corollary_L_i-L_i+1}
$\forall \mathcal{L}_i\neq k-1$, if $|\class L_i|<n$, then $|\class L_i|<|\class L_{i+1}|$.
\end{corollary}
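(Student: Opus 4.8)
The plan is to read off $|\class L_{i+1}|$ from Duval's construction. By Theorem~\ref{theoram_Duval_Li_Li}, running Algorithm~\ref{Algo_Duval} on input $\class L_i$ (a legal input, since $\class L_i \neq k-1$) returns $\class L_{i+1}$. Put $m := |\class L_i|$, so $m < n$ by hypothesis. The algorithm first forms $x := (\class L_i)^t u$ of length $n$, where $u$ is a proper prefix of $\class L_i$, possibly empty; since $m < n$ we have $t \geq 1$, and if $u = \varepsilon$ then $m \mid n$ with $m < n$, so in fact $t \geq 2$ and $n \geq 2m$. It then deletes the maximal suffix of $x$ of the form $(k-1)^l$ and increases the last remaining letter by $1$; the increment does not change the length, so $|\class L_{i+1}| = n - l$. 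Hence it suffices to show $l \leq n - m - 1$, which gives $|\class L_{i+1}| = n - l \geq m + 1 > m = |\class L_i|$.

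To bound $l$ I would use the standard fact that a Lyndon word begins with the least letter occurring in it. Since $\class L_i$ is aperiodic and $\class L_i \neq k-1$, it is not a power of $k-1$, so it contains a letter $\leq k-2$; consequently its first letter, call it $a$, satisfies $a \leq k-2$. Now $x$ contains $a$ at the first position of each of its $t$ copies of $\class L_i$, so its trailing run of $(k-1)$'s cannot reach such a position. If $u \neq \varepsilon$, the first letter of $u$ equals $a$ and sits at position $tm + 1 = n - |u| + 1$ of $x$, so $l \leq |u| - 1$; since $|u| = n - tm \leq n - m$ (as $t \geq 1$), this gives $l \leq n - m - 1$. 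If $u = \varepsilon$, the first letter of the last copy of $\class L_i$ in $x = (\class L_i)^t$ equals $a$ and sits at position $(t - 1)m + 1 = n - m + 1$, so $l \leq m - 1$; since $n \geq 2m$, again $l \leq n - m - 1$. In either case $l \leq n - m - 1$, as needed. (Incidentally $x \neq (k-1)^n$ because its first letter is $a \leq k-2$, so the deletion leaves a nonempty word and Algorithm~\ref{Algo_Duval} is well-defined on this input, consistent with Theorem~\ref{theoram_Duval_Li_Li}.)

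The argument is mostly bookkeeping about Duval's construction; the one point requiring care is making the estimate on the trailing run tight, since a crude bound like $l \leq n - 1$ is useless here. The case split $u = \varepsilon$ versus $u \neq \varepsilon$, together with the observation that $m \mid n$ and $m < n$ force $n \geq 2m$, is precisely what makes $l \leq n - m - 1$ come out.
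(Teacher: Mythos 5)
Your proof is correct and takes essentially the same route as the paper: both read the claim off the output of Duval's algorithm, the paper's one-line sketch asserting that $\class L_{i+1}=(\class L_i)^m z$ with $m>0$ and $z\neq\varepsilon$, which is exactly the content of your bound $l\leq n-|\class L_i|-1$ on the trailing run of $(k-1)$'s. You simply carry out explicitly (and correctly, including the $u=\varepsilon$ case forcing $n\geq 2|\class L_i|$) the bookkeeping the paper leaves implicit.
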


 \begin{corollary}
\label{corollary_Lyndon_length_above_half_n}
$\forall \mathcal{L}_i\neq k-1$, if $|\mathcal L_i| < n$, then $|\mathcal L_{i+1}|>\frac{n}{2}$.
\end{corollary}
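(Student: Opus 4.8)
The plan is to extract $\mathcal{L}_{i+1}$ directly from Duval's algorithm (Algorithm~\ref{Algo_Duval}) and bound how many trailing symbols it discards. Write $p=|\mathcal{L}_i|$ and let $x=(\mathcal{L}_i)^{t}u$ be the word produced in line~1 of Algorithm~\ref{Algo_Duval}, where $u$ is a proper prefix of $\mathcal{L}_i$, $|x|=n$, and $t=\lfloor n/p\rfloor\ge 1$. By lines~2--4, $\mathcal{L}_{i+1}$ is obtained from $x$ by deleting its maximal suffix of the form $(k-1)^{l}$ and then incrementing the last remaining symbol, so $|\mathcal{L}_{i+1}|=n-l$. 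Hence it suffices to prove $l<n/2$.

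The one structural observation needed is that the first symbol $a_1$ of $\mathcal{L}_i$ satisfies $a_1<k-1$: the first symbol of a Lyndon word is its minimum symbol, and since $\mathcal{L}_i\neq k-1$ it is not the constant word $(k-1)^{p}$ (which is periodic when $p\ge 2$, and equals $k-1$ when $p=1$), so not all of its symbols equal $k-1$. Because of this, the trailing run of $(k-1)$'s in $x$ cannot reach the leftmost symbol of the last ``block'' it touches, and I would then split according to line~1 of Algorithm~\ref{Algo_Duval}. If $u=\varepsilon$, then $p\mid n$ and $p<n$, so $p\le n/2$; moreover $x=(\mathcal{L}_i)^{t}$ ends with a complete copy of $\mathcal{L}_i$, and since $a_1<k-1$ the trailing $(k-1)$-run stays inside that copy, giving $l\le p-1\le n/2-1$. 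If $u\neq\varepsilon$, put $q=|u|$, so that $1\le q<p$ and $n=tp+q$; since $u$ begins with $a_1<k-1$, the trailing $(k-1)$-run of $x$ lies strictly inside $u$, whence $l\le q-1$, and $n=tp+q\ge p+q\ge (q+1)+q=2q+1$ forces $q<n/2$. In both cases $|\mathcal{L}_{i+1}|=n-l>n/2$.

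The only delicate point is the bookkeeping in the previous paragraph: one must argue carefully that $a_1<k-1$ really does confine the deleted $(k-1)$-suffix to within $u$ (respectively, within the last copy of $\mathcal{L}_i$), so that $l$ is bounded by $q-1$ (respectively $p-1$) rather than by something larger; the remaining estimates are just arithmetic with divisors of $n$. As a side remark, the same computation re-derives Corollary~\ref{corollary_L_i-L_i+1}, since $|\mathcal{L}_{i+1}|=n-l\ge n-q+1\ge p+1>p$ in the case $u\neq\varepsilon$, and $n-l\ge n-p+1>p$ when $p\le n/2$.
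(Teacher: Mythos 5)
Your proof is correct and takes essentially the same route as the paper's (sketched) argument: both read off $\mathcal{L}_{i+1}$ from Duval's algorithm applied to $(\mathcal{L}_i)^t u$ and observe that the deleted $(k-1)$-suffix cannot reach back past the start of the last block, the paper phrasing this as $\mathcal{L}_{i+1}=(\mathcal{L}_i)^m z$ with $m>0$, $z\neq\varepsilon$ and $|(\mathcal{L}_i)^m|\geq \frac{n}{2}$, while you equivalently bound the deleted suffix length by $l<\frac{n}{2}$. You simply carry out in full the case analysis ($u=\varepsilon$ versus $u\neq\varepsilon$) that the paper leaves as a two-line sketch.
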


In sketch, these observations are proved as follows: If $|\mathcal{L}_i|<n$, then $\mathcal{L}_{i+1}=(\mathcal{L}_i)^mz$  for some $m>0$ and $z\neq \varepsilon$. Clearly, Corollary~\ref{corollary_L_i-L_i+1} follows. Also, note that  that $|(\mathcal{L}_i)^m|\geq \frac{n}{2}$, which proofs Corollary~\ref{corollary_Lyndon_length_above_half_n}.

From these two observations we can deduce the following conclusion, discussing the similarity between the two algorithms when entering the {\bfseries while} loop:

\begin{corollary}
\label{Corollary_invariant}
The following invariant holds for both Algorithm~\ref{Algo_Naive-Next-Lyndon} and Algorithm~\ref{Algo_next_n_Lyndon} whenever the {\emph{\bfseries while}} loop starts:  $|x| > \frac{n}{2}$.

\end{corollary}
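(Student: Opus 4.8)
The plan is to prove the invariant by induction on the number of times control reaches the head of the \textbf{while} loop. Since the two algorithms agree line for line up to and including the assignment $\mathit{length}\gets|x|$ that immediately precedes the loop, I would handle the base case once for both, and only the inductive step separately.

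For the base case, I would observe that when the loop is first reached we have $\mathcal{L}_i\neq k-1$ (otherwise the algorithm already returned $0$), $x=\Duval(\mathcal{L}_i)$ is a Lyndon word of length at most $n$, and we did not return early, so $|x|\nmid n$; in particular $|x|\neq n$ and $|x|\neq 1$. I would then split on the test $|x|<\tfrac n2$: if it fails then $|x|\geq\tfrac n2$, and $|x|=\tfrac n2$ is impossible since it would make $\tfrac n2$ an integer dividing $n$, so $|x|>\tfrac n2$; if it succeeds, the algorithm re-assigns $x\gets\Duval(x)$, where the old $x$ is a Lyndon word with $|x|<\tfrac n2<n$ and $|x|\geq 2$ (hence $x\neq k-1$), so Corollary~\ref{corollary_Lyndon_length_above_half_n} gives $|\Duval(x)|>\tfrac n2$. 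Either way $|x|>\tfrac n2$ when $\mathit{length}$ is set, which settles the base case.

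For the inductive step, I would assume control is at the loop head with $|x|>\tfrac n2$ and (as both algorithms clearly maintain) $|x|=\mathit{length}$, and that the guard $\mathit{length}\nmid n$ holds so the body runs once more. For Algorithm~\ref{Algo_Naive-Next-Lyndon}, the body sets $x\gets\Duval(x)$, where $x$ is a Lyndon word (a $\Duval$-output, by Theorem~\ref{theoram_Duval_Li_Li}) with $\tfrac n2<|x|\leq n$ and $|x|\nmid n$, hence $|x|<n$ and, since $n>1$, $|x|\geq 2$, so $x\neq k-1$ and Corollary~\ref{corollary_Lyndon_length_above_half_n} again yields that the new $x$ has length $>\tfrac n2$. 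For Algorithm~\ref{Algo_next_n_Lyndon}, the body only appends $u'^{t}$ to $x$ and increases $\mathit{length}$ by the equal amount $|u'^{t}|\geq 0$, so $|x|=\mathit{length}$ is nondecreasing and stays above $\tfrac n2$. This closes the induction.

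I do not expect a genuine obstacle: once Corollary~\ref{corollary_Lyndon_length_above_half_n} is available, everything reduces to careful bookkeeping. The two places where I would be most careful are (i) upgrading $|x|\leq n$ to $|x|<n$ and excluding $|x|=1$ from $|x|\nmid n$ (together with $n>1$ to get $|x|\geq 2$), so that $\Duval$ and Corollary~\ref{corollary_Lyndon_length_above_half_n} are legitimately applicable, and (ii) the simple but crucial observation that the length of $x$ never decreases inside the loop of Algorithm~\ref{Algo_next_n_Lyndon}.
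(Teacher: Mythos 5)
Your proof is correct and follows essentially the same route as the paper, which states the invariant as an immediate consequence of Corollaries~\ref{corollary_L_i-L_i+1} and~\ref{corollary_Lyndon_length_above_half_n} without spelling out the details; your induction on visits to the loop head, using Corollary~\ref{corollary_Lyndon_length_above_half_n} for the {\bfseries Duval} calls and the observation that $x$ only grows inside the loop of Algorithm~\ref{Algo_next_n_Lyndon}, is exactly the intended bookkeeping, with the careful exclusions of $|x|=1$, $|x|=\frac{n}{2}$ and $|x|=n$ being a welcome addition.
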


The next lemma shows that every execution of the {\bfseries while} loop in Algorithm~\ref{Algo_next_n_Lyndon} corresponds to several executions of the {\bfseries while} loop in Algorithm~\ref{Algo_Naive-Next-Lyndon}.

\begin{lemma}
\label{Lemma:Duval_X_u-j_is_X_u-j+1}
Let $x$ be a Lyndon word such that $|x|>\frac{n}{2}$. Let $m, u, u', t$ be as in lines 13-16 of Algorithm~\ref{Algo_next_n_Lyndon}. Then, for $j\leq t$:
\begin{enumerate}
\item $xu'^j$ is a Lyndon word.
\item If $j<t$, then $\Duval(xu'^j)=xu'^{j+1}$.
\end{enumerate}
\end{lemma}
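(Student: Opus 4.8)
The plan is to prove both parts simultaneously by induction on $j$, leaning on Duval's characterization of how the next Lyndon word is formed (Theorem~\ref{theoram_Duval_Li_Li} and the sketch following Corollary~\ref{corollary_Lyndon_length_above_half_n}). The base case $j=0$ is trivial for part~1 ($x$ itself is a Lyndon word by hypothesis), so the content is in the inductive step and in part~2. First I would set up notation: since $|x|>\frac{n}{2}$ and $m=n-|x|$, we have $m<\frac{n}{2}<|x|$, so $u$, the $m$-prefix of $x$, is a proper prefix of $x$; and $u'$ is obtained from $u$ by stripping the maximal $(k-1)$-suffix and bumping the last remaining symbol. The key structural fact I would extract from Duval's algorithm is: for a Lyndon word $y$ with $\frac{n}{2}<|y|<n$, writing $m_y=n-|y|$ and letting $v$ be the $m_y$-prefix of $y$, $\Duval(y)=y\cdot v'$ where $v'$ is $v$ with its $(k-1)$-suffix removed and its last symbol incremented — because Duval pads $y$ to length $n$ as $y\cdot v$ (here the power $t$ in Algorithm~\ref{Algo_Duval} is exactly $1$ since $m_y<|y|$), then trims and increments the tail.

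Now the main computation: I would show that for $j<t$, the word $xu'^j$ has length $|x|+j|u'|<n$ (this needs $j<t=\lfloor m/|u'|\rfloor$, hence $j|u'|<m$, so $|x|+j|u'|<|x|+m=n$), and that its length still exceeds $\frac{n}{2}$ (immediate since $|xu'^j|\geq|x|>\frac{n}{2}$). Applying the structural fact to $y=xu'^j$: the relevant prefix $v$ has length $m-j|u'|$, and I need to identify $v'$ with $u'$. Here is the crux — I must argue that $u'$ is itself a prefix of $u$ (true once we know the incremented symbol doesn't carry, i.e. $u$ isn't all $(k-1)$'s, which holds because $u$ is a prefix of the Lyndon word $x$ and a Lyndon word of length $>1$ cannot start with $k-1$... and if $|u|=0$ we're already done since then $m=0$ and the loop wouldn't run). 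Given $u'$ is a prefix of $u$, and $u$ is a prefix of $x$, the $(m-j|u'|)$-prefix $v$ of $xu'^j$ is a prefix of $x$ for $j|u'|$ small, and more precisely $v = (\text{prefix of } u \text{ of length } m - j|u'|)$; stripping its $(k-1)$-tail and incrementing gives back exactly $u'$ provided $|u'|$ divides into the comparison cleanly — this is where I'd verify that $u'$ has no proper $(k-1)$-suffix issue and that $m-j|u'|\geq |u'|$ when $j<t$... wait, that last inequality can fail at $j=t-1$. So more carefully: $v$ is the $(m-j|u'|)$-prefix of $x$; since $u'$ is a prefix of $u$ which is a prefix of $x$, and $m - j|u'| > 0$, the word $v$ is a prefix of $u'$ when $m-j|u'|\leq|u'|$, or of $u$ generally. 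In either case, $v$ begins with the same symbols as $u'$; removing the $(k-1)$-suffix of $v$ and incrementing yields $u'$ exactly because $u'$ ends in a non-$(k-1)$ symbol and everything before that tail in $u$ agrees with $v$. Then $\Duval(xu'^j)=xu'^j v' = xu'^{j+1}$, giving part~2; and part~1 for $j+1$ follows because Duval's output is always a Lyndon word (Theorem~\ref{theoram_Duval_Li_Li}).

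The main obstacle I anticipate is the careful bookkeeping in identifying $v'$ with $u'$ at every step $j<t$, particularly handling the boundary where $m-j|u'|<|u'|$ (the last iteration), and rigorously justifying that $u'$ is a prefix of $u$ — this rests on $u$ not consisting solely of $(k-1)$'s, which in turn follows from $u$ being a prefix of the Lyndon word $x$ together with $|x|>1$. I would isolate that "$u$ is not a power of $k-1$, hence $u'$ is a prefix of $u$" as a small preliminary observation before running the induction. A secondary subtlety is confirming that in Duval's algorithm applied to $xu'^j$ the repetition exponent $t$ (in the line $x\gets y^t u$) is indeed $1$, which holds precisely because $\frac{n}{2}<|xu'^j|<n$ forces a single copy plus a proper prefix; I'd state this explicitly as it is what makes the whole correspondence go through.
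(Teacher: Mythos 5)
Your strategy coincides with the paper's: induct on $j$, use the fact that for a Lyndon word $y$ with $\frac{n}{2}<|y|<n$ Duval's algorithm appends the $(n-|y|)$-prefix of $y$ with its $(k-1)$-tail trimmed and its last symbol incremented, and then identify that appended word with $u'$. The one genuine problem is the point you flag yourself and then patch incorrectly. Write $u=w\sigma(k-1)^l$ with $\sigma<k-1$ (legitimate, as you note, since $x$ is a Lyndon word of length greater than $1$ and hence does not begin with $k-1$), so that $u'=w(\sigma+1)$. The identification goes through precisely when the relevant prefix length $\hat m=m-j|u'|$ of $xu'^j$ satisfies $|w\sigma|\leq\hat m\leq |u|$: then the $\hat m$-prefix of $xu'^j$ is $w\sigma(k-1)^{l'}$ for some $l'\leq l$, and trimming and incrementing yields exactly $w(\sigma+1)=u'$. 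Your worry that $\hat m\geq|u'|$ ``can fail at $j=t-1$'' is unfounded: $t=\lfloor m/|u'|\rfloor$ gives $t|u'|\leq m$, hence for every $j\leq t-1$ we have $\hat m=m-j|u'|\geq m-(t-1)|u'|\geq|u'|$.

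More importantly, the fallback you offer for the supposed boundary case is false: if $\hat m<|u'|$, the $\hat m$-prefix $v$ of $xu'^j$ is a \emph{proper} prefix of $w\sigma$, and trimming its $(k-1)$-tail and incrementing produces a word strictly shorter than $u'$ --- so ``in either case \dots yields $u'$ exactly'' does not hold, and had that case been reachable the lemma's conclusion would actually break there. The correct move is to prove the inequality, not to argue around it. (A minor imprecision: $u'$ is not literally a prefix of $u$, since its last symbol is incremented; what you need, and what the decomposition $u=w\sigma(k-1)^l$ gives you, is that $u'$ and $u$ agree on their first $|u'|-1$ symbols. Nothing else in your argument depends on the stronger phrasing.) With the inequality $\hat m\geq|u'|$ established, your induction closes exactly as in the paper.
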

\begin{proof}The proof is by induction on $j$. If $j=0$, item 1 holds as $x$ is a Lyndon word, and if $j>0$ item 1 
holds by applying the induction hypothesis on $j-1$, and then using item 2 of the lemma.

It remains to prove that item 2 holds thus suppose $j<t$. Write $u'=v(\sigma+1)$ for a word $v$ and a symbol $\sigma+1 < k$. Hence, the $m$-prefix $u$ of $x$ is $u=v\sigma (k-1)^l$ for some $l\geq 0$. 
Namely, $x=v\sigma(k-1)^lv'$ for some word, $v'$, such that $m=n-|x|=|u|=|v\sigma(k-1)^l|$ (note that since $|x|>\frac{n}{2}$, the $m$-prefix of $x$ is defined). 
Thus,  $xu'^j=v\sigma(k-1)^lv'(v(\sigma+1))^j$. 
Let $\hat{m}=n-|xu'^j|\leq m= |u| = |v\sigma(k-1)^l|$. Note that since $j<t$ it follows that $\hat{m}\geq |u'|=|v\sigma|$ (this holds because $t$ is the largest integer such that $|xu'^{t}|\leq n$). 

To summarize, $\hat{m}\leq m=|v\sigma(k-1)^l|$ and $\hat{m}\geq |v\sigma|$. Hence, the $\hat{m}$-prefix of $xu'^{j}=v\sigma (k-1)^l v'u'^{j}$ is $v\sigma(k-1)^{l'}$ for some $l'\leq l$.
Therefore, $\Duval(xu'^j)$ is obtained by concatenating $v\sigma(k-1)^{l'}$ to $xu'^j$, removing the suffix $(k-1)^{l'}$ and increasing $\sigma$ by one. Namely, $\Duval(xu'^j)=xu'^jv(\sigma+1)=xu'^{j+1}$, as required.
\end{proof}

Lemma~\ref{Lemma:Duval_X_u-j_is_X_u-j+1} states that each execution of the {\bfseries while} loop of Algorithm~\ref{Algo_next_n_Lyndon} corresponds to $t$ executions of the {\bfseries while} loop of Algorithm~\ref{Algo_Naive-Next-Lyndon}. Therefore, we conclude:
\begin{corollary}
\label{Corollary_next-Lyndon_is_Naive-Next-Lyndon}
$\forall \mathcal{L}_i$, the output of Algorithm~\ref{Algo_Naive-Next-Lyndon} over the input $\class L_i$, is equal to the output of Algorithm~\ref{Algo_next_n_Lyndon} over the input $\class L_i$.
\end{corollary}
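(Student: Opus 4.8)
The plan is to exploit the fact that Algorithms~\ref{Algo_Naive-Next-Lyndon} and~\ref{Algo_next_n_Lyndon} are literally identical from the top through the assignment $\mathit{length}\gets|x|$ that immediately precedes their respective {\bfseries while} loops. Hence, on a common input $\class L_i$, either both algorithms return the same value before reaching the loop, or both enter the loop with exactly the same word stored in $x$ and with $\mathit{length}=|x|$. At that moment $x$ is a Lyndon word (it is an output of $\Duval$), by Corollary~\ref{Corollary_invariant} we have $|x|>\frac n2$, and, since the loop is entered, $|x|\nmid n$, so in particular $|x|<n$ and $m=n-|x|\geq1$. It therefore suffices to prove that, started from a common Lyndon word $y$ with $\frac n2<|y|<n$ and $|y|\nmid n$, the two {\bfseries while} loops return the same word.

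This is proved by induction on the number of iterations performed by the loop of Algorithm~\ref{Algo_next_n_Lyndon}. The core claim is that one iteration of that loop, which turns $x=y$ into $x=yu'^{\,t}$ with $m,u,u',t$ computed from $y$ as in lines~13--16, is faithfully simulated by exactly $t$ iterations of the loop of Algorithm~\ref{Algo_Naive-Next-Lyndon} run from $y$. Indeed, Lemma~\ref{Lemma:Duval_X_u-j_is_X_u-j+1} applies since $|y|>\frac n2$, and it tells us that $yu'^{\,j}$ is a Lyndon word for every $j\leq t$ and that $\Duval(yu'^{\,j})=yu'^{\,j+1}$ for $j<t$; thus the naive loop, run from $y$, successively produces $yu',yu'^{\,2},\dots,yu'^{\,t}$. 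What must still be checked is that the naive loop does not halt prematurely, i.e.\ that $|yu'^{\,j}|\nmid n$ for $0<j<t$. This follows from elementary length estimates: on the one hand $|yu'^{\,j}|\geq|y|>\frac n2$, and on the other hand $|yu'^{\,t-1}|=|y|+(t-1)|u'|\leq|y|+m-|u'|=n-|u'|<n$ (using $t|u'|\leq m$, which holds because $t=\lfloor m/|u'|\rfloor$, and $|u'|\geq1$). So every intermediate length lies strictly between $\frac n2$ and $n$ and hence cannot divide $n$. Consequently, after those $t$ iterations the naive loop holds precisely the word $yu'^{\,t}$ that Algorithm~\ref{Algo_next_n_Lyndon} holds after one iteration, and $\mathit{length}=|yu'^{\,t}|$ in both executions; in particular both loops now test the same halting predicate ``$\mathit{length}\nmid n$'' on the same word. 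If it fails, both algorithms stop with that common word; otherwise both continue, and since $yu'^{\,t}$ is again a Lyndon word of length in $(\frac n2,n)$ not dividing $n$, the induction step applies again.

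To finish, observe that both loops terminate, since the stored length strictly increases at each step and never exceeds $n$. When the loop of Algorithm~\ref{Algo_next_n_Lyndon} halts it holds a word $x$ with $|x|\mid n$, and by the claim applied iteration by iteration, the loop of Algorithm~\ref{Algo_Naive-Next-Lyndon} holds the very same word $x$ at the corresponding moment and therefore also halts; hence both return $x$. Together with the first paragraph, this proves that the two algorithms produce the same output on every input $\class L_i$. The only delicate point is ruling out premature termination of the naive loop inside a single fast-loop iteration, and, as shown, this reduces to the simple observation that every Lyndon word produced along the way has length in the open interval $(\frac n2,n)$ until the final one, which already precludes its length from dividing $n$.
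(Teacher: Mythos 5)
Your proof is correct and follows essentially the same route as the paper, which deduces the corollary directly from Lemma~\ref{Lemma:Duval_X_u-j_is_X_u-j+1} by observing that each iteration of the {\bfseries while} loop of Algorithm~\ref{Algo_next_n_Lyndon} corresponds to $t$ iterations of the loop of Algorithm~\ref{Algo_Naive-Next-Lyndon}. The only difference is that you spell out a detail the paper leaves implicit, namely that the naive loop cannot halt at an intermediate word $xu'^{\,j}$ with $0<j<t$, because every such length lies strictly between $\frac{n}{2}$ and $n$ and hence cannot divide $n$.
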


It is left to prove that our runtime is linear. For this purpose, consider an execution of Algorithm~\ref{Algo_next_n_Lyndon} on input $\class L_i$, and assume that the execution reaches line 12 (otherwise, the algorithm terminates in line 2 after $O(1)$ steps, or in line 6 after $O(n)$ steps). We need to show that the loop terminates after $O(n)$ steps. This fact follows from the following observation. The loop terminates when $|x|=n$, and after each loop iteration the value $n-|x|$ decreases by at least half. 


To prove this fact in a precise manner, we introduce a few notations.
\begin{enumerate}
    \item $m_i,u_i,u'_i,t_i,$ and $x_i$ denote the values assigned to variables $m,u,u',t,$ and $x$, respectively, at the $i$-th iteration of the {\bfseries while} loop.
    
    \item $x_0$ is the value that variable $x$ stores before entering the {\bfseries while} loop for the first time.
    
    \item $r$ is the number of iteration of the {\bfseries while} loop .
\end{enumerate}

\begin{lemma}
\label{Lemma_m_i_log}

For $0<i\leq r$, we have $m_i=n-|x_{i-1}|\leq \frac{n}{2^i}$.
\end{lemma}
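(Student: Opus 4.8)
The plan is to prove the bound $m_i \le n/2^i$ by induction on $i$, using the key structural fact that in each iteration of the {\bfseries while} loop the word $u'$ derived from the $m$-prefix has length at most $m/2$... wait, actually the relevant relation is cleaner: after iteration $i$, the new gap $m_{i+1} = n - |x_i|$ satisfies $m_{i+1} < |u'_i| \le m_i / 1$, so I should think more carefully about which inequality drives the halving.

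Let me reconsider. The gap after iteration $i$ is $m_{i+1} = n - |x_i| = m_i - t_i|u'_i|$ where $t_i = \lfloor m_i/|u'_i|\rfloor$. So $m_{i+1} = m_i \bmod |u'_i| < |u'_i|$. Now $u'_i$ is obtained from $u_i$ (the $m_i$-prefix of $x_{i-1}$) by stripping a suffix of $(k-1)$'s and bumping the last symbol, so $|u'_i| \le |u_i| = m_i$. That alone only gives $m_{i+1} < m_i$, not the halving. The halving must come from elsewhere — namely from Corollary~\ref{Corollary_invariant}, which guarantees $|x_0| > n/2$, hence $m_1 = n - |x_0| < n/2$. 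But to continue I need $|x_i| > n/2$ for all $i < r$, which follows since $|x_i| \ge |x_{i-1}| > n/2$ (lengths are non-decreasing through the loop). So $m_{i+1} = n - |x_i| < n/2$ for every $i \ge 0$... that only gives $m_i < n/2$, not $n/2^i$. I need the sharper recursive bound.

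Here is the resolution I would use. The actual recursion is $m_{i+1} < |u'_i| \le |u_i| = m_i$, so I need to show $|u'_i| \le m_i/2$ would suffice but isn't quite true either; instead, observe $m_{i+1} = m_i \bmod |u'_i|$. If $|u'_i| \le m_i/2$, then $m_{i+1} < |u'_i| \le m_i/2$. If $|u'_i| > m_i/2$, then $t_i = \lfloor m_i/|u'_i| \rfloor = 1$, so $m_{i+1} = m_i - |u'_i| < m_i - m_i/2 = m_i/2$. Either way $m_{i+1} < m_i/2$, i.e. $m_{i+1} \le m_i/2$ (using integrality appropriately, or just $\le \lfloor m_i/2 \rfloor$). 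Combined with the base case $m_1 = n - |x_0| \le n/2$ from Corollary~\ref{Corollary_invariant}, induction gives $m_i \le n/2^i$ for all $0 < i \le r$.

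So the write-up proceeds: first state that $m_i = n - |x_{i-1}|$ directly from lines 13 and 17 of Algorithm~\ref{Algo_next_n_Lyndon}; then establish the base case $i=1$ from Corollary~\ref{Corollary_invariant} (which gives $|x_0| > n/2$, so $m_1 < n/2 \le n/2^1$); then for the inductive step do the case split on whether $|u'_i| \le m_i/2$ or not, in both cases concluding $m_{i+1} \le m_i/2$, and chain with the inductive hypothesis $m_i \le n/2^i$ to get $m_{i+1} \le n/2^{i+1}$. The main obstacle is getting the case split exactly right and being careful that $|u'_i| \ge 1$ (it is, since bumping the last symbol of the nonempty stripped prefix always leaves a nonempty word — and $u_i$ is nonempty because $m_i = n - |x_{i-1}| \ge 1$ inside the loop, as the loop condition is $\mathit{length} \nmid n$ hence $\mathit{length} < n$). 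I would also note explicitly that $t_i \ge 1$ so the argument that $|x_i|$ strictly increases, and hence the loop halts, comes for free — though that halting claim is really the point of the surrounding paragraph rather than of this lemma.

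\begin{proof}[Proof sketch of Lemma~\ref{Lemma_m_i_log}]
We argue by induction on $i$. Throughout, recall that inside the {\bfseries while} loop the condition $\mathit{length}\nmid n$ forces $\mathit{length}<n$, so the relevant $m$-prefixes are well defined and nonempty; moreover $u'$ is obtained from the nonempty word $u$ by stripping a (possibly empty) suffix of $(k-1)$'s and then incrementing the last remaining symbol, so $1\le |u'_i|\le |u_i|=m_i$, and consequently $t_i=\lfloor m_i/|u'_i|\rfloor\ge 1$.

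\emph{Base case $i=1$.} Before the loop is entered, Corollary~\ref{Corollary_invariant} gives $|x_0|>\frac n2$, hence $m_1=n-|x_0|<\frac n2\le\frac{n}{2^1}$.

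\emph{Inductive step.} Suppose $m_i\le\frac{n}{2^i}$ and $i<r$. By line 17, $|x_i|=|x_{i-1}|+t_i|u'_i|$, so
\[
m_{i+1}=n-|x_i|=m_i-t_i|u'_i|=m_i\bmod |u'_i|<|u'_i|.
\]
If $|u'_i|\le\frac{m_i}{2}$, then $m_{i+1}<|u'_i|\le\frac{m_i}{2}$. Otherwise $|u'_i|>\frac{m_i}{2}$, which together with $|u'_i|\le m_i$ yields $t_i=\lfloor m_i/|u'_i|\rfloor=1$, so $m_{i+1}=m_i-|u'_i|<m_i-\frac{m_i}{2}=\frac{m_i}{2}$. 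In either case $m_{i+1}\le\frac{m_i}{2}$, and combining with the inductive hypothesis gives $m_{i+1}\le\frac{n}{2^{i+1}}$, as required.
\end{proof}
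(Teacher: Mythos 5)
Your proof is correct and follows essentially the same route as the paper: induction on $i$, base case from the invariant $|x_0|>\frac n2$, and an inductive step showing the gap at least halves because $t_i$ is chosen as $\lfloor m_i/|u'_i|\rfloor$ with $|u'_i|\le m_i$. Your explicit two-case split is just a spelled-out justification of the paper's one-line claim that $|(u'_i)^{t_i}|\ge\frac{m_i}{2}$, so the arguments coincide.
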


\begin{proof}
By induction on $i$. 
The base case $i=1$ is trivial, as $|x_0|>\frac{n}{2}$. For the induction step, take $m_{i+1}=n-|x_i|$ and note that $x_i=x_{i-1}(u_i')^{t_i}$, where $|u_i'|\leq |u_i|=m_i$ and $t_i$ is the largest integer so that $|(u_i')^{t_i}|\leq m_i$. Therefore, $|(u_i')^{t_i}|\geq \frac{m_i}{2}$ and hence, by the induction hypothesis, we get:
$|m_{i+1}|=n-|x_i|=n-|x_{i-1}(u_i')^{t_i}|=n-|x_{i-1}|-|(u_i')^{t_i}|=m_i-|(u_i')^{t_i}|\leq m_i-\frac{m_i}{2}=\frac{m_i}{2}\leq \frac{n}{2^{i+1}}$.
\end{proof}

Relying on this lemma, we can now analyze the runtime of our algorithm and prove Proposition~\ref{prop_next_n_Lyndon}.

\begin{proof}[Proof of Proposition~\ref{prop_next_n_Lyndon}]
By Corollary~\ref{Corollary_next-Lyndon_is_Naive-Next-Lyndon}, the algorithm computes $\lnext(\class L_i)$ correctly. We shall prove that the algorithm runs in $O(n)$ time. If the algorithm returns in line 2 or in line 6, the execution terminates after $O(n)$ time, and we are done.  Otherwise,
 by Lemma~\ref{Lemma_m_i_log}, we have:
 \begin{enumerate}
     \item $r\leq \log(n)$.
     \item For $0<i\leq r$ it holds that $m_i\leq \frac{n}{2^i}$.
 \end{enumerate}

The latter explicitly appears in Lemma~\ref{Lemma_m_i_log}, and the former is argued as follows: Assume towards  a contradiction that $r>\log(n)$, and consider the $(\log(n)+1)$ iteration of the {\bfseries while} loop in line 12. By Lemma~\ref{Lemma_m_i_log}, $m_{\log (n) +1}\leq \frac{n}{2^{\log(n) +1}}$ and thus $m_{\log( n) +1}=0$. Hence, at the $(\log (n)+1)$ iteration, variable $\mathit{length}$ stores $n$ (see line 13). Thus, the  {\bfseries while} loop pre-condition is not satisfies  (see line 12), which contradicts the assumption that the $(\log(n)+1)$ iteration was performed.

Now, in each iteration of the {\bfseries while} loop, finding $u$ and $u'$ are the most time-consuming steps, each costs $O(m_i)\leq O(\frac{n}{2^i})$. Therefore, the global runtime is $O(n+m_1+m_2+\dots+m_{r}) \leq O(n+\frac{n}{2}+ \frac{n}{4}+ \dots +\frac{n}{2^{r}}) \leq O(n+\frac{n}{2}+ \frac{n}{4}+ \dots +\frac{n}{2^{log(n)}})= O(n)$.
\end{proof}

\section{A $\GSR$ Algorithm Based on a Reduction to $\ftg$}
\label{Sec:GSR-FTG}

The fact that $\Lnext$  can be computed efficiently is useful for designing an efficient $\GSR$ algorithm. Given an $n$-word, $w$, assume that $wx$ is a suffix of $L_1\cdots L_i$. In this case, several invocations of our $\Lnext$ algorithm produce the $c$-word that follows $w$ at the prefer-min sequence. For taking this approach, first, it is required to find a Lyndon word, $L_i$, and a word, $x$, such that $wx$ is a suffix of $L_1\cdots L_i$. This implies that a $\gsr$ algorithm for the prefer-min sequence can be derived from a solution to another problem we propose in this section: {\it Filling-The-Gap} ($\FTG$ for abbreviation).

\begin{definition}
\label{def:ftg(w)}
For an $n$-word $w$ we write $\ftg(w)=(L_i,x)$, if the following hold:
\begin{enumerate}
    \item $wx$ is a suffix of $L_1\dots L_i$.
    \item If $wy$ is a suffix of $L_1\dots L_j$ for some $j$, then $|x|\leq |y|$. 
\end{enumerate}

\end{definition}

We leave for the reader to verify that $\ftg(w)$ is well-defined, meaning that for every $n$-word, $w,$ only a single pair, $(L_i,x)$, satisfies the conditions of Definition~\ref{def:ftg(w)}. We remark that it is possible that $\ftg(w)=(L_i,x)$ where $i>N(n,k)$. This occurs in the case where $w$ is a concatenation of a suffix of the prefer-min sequence with a prefix of it. For example, if $w=(k-1)0^{n-1}$, then $\ftg(w)=(L_{N(n,k)+2},01)$ since $w01$ is a suffix of $L_1\cdots L_{N(n,k)}L_{N(n,k)+1}L_{N(n,k)+2}=L_1\cdots L_{N(n,k)}L_1L_2$.

Note that $\FTG (w)$ can be trivially computed by concatenating Lyndon words and searching for $w$. However, this naive solution is highly inefficient as $w$ may appear anywhere in the prefer-min sequence. Hence, for constructing an efficient $\GSR$ in the way described above, we need an efficient $\FTG$-algorithm.

There is also another issue concerning the suggested approach, which requires attention. If $\ftg(w)=(L_i,x)$, for computing the $c$-word that comes after $w$ we need to invoke Algorithm~\ref{Algo_next_n_Lyndon} several times. It is required to explain why the number of Lyndon words we concatenate is proportional to the suffix we seek for. More precisely, we need to show that the total number of invocations of Algorithm~\ref{Algo_next_n_Lyndon} consumes $O(n+c)$ time. This is settled by the next lemma, which  claims that there are no two consecutive words, $L_i,L_{i+1}$,  both of length smaller than $n$:

\begin{lemma}
\label{Lemma:L_i-L_i+1_sizes}
For $1\leq i<N(n,k)$, if $|L_i|<n$, then $|L_{i+1}|=n$. 
\end{lemma}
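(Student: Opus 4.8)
The statement concerns the lengths of two consecutive Lyndon words $L_i, L_{i+1}$ in the enumeration of Lyndon words whose length divides $n$, and asserts that they cannot both be short. The plan is to reason about which Lyndon words (over the full list $\mathcal L_1, \mathcal L_2, \dots$ of all Lyndon words of length $\le n$) lie strictly between $L_i$ and $L_{i+1}$, and to show that if $|L_i| < n$, the very next Lyndon word in the full list already has length dividing $n$ — in fact length exactly $n$ — so that $L_{i+1} = \mathcal L_{\mathrm{next}}$ with $|L_{i+1}| = n$.

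First I would invoke the two structural facts already recorded in the excerpt. Since $|L_i| < n$ (and $L_i \ne k-1$, because $k-1$ has length $1 \mid n$ but if $L_i = k-1$ then $L_i = L_{N(n,k)}$ and $i < N(n,k)$ is excluded — more carefully, $|L_i|<n$ forces $n > 1$ and we must handle the boundary, but the lemma restricts to $i < N(n,k)$ so $L_i \ne L_{N(n,k)}$, and if $k>2$ then $k-1$ is not the last Lyndon word anyway; in any case $L_i \ne k-1$ or we argue directly), Corollary~\ref{corollary_L_i-L_i+1} and Corollary~\ref{corollary_Lyndon_length_above_half_n} tell us that $\mathcal L_{i+1}$ (the Duval-successor of $L_i$ in the \emph{full} list) satisfies $|\mathcal L_{i+1}| > n/2$ and $|\mathcal L_{i+1}| > |L_i|$. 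The key observation is the explicit form from the proof sketch of those corollaries: $\mathcal L_{i+1} = (L_i)^m z$ with $m \ge 1$, $z \ne \varepsilon$, and $|(L_i)^m| \ge n/2$; more precisely $\mathcal L_{i+1}$ is obtained by taking the length-$n$ prefix of $(L_i)^\infty$, stripping a trailing block of $(k-1)$'s, and incrementing the last symbol.

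Now the crucial point: I claim the Duval-successor $\mathcal L_{i+1}$ either has length exactly $n$, or else \emph{every} Lyndon word $\mathcal L_j$ with $L_i < \mathcal L_j \le$ (the next length-$n$-dividing Lyndon word) that is $\ne L_{i+1}$ has length $> n/2$, hence length strictly between $n/2$ and $n$, hence length \emph{not} dividing $n$ (since the only divisor of $n$ in the open interval $(n/2, n)$ is none — divisors of $n$ that are $> n/2$ must equal $n$). Therefore none of these intermediate Duval words qualifies to be an $L$-word, so $L_{i+1}$ is reached only when Duval first outputs a word of length dividing $n$; and by the same interval argument, once a Duval word has length $> n/2$, it keeps length $> n/2$ (lengths are nondecreasing along Duval once above $n/2$ — this follows from Lemma~\ref{Lemma:Duval_X_u-j_is_X_u-j+1} applied repeatedly, where each step appends a nonempty word $u'^t$), so the first such word with length dividing $n$ has length exactly $n$. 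Hence $|L_{i+1}| = n$.

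The main obstacle I anticipate is making rigorous the claim that along the chain of Duval-successors starting from $L_i$, once the length exceeds $n/2$ it never drops back below (so that no "short" Lyndon word of length dividing $n$ can sneak in as $L_{i+1}$). This is essentially the content of Lemma~\ref{Lemma:Duval_X_u-j_is_X_u-j+1} and the analysis in Algorithm~\ref{Algo_next_n_Lyndon}: each application of Duval to a word of length $> n/2$ appends a nonempty suffix, strictly increasing the length while staying $\le n$. Combining this monotonicity with the number-theoretic fact that $(n/2, n)$ contains no divisor of $n$ closes the argument. A secondary nuisance is the degenerate case $L_i = k-1$, which I would dispatch separately by noting $|k-1| = 1 < n$ only when $n > 1$, and then either $k > 2$ (so $k-1$ is followed in the divides-$n$ list by a Lyndon word of length $n$, e.g. the word $0^{n-1}(k-1)$ is not Lyndon but one checks the successor directly) or $k = 2$ (where $L_i = 1 = L_{N(n,k)}$, contradicting $i < N(n,k)$); alternatively, simply observe $\Lnext(k-1) = 0$ is excluded since $i < N(n,k)$.
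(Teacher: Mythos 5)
Your proposal is correct and follows essentially the same route as the paper: both walk along the full Lyndon list from $L_i=\mathcal L_j$, use Corollary~\ref{corollary_Lyndon_length_above_half_n} to get length above $n/2$ after one Duval step, Corollary~\ref{corollary_L_i-L_i+1} for strictly increasing lengths thereafter, and the observation that the only divisor of $n$ exceeding $n/2$ is $n$ itself. (Your side remark that for $k>2$ the word $k-1$ has a successor in the divides-$n$ list is off — $k-1$ is the lexicographically last Lyndon word for every $k$, so $L_i=k-1$ is always excluded by $i<N(n,k)$ — but you reach that correct dispatch anyway.)
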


\begin{proof}
Write $L_i=\mathcal{L}_j$ and $L_{i+1}=\mathcal{L}_{j+t}$. That is, 
 $t$ is the smallest (positive) integer such that $|\mathcal{L}_{j+t}|$ divides $n$.
By Corollary~\ref{corollary_Lyndon_length_above_half_n}, $|\mathcal{L}_{j+1}|>\frac{n}{2}$.
By Corollary~\ref{corollary_L_i-L_i+1}, $|\mathcal{L}_j|<\cdots<|\mathcal{L}_{j+t}|$. Therefore,  $|L_{i+1}|=|\mathcal{L}_{j+t}|=n$.
\end{proof}

We can now present, in Algorithm~\ref{algo_generalized-shift-rule}, a $\GSR$ algorithm based on a reduction to the $\ftg$ problem.

\begin{algorithm}[ht]
\caption{{\sf generalized\_shift\_rule }}
\textit{Input:} ($w,c$), $|w|=n$\\
\textit{Output:}  a word $w'$ of length $c$ that appears after $w$ at prefer-min

\hrulefill
\begin{algorithmic}[1]

\State $(L,x)\gets \ftg(w)$

\State $\mathit{length}\gets |x|$

\While{$\mathit{length}<c$} 
\State $L\gets \Lnext(L)$

\State $x\gets xL$

\State $\mathit{length}\gets \mathit{length}+|L|$\EndWhile
\State return the $c$-prefix of $x$
\end{algorithmic}
\label{algo_generalized-shift-rule}
\end{algorithm}

Consider the {\bfseries while} loop in algorithm~\ref{algo_generalized-shift-rule} and use Lemma~\ref{Lemma:L_i-L_i+1_sizes} to conclude that after $O(1)$ loop iterations, which consume $O(n)$ time, $|x|$ increases by at least $n$ symbols. It follows that the loop halts in $O(n+c)$ steps and hence, we get the following:

\begin{proposition}
\label{prop:main-prop-of-sec4}
If $FTG(w)$ can be computed in
 $O(n)$ time, then Algorithm~\ref{algo_generalized-shift-rule} forms a $\GSR$ for the prefer-min $(n,k)$-DB sequence with $O(n+c)$ time complexity.
 \end{proposition}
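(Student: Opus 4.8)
\textbf{Proof plan for Proposition~\ref{prop:main-prop-of-sec4}.}

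The plan is to establish two things about Algorithm~\ref{algo_generalized-shift-rule}: that it returns the correct $c$-word following $w$ in the prefer-min sequence, and that it runs in $O(n+c)$ time under the hypothesis that $\ftg(w)$ is computable in $O(n)$ time. For correctness, I would start from the defining properties of $\ftg$ (Definition~\ref{def:ftg(w)}): writing $(L,x)=\ftg(w)$ with $L=L_i$, the word $wx$ is a suffix of $L_1\cdots L_i$. Since the sequence $L_1,L_2,\dots$ is viewed cyclically and (by Theorem~\ref{Theorem:L0L1PrefMin}, together with the cyclic extension) its concatenation is the prefer-min sequence read cyclically, the symbols immediately following $w$ in the sequence are exactly the symbols of $x$ followed by $L_{i+1}L_{i+2}\cdots$. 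Each pass through the \textbf{while} loop appends $\Lnext(L)$ to $x$, and by the specification of $\Lnext$ proved in Proposition~\ref{prop_next_n_Lyndon} (and the cyclic convention $\Lnext(L_{N(n,k)})=L_1$), the successive values of $L$ run through $L_{i+1},L_{i+2},\dots$ in order. Hence after the loop, $x$ is a prefix of the true continuation of $w$ of length $\ge c$, so its $c$-prefix is the desired output. I would phrase this as a short loop invariant: at the start of each iteration, $x$ equals the concatenation of the original $x$ with $L_{i+1}\cdots L_{i+s}$ for the current iteration count $s$, and $wx$ is a suffix of $L_1\cdots L_{i+s}$ in the cyclic sense.

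For the running time, line~1 costs $O(n)$ by hypothesis, and line~2 is $O(1)$ (or $O(|x|)$ if we count writing $x$, but $|x|\le n$ here, since $x$ completes $w$ to a suffix of some $L_1\cdots L_j$ and the minimality of $|x|$ forces $|x|<n$ — actually $|x|\le n$ suffices for the bound). The key point is to bound the number of loop iterations and the cost of each. Each iteration calls $\Lnext$, which costs $O(n)$ by Proposition~\ref{prop_next_n_Lyndon}, and appends a Lyndon word of length $|L|\le n$ to $x$, which costs $O(n)$; so each iteration is $O(n)$. It therefore remains to show the loop performs $O(1+c/n)$ iterations. Here I invoke Lemma~\ref{Lemma:L_i-L_i+1_sizes}: among any two consecutive words $L_i,L_{i+1}$ with $i<N(n,k)$, at least one has length exactly $n$; combined with the cyclic wraparound (where $L_{N(n,k)}=k-1$ has length $1$ but $L_{N(n,k)+1}=L_1=0$ also has length $1$, so one must check this boundary — in fact $L_{N(n,k)}$ and $L_1$ are the only length-$1$ words and they are adjacent only across the wrap, and for $n>1$ a short argument shows this does not break the amortized growth since between consecutive wraps the bulk of words have length $n$), every block of two consecutive loop iterations adds at least $n$ symbols to $\mathit{length}$. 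Thus after $2t$ iterations $\mathit{length}\ge |x_0| + tn \ge tn$, so the loop terminates once $tn\ge c$, i.e. after $O(c/n)+O(1)$ iterations, giving total loop cost $O(n\cdot(1+c/n))=O(n+c)$. Adding the $O(n)$ for line~1 and the $O(c)$ for extracting the $c$-prefix in line~7 yields the claimed $O(n+c)$ bound.

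The main obstacle I anticipate is the careful handling of the cyclic boundary in the amortized-growth argument: Lemma~\ref{Lemma:L_i-L_i+1_sizes} is stated only for $1\le i<N(n,k)$, so near the wrap from $L_{N(n,k)}$ to $L_{N(n,k)+1}=L_1$ one cannot directly conclude that one of two adjacent words has full length. However, since the sequence $L_1,\dots,L_{N(n,k)}$ contains at most one pair of adjacent short words per period, and each period has total length $k^n \ge n^2$ for the relevant parameter ranges (or more simply, each period contributes at least one length-$n$ word very early), the constant in the $O(1)$ term absorbs this, and the asymptotic bound is unaffected. I would make this rigorous by observing that over any $3$ consecutive indices (cyclically) at least one word has length $n$, which follows from Lemma~\ref{Lemma:L_i-L_i+1_sizes} applied within a period plus the explicit fact that $L_1=0$ and $L_2$ (the next Lyndon word whose length divides $n$) — for $n\ge 2$, $L_2$ has length $n$ when $k\ge 2$, as $0^{n-1}1$ is Lyndon of length $n$ dividing $n$ and no shorter Lyndon word exceeds $0$. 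This keeps the per-$n$-symbols iteration count bounded by a constant throughout, including across the wrap, completing the time analysis. The rest is routine bookkeeping.
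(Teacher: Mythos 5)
Your proposal is correct and follows essentially the same route as the paper, which likewise bounds the loop by invoking Lemma~\ref{Lemma:L_i-L_i+1_sizes} to conclude that $O(1)$ iterations, each costing $O(n)$ via Proposition~\ref{prop_next_n_Lyndon}, add at least $n$ symbols to $x$, giving $O(n+c)$ overall. You are in fact more careful than the paper's one-sentence argument: your explicit handling of the cyclic wrap at $L_{N(n,k)}, L_1, L_2$ (the one place where two consecutive short words genuinely occur) and your correctness invariant are both glossed over in the original.
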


\section{An $\FTG$ Algorithm}
\label{Sec:FTG}

In this section we construct an efficient $\ftg$-algorithm. This is done in two steps. First, we define the notion of a \emph{cover} of an $n$-word, $w$, and show how a cover for $w$ can be transformed into $\ftg(w)$ efficiently. Then, we show how to find a cover for an $n$-word, $w$, in linear time.

\subsection{Finding $\FTG(w)$ by Means of a Cover}
\label{Subsec:cover2ftg}

The $\FTG$ problem, applied on an $n$-word, $w$, is to extend $w$ into a suffix of $L_1\cdots L_i$. 
For solving this problem, we introduce a similar notion.

\begin{definition}
\label{def:cover(w)}
For an $n$-word, $w\neq (k-1)^p0^{n-p}$, $\cover(w)=(L_i,x)$ if the following hold: 
\begin{enumerate}
    \item $wx$ is a suffix of $L_1\dots L_{i-1}L_i^{r_i}$.
    \item If $wy$ is a suffix of $L_1\dots L_{j-1}L_j^{r_j}$ for some $j$, then $|x|\leq |y|$. 
\end{enumerate}
In addition, we say that $w$ is covered by $L_i$, if $\cover(w)=(L_i,x)$ for some word $x$. 
\end{definition}

 Also here, we leave for the reader to verify that $\cover(w)$ is well-defined. We focus on $n$-words different from $(k-1)^p0^{n-p}$ from technical reasons, as it allows us to provide a simpler presentation of our results. Otherwise, many parts in our analysis should be rephrased, and some proofs should be rewritten, to include more details. However, it is simple to show that the $\ftg$ algorithm we provide at the end of this section, works for every $n$-word.

The two notions, $\cover(w)$ and $\ftg (w)$ are closely related.
The difference between these notions can be bridged by observing that if $(L_i,x)$ is a cover for $w$, then $w$ is a subword of $L_1\cdots L_{i+1}$. To clear this issue, we deal with the  relationships between two consecutive Lyndon words in the following Lemma.

\begin{lemma}
\label{Lemma:L_i_to_L_i+1}
For all $i$ such that $1\leq i<N(n,k)$, if $|L_i|<n$, then $L_{i+1}=L_i^{r_i-1}z$, for some $|L_i|$-word, $z$.
\end{lemma}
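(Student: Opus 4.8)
The statement to prove is: for $1\le i<N(n,k)$, if $|L_i|<n$, then $L_{i+1}=L_i^{r_i-1}z$ for some $|L_i|$-word $z$. The key available tool is that $L_i=\mathcal L_j$ and $L_{i+1}=\mathcal L_{j+t}$, where $t$ is the smallest positive integer with $|\mathcal L_{j+t}|$ dividing $n$; moreover, by Lemma~\ref{Lemma:L_i-L_i+1_sizes}, $|L_{i+1}|=n$, so $|\mathcal L_{j+t}|=n=r_i\cdot|L_i|$. The plan is to track how Duval's algorithm transforms $\mathcal L_j$ step by step and show that the length-$|L_i|$ prefix is preserved until the word reaches length $n$.

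First I would recall the structure of Duval's step (Algorithm~\ref{Algo_Duval} / Theorem~\ref{theoram_Duval_Li_Li}): from a Lyndon word $\mathcal L$ of length $<n$, the next Lyndon word $\mathcal L'$ is obtained by writing $\mathcal L^m u$ with $u$ a proper prefix of $\mathcal L$ filling to length $n$, stripping a trailing block $(k-1)^l$, and incrementing the last symbol. Because the paper's sketch (proof of Corollaries~\ref{corollary_L_i-L_i+1} and~\ref{corollary_Lyndon_length_above_half_n}) already records that $\mathcal L_{j+1}=(\mathcal L_j)^m z'$ with $m>0$, $z'\ne\varepsilon$ whenever $|\mathcal L_j|<n$, the length-$|L_i|$ prefix of $\mathcal L_{j+1}$ equals $L_i$. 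The main idea is to iterate this: I claim that for every $s$ with $0\le s\le t$, the word $\mathcal L_{j+s}$ has $L_i$ as a prefix (indeed as a prefix of length exactly $|L_i|$ occurring at positions $1,\dots,|L_i|$). I would prove this by induction on $s$: the base case $s=0$ is immediate; for the inductive step, if $\mathcal L_{j+s}$ starts with $L_i$ and still has length $<n$ (which holds for $s<t$ since only $\mathcal L_{j+t}$ has length dividing $n$, and by Corollary~\ref{corollary_L_i-L_i+1} lengths strictly increase so no intermediate word can have length $n$), then by the same Duval-step analysis $\mathcal L_{j+s+1}=(\mathcal L_{j+s})^{m'}z''$ with $m'\ge 1$, hence $\mathcal L_{j+s+1}$ still starts with $\mathcal L_{j+s}$, hence with $L_i$.

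With that invariant in hand, apply it at $s=t$: $L_{i+1}=\mathcal L_{j+t}$ has length $n=r_i|L_i|$ and begins with $L_i$. It remains to upgrade "begins with $L_i$" to "begins with $L_i^{r_i-1}$", i.e.\ to show the first $(r_i-1)|L_i|$ symbols are $r_i-1$ copies of $L_i$. For this I would look one step back: $\mathcal L_{j+t-1}$ (if $t\ge 1$, which holds since $|L_i|<n$ forces $t\ge1$) has length $<n$ and, by the Duval-step description applied to it, $L_{i+1}=\mathcal L_{j+t}=(\mathcal L_{j+t-1})^{m}\,u$ where $u$ is a proper prefix of $\mathcal L_{j+t-1}$ up to the stripping/incrementing at the very end — so $L_{i+1}$ is $(\mathcal L_{j+t-1})^{m}$ followed by a shorter tail, meaning $L_{i+1}$ begins with $(\mathcal L_{j+t-1})^m$. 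Now I need $|\mathcal L_{j+t-1}|\ge (r_i-1)|L_i|$ together with $\mathcal L_{j+t-1}$ itself beginning with $L_i^{r_i-1}$, which by induction reduces to controlling the lengths: by Corollaries~\ref{corollary_L_i-L_i+1} and~\ref{corollary_Lyndon_length_above_half_n}, once a word in the chain exceeds $n/2$ it is within one Duval step of length $n$, and the lengths in the chain $|\mathcal L_j|<|\mathcal L_{j+1}|<\cdots<|\mathcal L_{j+t}|=n$ are each a multiple-plus-remainder of the previous; the cleanest route is to show directly that $\mathcal L_{j+t}$, viewed as built from $\mathcal L_j=L_i$ by repeated Duval steps, is $L_i^{a}$ followed by an aperiodic tail, and that since the whole word has length $rn$-compatible with $r_i|L_i|$, the maximal power $a$ of $L_i$ prefixing it is at least $r_i-1$ (it cannot be $\ge r_i$, else $L_{i+1}=L_i^{r_i}=\mathcal L_j^{r_i}$ would be periodic, contradicting that $L_{i+1}$ is Lyndon; and if $a\le r_i-2$ the tail after $L_i^a$ would have length $\ge 2|L_i|>n/2+\varepsilon$, contradicting the "one step from the end" structure forced by Corollary~\ref{corollary_Lyndon_length_above_half_n} applied to $\mathcal L_{j+t-1}$).

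\textbf{Expected main obstacle.} The inductive "prefix is preserved" part is routine given the paper's own sketch of the Duval step. The genuinely delicate step is pinning down the exponent: showing the $L_i$-power prefixing $L_{i+1}$ is exactly $r_i-1$ (not smaller). This needs a careful combination of (a) aperiodicity/Lyndon-ness of $L_{i+1}$ ruling out the exponent $r_i$, and (b) the length constraint from Corollary~\ref{corollary_Lyndon_length_above_half_n} — that $\mathcal L_{j+t-1}$ has length $>n/2$, hence $L_{i+1}=\mathcal L_{j+t}$ is $\mathcal L_{j+t-1}$ repeated at most twice plus a short tail — ruling out exponents $\le r_i-2$. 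I would want to state cleanly that $\mathcal L_{j+t-1}$ itself has $L_i^{r_i-1}$ as prefix (by a secondary induction, tracking that each Duval step multiplies the current "base" and that the base stabilizes as a power of $L_i$), and that the final step can only append a tail of length $n-|\mathcal L_{j+t-1}|<n/2\le|L_i|\cdot(r_i-1)$-bounded appropriately, so that the $(r_i-1)|L_i|$-prefix is untouched. Getting the inequalities between $|L_i|$, $|\mathcal L_{j+t-1}|$, $n/2$ and $n$ to line up is the part that requires the most care.
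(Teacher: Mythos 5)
Your primary induction (each $\mathcal L_{j+s}$ begins with $L_i$) is correct but establishes too weak an invariant: the entire content of the lemma is in upgrading ``begins with $L_i$'' to ``begins with $L_i^{r_i-1}$'', and that is exactly where your argument has a gap. The patch you propose for ruling out exponents $a\le r_i-2$ rests on the inequality $2|L_i|>\frac{n}{2}$, which is false whenever $r_i\ge 5$ (e.g.\ $n=10$, $|L_i|=2$ gives $2|L_i|=4<5=\frac n2$); and the bound you invoke at the last step, $n-|\mathcal L_{j+t-1}|<\frac n2$, only protects the first $\frac n2$ symbols, not the first $(r_i-1)|L_i|=n-|L_i|$ symbols that you actually need to protect. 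Your fallback --- a ``secondary induction'' showing $\mathcal L_{j+t-1}$ itself begins with $L_i^{r_i-1}$ --- is not spelled out, and when one tries to spell it out one is forced back to the very fact you never state.

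That missing fact is the key step of the paper's proof: because $|L_i|$ divides $n$, the padding in Duval's algorithm applied to $L_i$ is \emph{exactly} $L_i^{r_i}$ (the residual prefix $u$ is empty), so stripping the trailing $(k-1)^l$ and incrementing modifies only the last copy, giving $\mathcal L_{j+1}=L_i^{r_i-1}u_1$ with $u_1\neq\varepsilon$ already after \emph{one} step. From then on every word in the chain has length greater than $(r_i-1)|L_i|=n-|L_i|$, so each further Duval step pads with a word of length less than $|L_i|$, i.e.\ a proper prefix of $L_i$ (which starts with a symbol smaller than $k-1$, so the stripping stays inside the appended part); hence the prefix $L_i^{r_i-1}$ is never disturbed and $\mathcal L_{j+t}=L_i^{r_i-1}u_1\cdots u_t$ throughout. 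Combined with Lemma~\ref{Lemma:L_i-L_i+1_sizes} ($|L_{i+1}|=n$), this gives $|z|=|u_1\cdots u_m|=|L_i|$ immediately. I recommend replacing your ``upgrade'' step with this observation; the rest of your outline (tracking Duval steps, using Corollaries~\ref{corollary_L_i-L_i+1} and~\ref{corollary_Lyndon_length_above_half_n}) is consistent with the paper's route.
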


\begin{proof}
Let $L_i=\class L_j$ and $L_{i+1}=\class L_{j+m}$. By an examination of Algorithm~\ref{Algo_Duval}, and since $L_i$ does not start with $k-1$ (the only such word is $L_{N(n,k)}$, and $1\leq i<N(n,k)$), we conclude that $\class L_{j+1}=L_i^{r_i-1}u_1$, where $u_1$ is constructed by removing the suffix of $L_i$ that includes only occurrences of $k-1$, and increasing the last symbol of the obtained word by one. A simple inductive argument shows that  for $t\in\{1, \dots, m\}$ we have $\class L_{j+t}=L_i^{r_i-1}u_1\dots u_t$ for some non-empty words, $u_1,\dots,u_t$. By Lemma~\ref{Lemma:L_i-L_i+1_sizes}, $|L_{i+1}|=n$ thus $L_{i+1}=\class L_{j+m}=L_i^{r_i-1}u_1\dots u_m$, and $z=u_1\dots u_m$ is an $|L_i|$-word.  
\end{proof}

It follows that for $1\leq i<N(n,k)$ we have that $L_1\cdots L_{i-1}L_i^{r_i}$ is a prefix of prefer-min. This trivially holds when $r_i=1$, and if $r_i>1$, the  previous lemma  ensures that there exists some $z$ such that $L_1\cdots L_iL_i^{r_i-1}z = L_1\cdots L_{i+1}$. Now we turn to deal with the relationships between $\cover(w)$ and $\FTG(w)$.

\begin{lemma}
\label{Lemma:i-to-i+1}
Assume that $\cover(w)=(L_i,x)$ for an $n$-word, $w\neq (k-1)^p0^{n-p}$. 
\begin{enumerate}

    \item  
    When $|x|\geq |L_i^{r_i-1}|$, we have $\ftg(w)=(L_i,y)$ where $y$ is the $(|x|-|L_i^{r_i-1}|)$-prefix of $x$. 

    \item Otherwise, $\ftg(w)=(L_{i+1},xz)$, where $z$ is the $|L_i|$-suffix of $L_{i+1}$.
\end{enumerate}
\end{lemma}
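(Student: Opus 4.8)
The goal is to relate $\cover(w)$ and $\ftg(w)$. Recall that $\cover(w)=(L_i,x)$ means $wx$ is a suffix of $L_1\cdots L_{i-1}L_i^{r_i}$ with $|x|$ minimal among all such ways of extending $w$ into a suffix of a string of the form $L_1\cdots L_{j-1}L_j^{r_j}$, whereas $\ftg(w)=(L_i,y)$ means $wy$ is a suffix of $L_1\cdots L_i$ with $|y|$ minimal. The essential structural input is the observation, established just above via Lemma~\ref{Lemma:L_i_to_L_i+1}, that $L_1\cdots L_{i-1}L_i^{r_i}$ is a prefix of the prefer-min sequence, and in fact $L_1\cdots L_{i-1}L_i^{r_i}$ differs from $L_1\cdots L_{i+1}$ only in that the last $|L_i|$ symbols of $L_i^{r_i}$ (i.e.\ the last copy of $L_i$) are replaced by the $|L_i|$-suffix $z$ of $L_{i+1}$. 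So $L_1\cdots L_{i-1}L_i^{r_i}$ and $L_1\cdots L_i$ share the common prefix $L_1\cdots L_{i-1}L_i^{r_i-1}$ (of length $n\cdot(\text{something}) $; concretely this prefix has length $|L_1\cdots L_{i-1}| + |L_i^{r_i-1}| = |L_1\cdots L_i| - |L_i|$), and $L_1\cdots L_i$ is obtained from it by appending the last copy of $L_i$, while $L_1\cdots L_{i+1}$ is obtained from $L_1\cdots L_i$ by appending $z$.

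\emph{Case 1: $|x|\geq |L_i^{r_i-1}|$.} Here the occurrence of $w$ guaranteed by $\cover(w)=(L_i,x)$ starts at or before the end of the common prefix $L_1\cdots L_{i-1}L_i^{r_i-1}$; that is, $w$ lies entirely inside the part $L_1\cdots L_{i-1}L_i^{r_i-1}\cdot(\text{last copy of }L_i)=L_1\cdots L_i$ where the suffix $x$ hangs past the end by $|L_i^{r_i-1}|$ symbols beyond where $w$ would sit if it ended at position $|L_1\cdots L_i|$. More precisely: since $wx$ is a suffix of $L_1\cdots L_{i-1}L_i^{r_i}$ and $|x|\ge |L_i^{r_i-1}|$, the word $w$ followed by the prefix $y$ of $x$ of length $|x|-|L_i^{r_i-1}|$ ends exactly at position $|L_1\cdots L_{i-1}L_i^{r_i}|-|L_i^{r_i-1}| = |L_1\cdots L_i|$, so $wy$ is a suffix of $L_1\cdots L_i$. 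I would then argue minimality: any shorter $y'$ with $wy'$ a suffix of some $L_1\cdots L_j$ would yield, by appending the appropriate tail (at most $|L_i^{r_i-1}|$ more symbols — using that $L_1\cdots L_j$ extends to $L_1\cdots L_{j-1}L_j^{r_j}$ and beyond), a shorter $x'$ contradicting minimality of $x$; this needs a little care because the tail length added depends on $j$, so I would instead argue directly that since $\cover(w)$ witnesses the \emph{earliest} occurrence of $w$ (ending as early as possible), the same occurrence gives the earliest completion to a suffix of $L_1\cdots L_j$, giving exactly $y$. Hence $\ftg(w)=(L_i,y)$.

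\emph{Case 2: $|x|<|L_i^{r_i-1}|$.} Now the occurrence of $w$ ending $|x|$ symbols before the end of $L_1\cdots L_{i-1}L_i^{r_i}$ sits strictly inside the repeated block $L_i^{r_i}$ — it does not extend far enough to reach a true Lyndon-block boundary $L_1\cdots L_j$ with $j\le i$. The shortest way to complete $w$ to a suffix of a genuine $L_1\cdots L_j$ is then to go one block further: $wx$ is a suffix of $L_1\cdots L_{i-1}L_i^{r_i}$, and appending $z$ (the $|L_i|$-suffix of $L_{i+1}$) gives a suffix of $L_1\cdots L_{i-1}L_i^{r_i}z = L_1\cdots L_{i+1}$ by the structural fact above. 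So $wxz$ is a suffix of $L_1\cdots L_{i+1}$, giving the candidate $\ftg(w)=(L_{i+1},xz)$. For minimality I would argue that no $L_1\cdots L_j$ with $j\le i$ has $w$ as an appropriately-positioned subword ending within distance $<|xz|$ of its end: such a $j$ would give a completion of $w$ into a suffix of $L_1\cdots L_{j-1}L_j^{r_j}$ that is shorter than $x$ (since the block boundary of $L_1\cdots L_j$ is shorter than any position in the last copy of $L_i$ within $L_1\cdots L_{i-1}L_i^{r_i}$ that $w$ reaches), contradicting minimality of $x$ in $\cover(w)$; and $j=i+1$ with a shorter completion would force $w$ to reach inside $L_{i+1}$ past its $|L_i|$-suffix $z$, which again (comparing with the $L_i^{r_i}$ picture and the fact that $z$ equals the relevant tail) would contradict minimality of $|x|$.

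\textbf{Main obstacle.} The delicate point is the minimality/uniqueness bookkeeping: translating "$w$ occurs as early as possible in $L_1\cdots L_{i-1}L_i^{r_i}$'' into "$w$ completes as early as possible into a suffix of $L_1\cdots L_j$'', correctly tracking how the threshold $|L_i^{r_i-1}|$ arises and why $z$ is exactly the right padding. The clean way to handle this is to use the already-established identity $L_1\cdots L_{i-1}L_i^{r_i}z = L_1\cdots L_{i+1}$ together with the fact that $L_1\cdots L_{i-1}L_i^{r_i-1}$ is a common prefix of $L_1\cdots L_i$ and of $L_1\cdots L_{i-1}L_i^{r_i}$, reducing both parts of Definition~\ref{def:cover(w)} and Definition~\ref{def:ftg(w)} to statements about positions in one explicitly-described string; the inequalities on $|x|$ then become elementary position comparisons.
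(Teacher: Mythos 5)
Your overall route is the same as the paper's: item 1 by truncating $x$ to land on the boundary of $L_1\cdots L_i$, item 2 by appending the $|L_i|$-suffix $z$ of $L_{i+1}$ and using $L_1\cdots L_{i-1}L_i^{r_i}z=L_1\cdots L_{i+1}$, with minimality handled by position bookkeeping. Two concrete problems, though. First, the ``structural fact'' you set up in the preamble is wrong as stated: $L_1\cdots L_{i+1}$ is \emph{not} obtained from $L_1\cdots L_{i-1}L_i^{r_i}$ by replacing the last copy of $L_i$ with $z$ (that would give $L_1\cdots L_{i-1}L_{i+1}$, which is $|L_i|$ symbols too short), and your length computation $|L_1\cdots L_{i-1}|+|L_i^{r_i-1}|=|L_1\cdots L_i|-|L_i|$ fails whenever $r_i>1$. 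The relation you actually need, and do use correctly in Case 2, is that $L_{i+1}=L_i^{r_i-1}z$, so $L_1\cdots L_{i-1}L_i^{r_i}z=L_1\cdots L_iL_{i+1}$. Second, you never justify that this relation is available: it is Lemma~\ref{Lemma:L_i_to_L_i+1}, which requires $|L_i|<n$ and $i<N(n,k)$. In Case 2 you get $|L_i|<n$ for free, since $|x|<|L_i^{r_i-1}|$ forces $r_i>1$ (otherwise $|x|<0$), and $i<N(n,k)$ is exactly where the hypothesis $w\neq(k-1)^p0^{n-p}$ enters --- a hypothesis your argument never uses, which is the telltale sign of the gap. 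The paper's proof spends most of its Case 2 on precisely these two checks.

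On minimality: your instinct to distrust the naive ``append the tail $L_j^{r_j-1}$'' argument is right --- $|y'|<|y|$ does not directly give $|y'|+(n-|L_j|)<|y|+(n-|L_i|)$ when $|L_j|<|L_i|$. Your fallback is sound but misdescribed: $\cover(w)$ does not pick an ``earliest occurrence''; rather, $w$ occurs at a unique position (it is a De Bruijn sequence), and both $\cover$ and $\ftg$ ask for the least element of a fixed set of prefix-lengths ($\{|L_1\cdots L_{j-1}L_j^{r_j}|\}_j$, resp.\ $\{|L_1\cdots L_j|\}_j$) that is at least that position. Carried out this way the comparison does close the gap (a competitor $j<i$ for $\ftg$ yields a competitor for $\cover$ of length $|L_1\cdots L_{j-1}|+n<|L_1\cdots L_{i-1}|+n$), and is in fact more explicit than the paper's one-line remark that minimality of $x$ implies $w$ is not a subword of $L_1\cdots L_{i-1}$.
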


\begin{proof}
We start by proving the first item. As $\cover(w)=(L_i,x)$, $wx$ is a suffix of $L_1\cdots L_{i-1}L_i^{r_i}$. Since $|x|\geq |L_i^{r_i-1}|$, $wy$ is a suffix of $L_1\cdots L_i$. Moreover, the minimality of $x$ guarantees that $w$ is not a subword of $L_1\cdots L_{i-1}$ thus $\ftg(w)=(L_i,y)$ as required.

We turn to prove the second item, in which $wx$ is a suffix of $L_1\cdots L_{i-1}L_i^{r_i}$ and $|x|< |L_i^{r_i-1} |$. Thus, it follows that $|L_i|<n$, since otherwise $r_i=1$ and we get the false equation: $|x|<|\varepsilon|$.  Moreover, as  $w\neq (k-1)^p0^{n-p}$, it follows that $1 < i< N(n,k)$. Hence, Lemma~\ref{Lemma:L_i_to_L_i+1} can be invoked and we get that  $L_{i+1}=L_i^{r_i-1}z$ where $z$ is the $|L_i|$-suffix of $L_{i+1}$. As a result, $wxz$ is a suffix of $L_1\cdots L_{i-1}L_i L_i^{r_i-1}z=L_1\cdots L_i L_{i+1}$. Furthermore, since $wx$ is a suffix of $L_1\cdots L_{i-1}L_i^{r_i}$ and $|x|<|L_i^{r_i-1}|$, we conclude that $w$ is not a suffix of $L_1\cdots L_i$. Therefore, $\ftg(w)=(L_{i+1},xz)$ as required.
\end{proof}

 Using the above, Algorithm~\ref{algo:cover2ftg} transforms $\cover(w)$ into $\ftg(w)$ in linear time.

\begin{algorithm}[ht]
\caption{\ctf}
\label{cover-to-ftg}
\textit{Input:} a pair, $(L,x)=\cover(w)$ \\
\textit{Output:} $\ftg(w)$ 

\hrulefill

\begin{algorithmic}[l]

\State $r\gets$ $n/|L|$
\If  {$|x|\geq |L^{r-1}|$}
\State $y\gets (|x|-|L^{r-1}|)$-prefix of $x$
\State return $(L, y)$ \EndIf
\State $z\gets|L|$-suffix of  $\Lnext(L)$
\State return $(\Lnext(L),xz)$

\end{algorithmic}
\label{algo:cover2ftg}
\end{algorithm}

We conclude this subsection with the next corollary. Its first item follows by Lemma~\ref{Lemma:i-to-i+1}, and its second item follows from the code; for the input $(0,0^p)$, line 1 assigns $n$ to variable $r$, and the condition in line 2 does not hold. Then, $z$ is assigned with $1$, and the algorithm returns $(0^{n-1}1,0^p1)$, which is $\ftg(w)$.

\begin{corollary}
\label{cor:main-cor-of-5.1}
Let $w\neq(k-1)^n$ be an $n$-word.
\begin{enumerate}
    \item If $w\neq (k-1)^p0^{n-p}$ and $\cover(w)=(L,x)$, then Algorithm~\ref{algo:cover2ftg} returns $\ftg(w)$ on the input $(L,x)$.
    
    \item If $w=(k-1)^p0^{n-p}$, then Algorithm~\ref{algo:cover2ftg} returns $\ftg(w)$ on the input $(0,0^p)$.

\end{enumerate}

\label{Cor:cover2ftg}
\end{corollary}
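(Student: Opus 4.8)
The plan is to prove the two items of Corollary~\ref{cor:main-cor-of-5.1} separately, since they have genuinely different characters: the first is a direct consequence of the correctness analysis already carried out in Lemma~\ref{Lemma:i-to-i+1}, while the second is a hand-verification of the algorithm on the single exceptional family of inputs $w=(k-1)^p0^{n-p}$ (with $0 < p < n$, the case $p=n$ being excluded by hypothesis), for which $\cover(w)$ was deliberately left undefined.

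For the first item, I would argue as follows. Suppose $w \neq (k-1)^p0^{n-p}$ and $\cover(w) = (L,x)$, say $L = L_i$. Algorithm~\ref{algo:cover2ftg} first computes $r = n/|L| = r_i$ in line~1 and then branches on whether $|x| \geq |L^{r-1}|$. If the condition holds, the algorithm returns $(L,y)$ with $y$ the $(|x|-|L^{r-1}|)$-prefix of $x$; this is exactly the output prescribed by item~1 of Lemma~\ref{Lemma:i-to-i+1}, which asserts that in this case $\ftg(w) = (L_i,y)$ for precisely this $y$. If the condition fails, i.e. $|x| < |L^{r-1}|$, the algorithm returns $(\Lnext(L), xz)$ where $z$ is the $|L|$-suffix of $\Lnext(L)$. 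Here I would invoke the reasoning inside the proof of item~2 of Lemma~\ref{Lemma:i-to-i+1}: the inequality $|x| < |L_i^{r_i-1}|$ forces $|L_i| < n$ (otherwise $r_i = 1$ and the inequality reads $|x| < 0$), and since $w \neq (k-1)^p0^{n-p}$ we have $1 < i < N(n,k)$, so Lemma~\ref{Lemma:L_i_to_L_i+1} gives $L_{i+1} = L_i^{r_i-1}z$ with $z$ the $|L_i|$-suffix of $L_{i+1} = \Lnext(L_i)$; this is exactly the $z$ the algorithm computes, and Lemma~\ref{Lemma:i-to-i+1} concludes $\ftg(w) = (L_{i+1}, xz)$, matching the returned value. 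Thus in both branches the output equals $\ftg(w)$.

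For the second item, I would trace the execution directly on input $(0,0^p)$. Note $L = 0$ is a Lyndon word of length $1$, so line~1 sets $r = n/1 = n$. Then $|L^{r-1}| = |0^{n-1}| = n-1$, while $|x| = |0^p| = p \leq n-1$ (using $p < n$, as $w \neq (k-1)^n$ and $w$ has the form $(k-1)^p0^{n-p}$ with $0\le p<n$); in fact $p < n-1$ unless $p = n-1$, and in either case $|x| = p < n-1 = |L^{r-1}|$ fails the condition $|x| \geq |L^{r-1}|$ (when $p=n-1$ we have $|x|=n-1=|L^{r-1}|$, so I should double-check: actually the condition is $|x|\geq |L^{r-1}|$, which for $p=n-1$ would hold — so I would handle $p=n-1$, i.e. $w=(k-1)^{n-1}0$, as it may fall into the first branch; checking, $\Lnext(0)$ is the smallest Lyndon word $L_i > 0$ whose length divides $n$, which is $0^{n-1}1$, giving the right answer either way). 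Setting that subtlety aside, in the main regime the algorithm proceeds to line~5, computes $z = |L|\text{-suffix of }\Lnext(0) = 1$-suffix of $0^{n-1}1 = 1$, and returns $(\Lnext(0), xz) = (0^{n-1}1, 0^p1)$. Finally I would verify this really is $\ftg((k-1)^p0^{n-p})$: the word $(k-1)^p 0^{n-p} \cdot 1$ — i.e. $w$ followed by $1$ — is a suffix of $L_{N(n,k)} \cdot L_1 \cdots$, precisely $(k-1) \cdots$ (the last block) concatenated with $0 | 00\cdots01 | \cdots$, so $w1$ sits at the wrap-around; and $0^{n-1}1$ is exactly $L_{N(n,k)+2}$, matching the example already given in the text ($\ftg((k-1)0^{n-1}) = (L_{N(n,k)+2}, 01)$), so $x = 0^p1$ is minimal and the claim holds.

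The main obstacle I anticipate is not conceptual but bookkeeping in item~2: one must be careful about the boundary value $p = n-1$ (whether it enters the first or second branch of the algorithm) and about confirming that no shorter completion than $0^p1$ makes $w$ a suffix of some $L_1\cdots L_j$ — this minimality is what is needed to match Definition~\ref{def:ftg(w)}, and it rests on the observation that $(k-1)^p0^{n-p}$ simply does not occur as a subword within $L_1\cdots L_{N(n,k)}$ in a way that ends before the wrap-around point, which follows from the structure of the block decomposition (the only place a long run of $0$'s immediately preceded by $k-1$'s can occur is at the seam between $L_{N(n,k)}=(k-1)$ and $L_1=0$ and the blocks right after). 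Item~1, by contrast, is essentially immediate given Lemmas~\ref{Lemma:i-to-i+1}, \ref{Lemma:L_i_to_L_i+1}, and \ref{Lemma:L_i-L_i+1_sizes}.
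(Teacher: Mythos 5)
Your proposal matches the paper's own proof: item~1 is read off from Lemma~\ref{Lemma:i-to-i+1} (with Lemma~\ref{Lemma:L_i_to_L_i+1} supplying $z$) exactly as the paper does, and item~2 is the same direct trace of Algorithm~\ref{algo:cover2ftg} on the input $(0,0^p)$. Your extra care at $p=n-1$ is in fact warranted (there the test in line~2 \emph{does} hold, a case the paper's sketch glosses over), but the clean resolution is simply that the first branch then returns $(0,\varepsilon)$, which equals $\ftg((k-1)^{n-1}0)$ because $(k-1)^{n-1}0$ is already a suffix of $L_1\cdots L_{N(n,k)}L_1$ — your appeal to $\Lnext(0)$ there is beside the point, since $\Lnext$ is not invoked on that branch.
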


\subsection{Computing $\cover(w)$}
\label{Subsec:finding-a-Cover}

In this section we show how to compute $\cover(w)$, efficiently. Assume that an $n$-word, $w\neq (k-1)^p0^{n-p}$, is covered by $L_{i+1}$. Thus, $w$ is a subword of $L_iL_{i+1}^{r_{i+1}}$. To compute $\cover(w)$ (in Algorithm~\ref{Algo:filling_the_gap}), in some cases, we compute $L_i$ and use it to find $\cover(w)$, and in other cases we compute directly the suffix of $L_iL_{i+1}^{r_{i+1}}$ that follows $w$. The way this goal is achieved relies on the analysis we provide here, which we divide into two parts. First, we show how to construct $L_{i+1}^{r_{i+1}}$ from $L_i^{r_i}$, by concatenating certain words to $L_i^{r_i-1}$. Then, we present a structural characterization of $w$ which will serve us to compute  $\cover(w)$.

\subsubsection{Modifying $L_{i}^{r_{i}}$ into $L_{i+1}^{r_{i+1}}$ }

Assume that an $n$-word, $w\neq (k-1)^p0^{n-p}$, is covered by $L_{i+1}$. Hence, $w$ is a subword of $L_1\cdots L_i L_{i+1}^{r_{i+1}}$, but not of $L_1\cdots L_{i-1}L_i^{r_i}$.
Clearly, Lemma~\ref{Lemma:L_i_to_L_i+1} implies that  $L_1 \cdots L_{i-1}L_i^{r_i}$ is a prefix of $L_1\cdots L_i L_{i+1}^{r_{i+1}}$, but what is the difference between these two sequences? The first goal of our analysis is to show how to construct $L_{i+1}^{r_{i+1}}$ from $L_{i}^{r_{i}}$, by concatenating a suffix to $L_{i}^{r_{i}}$.

\begin{definition}
\label{Def_vi}
Let $L_i\neq k-1$. We define a sequence of words: $v_{i,1},\dots, v_{i,m_i}$ and a sequence of indices: $k_1,\dots,k_{m_i+1}$ by induction, where $k_i$ indicates the amount of characters left to calculate in step $i$:
Write $k_1=|L_i|$ and assume that $v_{i,1}\dots v_{i,j-1}$ were defined, together with  $k_1,\dots,k_j$.
\begin{itemize}
    \item If $k_j=0$, then $j-1=m_i$ and we are done.
    \item Otherwise, $v_{i,j}$ is obtained as follows: take the prefix of $L_i$ of size $k_j$, remove its suffix that includes only occurrences of $k-1$, and increase the last symbol by one. In addition, let $k_{j+1}=k_j-|v_{i,j}|=|L_i|-|\vv i1|-\cdots -|\vv ij |$.
\end{itemize}
\end{definition}

As an illustration of this definition, we give the following example:
\begin{example}
 Let $n=7$, $k=2$ and $L_i=0010111$. Then,
\begin{itemize}
    \item $k_1=7$ and $\vv  i 1=0011$.
    
    \item $k_2=3$ and $\vv i 2=01$.
    
    \item $k_3=1$ and $\vv i 3 =1$. Moreover, $k_4=0$ thus $m_i=3$ and the process is completed. 
    
\end{itemize}
Also, the reader may check that $L_{i+1}=\vv i 1\vv i 2 \vv i 3= 0011011$, as  Corollary~\ref{cor:main-cor-of-5.2.1} states.  
\end{example}

We show now how the words $\vv i1,\dots,\vv i {m_i}$ form as building blocks for constructing $L_{i+1}^{r_{i+1}}$ from $L_i^{r_i}$. We divide the analysis into three Lemmas, to deal with the different cases.

\begin{lemma}
\label{Lemma_vi_case1}
Take $L_i\neq k-1$, and consider the words $\vv i1,\dots,\vv i {m_i}$, as defined in Definition~\ref{Def_vi}. If $r_i>1$, then: 
\begin{enumerate}
\item $\Duval(L_i)=L_i^{r_i-1}\vv i 1$.
\item For $1\leq j< m_i$, it holds that $\Duval(L_i^{r_i-1}\vv i 1 \cdots \vv i j )=L_i^{r_i-1}\vv i 1 \cdots \vv i j \vv i {j+1}$. 
\end{enumerate}
\end{lemma}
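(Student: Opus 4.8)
\textbf{Proof plan for Lemma~\ref{Lemma_vi_case1}.}
The plan is to unwind Duval's algorithm (Algorithm~\ref{Algo_Duval}) directly on each of the required inputs, checking that the prefix it pads with, the suffix of $(k-1)$'s it deletes, and the increment it performs, all combine to append exactly the next block $\vv i {j+1}$. The key observation that makes this work is that all the words involved, namely $L_i^{r_i-1}\vv i 1\cdots\vv i j$, have length strictly between $\frac n2$ and $n$ (in fact length $n-k_{j+1}$ with $0<k_{j+1}\le|L_i|\le\frac n2$ when $r_i>1$), so when Duval's algorithm forms $x\gets (\mathcal L)^t u$ with $|x|=n$, the padding $u$ is a proper \emph{prefix of $L_i$} of length exactly $k_{j+1}$. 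This is precisely the word from which $\vv i {j+1}$ is built in Definition~\ref{Def_vi}.

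First I would prove item~1. Since $r_i>1$ we have $|L_i|\le\frac n2$, so $\Duval(L_i)$ pads $L_i$ to length $n$ by writing $L_i^{r_i-1}u$ where $u$ is the prefix of $L_i$ of length $k_1=|L_i|$ — but that is all of $L_i$, so in fact the padded word is $L_i^{r_i}$; more carefully, the padded word is $L_i^{r_i-1}$ followed by the prefix of $L_i$ of length $k_1$, and since by Definition~\ref{Def_vi} $\vv i 1$ is obtained from that prefix by deleting the trailing $(k-1)$'s and incrementing the last remaining symbol, and since deleting the trailing $(k-1)$-block of $x=L_i^{r_i-1}(\text{prefix of length }k_1)$ and incrementing is exactly what lines 2--3 of Algorithm~\ref{Algo_Duval} do, we get $\Duval(L_i)=L_i^{r_i-1}\vv i 1$. (Here I would note $L_i\neq k-1$ guarantees $L_i$ does not consist solely of $(k-1)$'s, so the increment is legal.) For item~2, fix $1\le j<m_i$ and set $y=L_i^{r_i-1}\vv i 1\cdots\vv i j$. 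By Definition~\ref{Def_vi}, $|\vv i 1|+\cdots+|\vv i j|=|L_i|-k_{j+1}$, so $|y|=n-k_{j+1}$ with $0<k_{j+1}$, hence $|y|<n$; also $|y|\ge|L_i^{r_i-1}|\ge|L_i|>\frac n2$ since $r_i>1$. Therefore Duval's algorithm pads $y$ to length $n$ by appending the prefix of $y$ of length $k_{j+1}$; because $|y|>\frac n2\ge k_{j+1}$ and $y$ begins with $L_i$, that prefix is just the prefix of $L_i$ of length $k_{j+1}$. Deleting its trailing $(k-1)$-block and incrementing yields $\vv i {j+1}$ by Definition~\ref{Def_vi}, so $\Duval(y)=y\,\vv i {j+1}$, which is item~2.

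The one point requiring care — and the main obstacle — is checking that the suffix of $(k-1)$'s removed by Duval's algorithm lies \emph{entirely inside} the freshly appended prefix-of-$L_i$ block, rather than eating into $y$ itself. This is where I would use that $L_i$ is a Lyndon word and $j<m_i$: because $j<m_i$ we have $k_{j+1}>0$ and moreover $\vv i {j+1}$ is non-empty, which forces the prefix of $L_i$ of length $k_{j+1}$ not to be all $(k-1)$'s; since $y$ ends with $\vv i j$ whose last symbol was obtained by incrementing (so it is not $k-1$ unless that increment overflowed, which cannot happen), the $(k-1)$-block straddling the boundary cannot extend past the last symbol of $y$. Spelling this boundary argument out cleanly is the only genuinely delicate step; everything else is a direct transcription of Definition~\ref{Def_vi} against the three lines of Algorithm~\ref{Algo_Duval}.
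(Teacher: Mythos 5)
Your proposal follows essentially the same route as the paper's proof: unwind Algorithm~\ref{Algo_Duval} on $y=L_i^{r_i-1}\vv i 1\cdots \vv i j$, observe that $|y|=n-k_{j+1}$ with $0<k_{j+1}\le|L_i|\le\frac n2$, so the padding appended by Duval's algorithm is the $k_{j+1}$-prefix of $L_i$, and then the deletion of the trailing $(k-1)$'s plus the increment reproduces exactly the construction of $\vv i {j+1}$ in Definition~\ref{Def_vi}; the paper's proof is a terser version of this same computation (it writes the $k_{j+1}$-prefix of $L_i$ as $v\sigma(k-1)^l$ with $\sigma<k-1$ and reads off $\Duval(y)=y\,v(\sigma+1)=y\,\vv i {j+1}$). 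One correction to your boundary argument, which you rightly single out as the delicate point: your claim that the last symbol of $\vv i j$ cannot be $k-1$ because it arose from an increment is false --- incrementing $k-2$ legitimately yields $k-1$ (e.g.\ $L_i=0(k-2)$ gives $\vv i 1=0(k-1)$). Fortunately that claim is not needed: since $L_i\neq k-1$ is a Lyndon word, its first symbol is strictly smaller than $k-1$, so the appended $k_{j+1}$-prefix of $L_i$ (with $k_{j+1}\geq 1$) begins with a symbol different from $k-1$; hence this prefix is not all $(k-1)$'s and the maximal $(k-1)$-suffix removed by Duval's algorithm lies entirely inside the appended block, never reaching into $y$. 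This also replaces your justification via the non-emptiness of $\vv i {j+1}$, which as written is circular (it presupposes that Definition~\ref{Def_vi} is well formed rather than proving it). With that repair your argument is complete and coincides with the paper's.
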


\begin{proof}
The first item follows immediately from the definition of $\vv i 1 $. For proving the second item, note that $k_{j+1}=|L_i|-|\vv i 1 \cdots \vv ij|$, and since $|L_i^{r_i}|=n$, we have $k_{j+1}=n-|L_i^{r_i-1}\vv i1 \cdots \vv i j|$. Let $v\sigma(k-1)^l$ be the $k_{j+1}$-prefix of $L_i$, where $\sigma<k-1$. Hence, $\Duval(L_i^{r_i-1}\vv i1 \cdots \vv i j)=L_i^{r_i-1}\vv i1 \cdots \vv i j v(\sigma+1)$. Also, by Definition~\ref{Def_vi}, $\vv i {j+1}=v(\sigma+1)$, which completes the proof. 
\end{proof}

\begin{lemma}
\label{Lemma_vi_case2}
Take  $L_i\neq k-1$, and consider the words $\vv i1,\dots,\vv i {m_i}$, as defined in Definition~\ref{Def_vi}. If $r_i=1$ and $\vv i 1 \mid n$, then: 
\begin{enumerate}
\item $\Duval(L_i)=\vv i 1=L_{i+1}$.
\item $\vv i 1 = \cdots= \vv i {m_i}$ and $m_i=r_{i+1}$.  
\end{enumerate}
\end{lemma}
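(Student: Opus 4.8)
The statement has two parts, under the hypothesis $r_i=1$ (so $|L_i|=n$) and $|\vv i1|\mid n$. First I would establish item~1. Since $r_i=1$, Definition~\ref{Def_vi} starts with $k_1=|L_i|=n$, and $\vv i1$ is exactly the word obtained from $L_i$ by removing its maximal $(k-1)^l$-suffix and increasing the last remaining symbol by one; this is precisely the word returned by $\Duval(L_i)$ when the input already has length $n$ (so the padding step $x\gets (L_i)^tu$ in Algorithm~\ref{Algo_Duval} is trivial with $t=1$, $u=\varepsilon$). Hence $\Duval(L_i)=\vv i1$. Combined with Theorem~\ref{theoram_Duval_Li_Li}, $\Duval(L_i)=\class L_{i+1}$ (in the $\class L$-enumeration), and since we are assuming $|\vv i1|\mid n$, this word's length divides $n$, so $\class L_{i+1}$ is in fact the next word $L_{i+1}$ of the restricted sequence. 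This gives $\Duval(L_i)=\vv i1=L_{i+1}$.

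For item~2, the key observation is that when $r_i=1$ the recursion in Definition~\ref{Def_vi} is ``self-similar'': after producing $\vv i1$ of length $d:=|\vv i1|$, we have $k_2 = n-d$, and the $k_2$-prefix of $L_i$ is the $(n-d)$-prefix of $L_i$. I would argue that since $L_{i+1}=\vv i1$ has length $d\mid n$ and $L_{i+1}$ is a Lyndon word, and since $L_i$ and $L_{i+1}$ agree on a long prefix (they differ only in the tail that $\Duval$ modified), the $(n-d)$-prefix of $L_i$ equals $(\vv i1)^{(n-d)/d}$ — i.e. $L_i$ ``looks like'' repetitions of $\vv i1$ except on its last $d$ symbols. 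Concretely, write $L_i = \vv i1^{\,n/d - 1}\,s$ where $s$ is the length-$d$ suffix of $L_i$; the fact that $\Duval$ only touched the suffix $s$ (removing a $(k-1)^l$ block inside it and bumping one symbol) shows the prefix part is genuinely a power of $\vv i1$. Then each subsequent step of Definition~\ref{Def_vi} takes a prefix of $L_i$ of length a multiple of $d$ less than $n$, which is a power of $\vv i1$, whose last symbol is already maximal within $\vv i1$... here I must be careful: $\vv i1$ itself may end in $k-1$. The cleaner route is: the $k_j$-prefix of $L_i$ for $k_j = n - (j-1)d$ is $(\vv i1)^{\,k_j/d}$, and applying the ``remove $(k-1)^l$, increment last symbol'' operation to $(\vv i1)^{t}$ returns $(\vv i1)^{t-1}\vv i1 = (\vv i1)^t$ only if the last copy of $\vv i1$ ends in a symbol $<k-1$ after stripping — but in fact the operation applied to $(\vv i1)^t$ yields $(\vv i1)^{t-1}\cdot(\text{that same operation applied to } \vv i1)$, and since $\vv i1 = \Duval(L_i)$ does \emph{not} end in $k-1$ (Duval's algorithm increments the last symbol, and the result is a valid Lyndon word that cannot end in $k-1$ unless it is $k-1$ itself, excluded since $L_i\neq k-1$), applying the operation to $\vv i1$ just returns $\vv i1$. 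Hence $\vv i{j+1}=\vv i1$ for every $j$, and the process runs until $k_j$ hits $0$, which happens after exactly $n/d$ steps; so $m_i = n/d$. Finally $r_{i+1}=n/|L_{i+1}|=n/d=m_i$, proving item~2.

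\textbf{Main obstacle.}
The delicate point is the claim that $\vv i1$ (equivalently $\Duval(L_i)$) does not end in the symbol $k-1$ — this is what makes the ``remove $(k-1)^l$, increment'' operation act as the identity on $\vv i1$ and on its powers, forcing $\vv i{j+1}=\vv i1$ for all $j$. I expect this to follow from a close reading of Algorithm~\ref{Algo_Duval}: the last symbol of the output is obtained by incrementing some $\sigma<k-1$, so it is $\sigma+1\le k-1$; one needs that it is strictly less than $k-1$, or else that the output is exactly $k-1$ (which is excluded). This may require invoking that the output is a Lyndon word (Theorem~\ref{theoram_Duval_Li_Li} together with Corollary~\ref{corollary_L_i-L_i+1}): a Lyndon word of length $>1$ cannot end in its largest symbol, and if its length is $1$ and it ends in $k-1$ then it equals $k-1$, contradicting $L_i\ne k-1$ via $\Duval(L_i)>L_i$. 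I would also double-check the boundary case where stripping the $(k-1)^l$-suffix of the $k_j$-prefix of $L_i$ could cross a copy-boundary of $\vv i1$; the self-similarity argument above should rule this out, but it is the place where the write-up needs care.
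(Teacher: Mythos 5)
Your item~1 is fine and matches the paper. The problem is in item~2, where your argument rests on two claims that are both false. First, the ``self-similarity'' claim that $L_i=(\vv{i}{1})^{n/d-1}s$ (equivalently, that the $(n-d)$-prefix of $L_i$ is a power of $\vv{i}{1}$) does not hold: take $n=6$, $k=2$, $L_i=000111$, which satisfies $r_i=1$; then $\vv{i}{1}=\Duval(L_i)=001$, of length $3\mid 6$, yet $L_i$ does not even begin with $001$ — its $3$-prefix is $000$. Second, the claim you lean on to handle your own flagged obstacle — that $\Duval(L_i)$ cannot end in $k-1$ because ``a Lyndon word of length $>1$ cannot end in its largest symbol'' — is wrong: a Lyndon word must \emph{start} with a symbol strictly smaller than its last symbol, so ending in $k-1$ is perfectly possible ($01$, $001$, $0011$ are all Lyndon over $\{0,1\}$), and indeed in the same example $\vv{i}{1}=001$ ends in $k-1$. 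Moreover, even granting both claims, the step ``applying the strip-and-increment operation to $\vv{i}{1}$ just returns $\vv{i}{1}$'' cannot be right, since that operation always increments a symbol and therefore never acts as the identity.

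The paper's proof of item~2 avoids any structural claim about $L_i$ being built from copies of $\vv{i}{1}$. Write $L_i=v\sigma(k-1)^l$ with $\sigma<k-1$, so $\vv{i}{1}=v(\sigma+1)$. Because $|\vv{i}{1}|$ divides $n$ and $k_j=n-(j-1)|\vv{i}{1}|>0$ for $j\le m_i$, each $k_j$ is a positive multiple of $|\vv{i}{1}|$, hence $k_j\ge|v\sigma|$; consequently the $k_j$-prefix of $L_i$ is $v\sigma(k-1)^{l'}$ for some $l'\le l$, and stripping the $(k-1)^{l'}$ block and incrementing always lands on the same word $v(\sigma+1)=\vv{i}{1}$. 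That is why all the $\vv{i}{j}$ coincide: every relevant prefix of $L_i$ ends inside the trailing $(k-1)$-block (or right after $\sigma$), not because $L_i$ is periodic in $\vv{i}{1}$. The count $m_i=r_{i+1}$ then follows from $k_{m_i+1}=0$, i.e.\ $n=m_i\,|\vv{i}{1}|=m_i\,|L_{i+1}|$, exactly as in your final step. You should replace the self-similarity paragraph with this prefix argument; as written, the middle of your proof does not go through.
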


\begin{proof}
For the first item, note that by Definition~\ref{Def_vi} and by the fact that $r_i=1$ (namely, $|L_i|=n$), $\Duval(L_i)=\vv i 1$. Moreover, $\vv i 1 = L_{i+1}$, as $\vv i 1 \mid n$. We turn to prove the second item. Write $L_i=v\sigma(k-1)^l$, where $\sigma<k-1$. Hence, since $k_1=n$, it follows that $\vv i 1 =v(\sigma+1)$. We show now by induction that $\vv i 1=\vij$ for every $j\in\{1,\dots,m_i\}$. The induction basis trivially holds, as $\vv i 1=\vv i 1$. Assume, now, that $j>1$ and $\vv i1 =\cdots = \vv i {j-1}$. Since $j\leq m_i$, it follows that $k_j>0$. Specifically, $k_j=|L_i|-|\vv i 1 |-\cdots -|\vv i {j-1}| = n-(j-1)|\vv i 1|>0 $. Since $|\vv i 1| \mid n$, it follows that $k_j=n-(j-1)|\vv i 1|\geq|\vv i 1|$. 
Therefore, the $k_j$-prefix of $L_i$ is $v\sigma(k-1)^{l'}$ for some $l' <  l$. It follows that $\vij=v(\sigma+1)=\vv i 1$, as required. Moreover, $k_{m_i+1}=0=n-|\vv i 1\cdots \vv i {m_i}|$. Thus, $n=|\vv i 1 \cdots \vv i {m_i}|=|(\vv i 1)^{m_i}|=|(L_{i+1})^{m_i}|$ which proves that $m_i=r_{i+1}$.
\end{proof}

\begin{lemma}
\label{Lemma_vi_case3}
Take $L_i\neq k-1$, and consider the words $\vv i1,\dots,\vv i {m_i}$, as defined in Definition~\ref{Def_vi}. If $r_i=1$ and $\vv i 1 \nmid n$, then:
\begin{enumerate}
\item $\Duval(L_i)=\vv i 1$.

\item  $\vv i 1=\cdots = \vv i {j_0}$ and  $\Duval(\vv i 1) = \vv i 1\cdots\vv i {j_0} \vv i {j_0+1}$, where $j_0$ is the maximal integer such that $|(\vv i 1)^{j_0}|<n$ (note that equality cannot hold since $|\vv i 1 |\nmid n$).

\item For $j_0$ as defined above, if $j_0<j<m_i$, then $\Duval(\vv i 1\cdots \vv i j)=\vv i 1  \cdots \vv i {j+1}$.
\end{enumerate}
\end{lemma}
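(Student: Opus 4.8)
The statement is the "$r_i=1$, $|\vv i 1|\nmid n$" analogue of Lemmas~\ref{Lemma_vi_case2} and~\ref{Lemma_vi_case1}, so the plan is to reuse the two mechanisms already developed: the direct computation of $\Duval$ from Definition~\ref{Def_vi}, and the inductive "same-prefix, same-increment" argument from the proof of Lemma~\ref{Lemma_vi_case2}. Item~1 is immediate: since $r_i=1$ means $|L_i|=n$, Definition~\ref{Def_vi} gives $k_1=n$, so $\vv i1$ is obtained from $L_i$ itself by stripping the maximal $(k-1)^l$ suffix and incrementing; this is exactly the word produced by Algorithm~\ref{Algo_Duval} on input $L_i$, i.e. $\Duval(L_i)=\vv i1$.

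For item~2, write $L_i=v\sigma(k-1)^l$ with $\sigma<k-1$, so $\vv i1=v(\sigma+1)$. I would first show $\vv i1=\cdots=\vv i{j_0}$ by the same induction as in Lemma~\ref{Lemma_vi_case2}: as long as $k_j=n-(j-1)|\vv i1|\ge |\vv i1|$, the $k_j$-prefix of $L_i$ has the form $v\sigma(k-1)^{l'}$ with $l'\le l$ (because $|v\sigma|\le k_j$), and hence $\vv ij=v(\sigma+1)=\vv i1$; by definition of $j_0$ this holds precisely for $j\le j_0$. Next, to identify $\Duval(\vv i1)$: since $|\vv i1|\nmid n$ we have $|(\vv i1)^{j_0}|<n<|(\vv i1)^{j_0+1}|$, so $\Duval$ takes the prefix $(\vv i1)^{j_0}u$ of $\vv i1$-powers of total length $n$ (here $u$ a proper prefix of $\vv i1$ of length $n-|(\vv i1)^{j_0}|=k_{j_0+1}$), strips its trailing $(k-1)$'s and increments. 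The length $k_{j_0+1}$ satisfies $0<k_{j_0+1}<|\vv i1|=|v\sigma|+1$; one checks the $k_{j_0+1}$-prefix of $\vv i1=v(\sigma+1)$ equals the $k_{j_0+1}$-prefix of $L_i=v\sigma(k-1)^l$ (they agree on the first $|v|$ symbols, and $k_{j_0+1}\le|v\sigma|$ since $k_{j_0+1}<|v\sigma|+1$ is an integer, giving at most one symbol into the $\sigma$ position, where $\vv i1$ has $\sigma+1$ and $L_i$ has $\sigma$ — so I must take the prefix inside $v$, i.e. argue $k_{j_0+1}\le |v|$, or handle the boundary case $k_{j_0+1}=|v\sigma|$ separately). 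Stripping $(k-1)$'s from that prefix and incrementing yields exactly $\vv i{j_0+1}$ by Definition~\ref{Def_vi}, so $\Duval(\vv i1)=(\vv i1)^{j_0}\vv i{j_0+1}=\vv i1\cdots\vv i{j_0}\vv i{j_0+1}$.

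For item~3, with $j_0<j<m_i$, the argument mirrors item~2 of Lemma~\ref{Lemma_vi_case1}: by definition $k_{j+1}=|L_i|-|\vv i1\cdots\vv ij|=n-|\vv i1\cdots\vv ij|$, so $\Duval$ applied to the length-$(<n)$ word $\vv i1\cdots\vv ij$ appends the increment of (the $(k-1)$-stripped) $k_{j+1}$-prefix of the word being powered. The subtlety is that $\vv i1\cdots\vv ij$ need not be a power of a single word, so I cannot quote Lemma~\ref{Lemma_vi_case1} verbatim; instead I would verify directly that $\vv i1\cdots\vv ij$ is a Lyndon word (it is a Lyndon word $\class L_{?}$ arising from repeated $\Duval$ from $\vv i1$, by items~1–2 and induction) whose $\Duval$-completion to length $n$ uses its own $k_{j+1}$-prefix, and that this prefix coincides with the $k_{j+1}$-prefix of $L_i$ — which holds because $k_{j+1}<|\vv ij|\le|L_i|$ and the words share a long enough common prefix. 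Then stripping and incrementing gives $\vv i{j+1}$.

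The main obstacle is the prefix-agreement bookkeeping in item~2 around the boundary length $k_{j_0+1}$: one must pin down exactly when the $k_{j_0+1}$-prefix of $\vv i1=v(\sigma+1)$ can be taken without touching the incremented symbol, and confirm that stripping trailing $(k-1)$'s from it reproduces $\vv i{j_0+1}$ rather than something shifted by one position. Handling the case $k_{j_0+1}=|v\sigma|$ (prefix ending exactly at the modified symbol) may require a short separate check or a remark that $\sigma+1$ itself has no trailing $(k-1)$ unless $\sigma+1=k-1$, in which case the strip removes it and the recursion in Definition~\ref{Def_vi} does the same. Everything else is a routine transcription of the two earlier proofs.
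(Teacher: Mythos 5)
Your plan follows the paper's own route: item 1 directly from Definition~\ref{Def_vi}, item 2 by the induction of Lemma~\ref{Lemma_vi_case2} plus a prefix-agreement argument between $\vv i 1$ and $L_i$, and item 3 by repeating that argument. The one point you leave open --- whether the $k_{j_0+1}$-prefix of $\vv i 1$ can reach the incremented symbol --- is not an obstacle, and it stems from an off-by-one: writing $L_i=v\sigma(k-1)^l$ you have $\vv i 1=v(\sigma+1)$, so $|\vv i 1|=|v|+1=|v\sigma|$, not $|v\sigma|+1$. The maximality of $j_0$ gives $k_{j_0+1}\leq|\vv i 1|$, and equality is excluded precisely by the hypothesis $|\vv i 1|\nmid n$; hence $k_{j_0+1}<|\vv i 1|$, i.e.\ $k_{j_0+1}\leq |v|$, so the prefix lies entirely inside $v$ and agrees with the $k_{j_0+1}$-prefix of $L_i$ --- this is exactly the paper's one-line argument, and no separate boundary case exists. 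It is worth noting that the fallback you sketch for the (vacuous) case $k_{j_0+1}=|\vv i 1|$ would not actually work: there the two prefixes end in $\sigma+1$ versus $\sigma$, and stripping-and-incrementing would yield $v(\sigma+2)$ rather than $\vv i {j_0+1}=v(\sigma+1)$, so the strict inequality is genuinely needed, not merely convenient. For item 3 your extra care is sound (the paper leaves it to the reader): one uses that $k_{j+1}<k_{j_0+1}<|\vv i 1|$, so again the relevant prefix of $\vv i 1\cdots\vv i j$ is a prefix of $v$ and hence of $L_i$, and that $n-|\vv i 1\cdots \vv i j|<|\vv i 1|$ ensures Duval's algorithm wraps only once.
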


\begin{proof}
As in the two former lemmas, the first item trivially holds  by Definition~\ref{Def_vi}. To prove the second item, let $j_0$ be maximal such that $|(\vv i 1)^{j_0}|<n$. Then, $\Duval(\vv i 1)=(\vv i 1)^{j_0}v'$. By using the same argument as in the proof of item 2 of Lemma~\ref{Lemma_vi_case2}, it can be shown that $\vv i 1= \cdots =\vv i {j_0}$. We prove that $v'=\vv i {j_0+1}$. Since $r_i=1$ it follows that $|L_i|=n$ and thus $k_{j_0+1}=n-|\vv i 1\cdots\vv i {j_0}|=n-|(\vv i 1)^{j_0}|$. By the maximality of ${j_0}$ it follows that $k_{j_0+1}<|\vv i 1|$. Let $x\sigma(k-1)^l$ be the $k_{j_0+1}$-prefix of $\vv i 1$, where $\sigma < k-1$. Hence, removing the suffix, $(k-1)^l$, and increasing the last symbol by one results in $x(\sigma+1)=v'$. Now, since $k_{j_0+1}<|\vv i 1|$ it follows that $x\sigma(k-1)^l$ is also the $k_{j_0+1}$-prefix of $L_i$. Therefore, $\vv i {j_0+1}=x(\sigma+1)=v'$ as required. We leave for the reader to verify that the same argument proves item 3 as well.
\end{proof}

Now we can show how to construct $L_{i+1}^{r_{i+1}}$ from $L_i^{r_i}$.

\begin{lemma}
\label{Lemma:L_i-vvv_L_i+1}
Take $L_i\neq k-1$, and consider the words $\vv i1,\cdots,\vv i {m_i}$, as defined in Definition~\ref{Def_vi}. Then, $L_i^{r_i-1}\vv i 1 \cdots \vv i {m_i}= L_{i+1}^{r_{i+1}}$.
\end{lemma}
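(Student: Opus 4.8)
The plan is to split into the three cases already isolated by Lemmas~\ref{Lemma_vi_case1}, \ref{Lemma_vi_case2}, and \ref{Lemma_vi_case3}, according to whether $r_i>1$, or $r_i=1$ with $|\vv i1|\mid n$, or $r_i=1$ with $|\vv i1|\nmid n$. In every case the strategy is the same: the sequence $\vv i1,\dots,\vv i{m_i}$ is, by construction, exactly the sequence of increments that successive applications of $\Duval$ append, starting from $L_i$ (when $r_i>1$) or from $L_i$ and then $\vv i1=L_{i+1}$ (when $r_i=1$), so repeatedly invoking Duval's algorithm $m_i$ times and chaining the outputs via the cited lemmas produces the word $L_i^{r_i-1}\vv i1\cdots\vv i{m_i}$; it then remains to recognize this word as $L_{i+1}^{r_{i+1}}$.

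\textbf{Case $r_i>1$.} Here $|L_i|<n$, so $1\le i<N(n,k)$ and Lemma~\ref{Lemma:L_i_to_L_i+1} applies. By item~1 of Lemma~\ref{Lemma_vi_case1}, $\Duval(L_i)=L_i^{r_i-1}\vv i1=\mathcal L_{j+1}$ where $L_i=\mathcal L_j$; by item~2, iterating Duval turns $L_i^{r_i-1}\vv i1\cdots\vv ij$ into $L_i^{r_i-1}\vv i1\cdots\vv i{j+1}$ for $1\le j<m_i$. So after $m_i$ applications of $\Duval$ we reach $L_i^{r_i-1}\vv i1\cdots\vv i{m_i}$, a word of length $|L_i^{r_i-1}|+\sum_j|\vv ij|=|L_i^{r_i-1}|+k_1=|L_i^{r_i-1}|+|L_i|=n$ (using the telescoping $\sum_j|\vv ij|=k_1-k_{m_i+1}=|L_i|-0$ from Definition~\ref{Def_vi}). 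Since this word has length $n$ and is $\mathcal L_{j+m_i}$, and since all intermediate $\Duval$ outputs $\mathcal L_{j+1},\dots,\mathcal L_{j+m_i-1}$ have length strictly less than $n$ (their lengths are $|L_i^{r_i-1}\vv i1\cdots\vv it|=n-k_{t+1}<n$ because $k_{t+1}>0$ for $t<m_i$), $\mathcal L_{j+m_i}$ is the first Lyndon word after $L_i$ whose length divides $n$, i.e.\ $L_{i+1}$. As $|L_{i+1}|=n$ we have $r_{i+1}=1$, and $L_i^{r_i-1}\vv i1\cdots\vv i{m_i}=L_{i+1}=L_{i+1}^{r_{i+1}}$, as required.

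\textbf{Cases $r_i=1$.} Now $|L_i|=n$, so $L_i^{r_i-1}=\varepsilon$ and we must show $\vv i1\cdots\vv i{m_i}=L_{i+1}^{r_{i+1}}$. If $|\vv i1|\mid n$, Lemma~\ref{Lemma_vi_case2} gives $\Duval(L_i)=\vv i1=L_{i+1}$, $\vv i1=\cdots=\vv i{m_i}$, and $m_i=r_{i+1}$, so $\vv i1\cdots\vv i{m_i}=(\vv i1)^{m_i}=L_{i+1}^{r_{i+1}}$ immediately. If $|\vv i1|\nmid n$, Lemma~\ref{Lemma_vi_case3} gives $\Duval(L_i)=\vv i1$ with $|\vv i1|<n$ not dividing $n$, then $\vv i1=\cdots=\vv i{j_0}$ and $\Duval(\vv i1)=\vv i1\cdots\vv i{j_0+1}$, and for $j_0<j<m_i$, $\Duval(\vv i1\cdots\vv ij)=\vv i1\cdots\vv i{j+1}$; iterating, after $m_i$ total applications of $\Duval$ starting from $L_i$ we reach $\vv i1\cdots\vv i{m_i}$, which has length $k_1-k_{m_i+1}=n$, hence divides $n$ trivially, hence equals $L_{i+1}$ and also $L_{i+1}^{r_{i+1}}$ (again $r_{i+1}=1$). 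In all three cases the word $\vv i1\cdots\vv i{m_i}$ is reached by successive $\Duval$ steps and has length such that $L_i^{r_i-1}\vv i1\cdots\vv i{m_i}$ has length $n\cdot r_{i+1}$, and is recognized as $L_{i+1}^{r_{i+1}}$.

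The main obstacle is the bookkeeping in the $\vv i1\mid n$ versus $\vv i1\nmid n$ dichotomy: one must be careful that the helper lemmas only describe $\Duval$ outputs up to the point $\mathcal L_{j+m_i}$ and not beyond, and that the telescoping length identity $\sum_{t=1}^{m_i}|\vv it|=|L_i|$ together with $k_{m_i+1}=0$ correctly forces the total length to be a multiple of $n$ equal to $r_{i+1}n$; verifying that no $\Duval$ output strictly between $L_i$ and $L_{i+1}$ has length dividing $n$ (so that the word we land on really is $L_{i+1}$, not some later $L$) is the delicate point, but it follows from $k_{t+1}>0$ for $t<m_i$ in the $r_i>1$ case and from the definition of $j_0$ and $m_i$ in the $r_i=1$ cases.
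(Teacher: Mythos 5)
Your route is the same as the paper's (split along Lemmas~\ref{Lemma_vi_case1}, \ref{Lemma_vi_case2}, \ref{Lemma_vi_case3}, view $L_i,\,L_i^{r_i-1}\vv i1,\,\dots$ as consecutive Lyndon words produced by \Duval, and identify the last word in the chain with $L_{i+1}$), and your middle case ($r_i=1$, $|\vv i1|\mid n$) is fine. But the step you yourself call delicate is exactly where your justification does not work. To conclude that the length-$n$ word you land on is $L_{i+1}$ (rather than some later $\class L$) you must rule out that an \emph{intermediate} Lyndon word in the chain has length dividing $n$; if one did, that word would itself be $L_{i+1}$, we would have $r_{i+1}>1$, and $L_{i+1}^{r_{i+1}}$ would be a proper power, hence not equal to the aperiodic length-$n$ word $L_i^{r_i-1}\vv i1\cdots\vv i{m_i}$. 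Your argument for excluding this is only that the intermediate lengths are strictly less than $n$ (``$k_{t+1}>0$'' in the case $r_i>1$, ``the definition of $j_0$ and $m_i$'' in the case $r_i=1$, $|\vv i1|\nmid n$). That is not enough: a length such as $n/2$ or $n/3$ is strictly less than $n$ and still divides $n$, so nothing you wrote prevents an intermediate word from being the true $L_{i+1}$.

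The missing ingredient, which is what the paper's proof supplies, is a \emph{lower} bound: a proper divisor of $n$ is at most $n/2$, and every intermediate word in the chain has length strictly greater than $n/2$. For $r_i>1$ this holds because $|L_i|\le n/2$, so $|L_i^{r_i-1}\vv i1\cdots\vv it|=n-k_{t+1}>n-|L_i|\ge \frac{n}{2}$ for $1\le t<m_i$. For $r_i=1$ with $|\vv i1|\nmid n$, the word $\vv i1$ itself is excluded by the case hypothesis, and the words $\vv i1\cdots\vv it$ with $j_0<t<m_i$ have length exceeding $\frac{n}{2}$ (e.g.\ because $\vv i1\cdots\vv i{j_0+1}=\Duval(\vv i1)$ with $|\vv i1|<n$, so Corollary~\ref{corollary_Lyndon_length_above_half_n} applies; or via maximality of $j_0$ together with $|\vv i1|\nmid n$). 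Once this $\frac{n}{2}<\text{length}<n$ bound is stated, no intermediate length divides $n$, the final word is the first Lyndon word after $L_i$ whose length divides $n$, i.e.\ $L_{i+1}$ with $r_{i+1}=1$, and your argument closes exactly as in the paper.
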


\begin{proof}
The proof is divided into three parts, in accordance with Lemmas~\ref{Lemma_vi_case1},~\ref{Lemma_vi_case2} and~\ref{Lemma_vi_case3}. First, assume that $r_i>1$. By Lemma~\ref{Lemma_vi_case1}, the following is a sequence of consecutive Lyndon words:
$L_i, L_i^{r_i-1}\vv i 1, \dots,  L_i^{r_i-1}\vv i 1 \cdots \vv i{m_i}.$ Note that
$\frac{n}{2}<|L_i^{r_i-1}\vv i 1|<|L_i^{r_i-1}\vv i 1 \vv i 2|<\cdots <|L_i^{r_i-1}\vv i1 \vv i 2 \cdots \vv i {m_i}|=n$. 
Hence, since $|L_{i+1}| \mid n$, it follows that $L_i^{r_i-1}\vv i 1\cdots \vv i {m_i}=L_{i+1}$. 
Consequently, it follows that $|L_{i+1}|=n$ thus $r_{i+1}=1$, as required.

Now, consider the case where $r_i=1$ and $|\vv i 1|\mid n$. Since $|\vv i 1|\mid n$, by the first part of Lemma~\ref{Lemma_vi_case2}, $\vv i 1=L_{i+1}$. Moreover, by the second part of Lemma~\ref{Lemma_vi_case2}, $\vv i 1=\cdots =\vv i {m_i}$ and $m_i =r_{i+1}$. Therefore, $L_i^{r_i-1}\vv i 1 \cdots \vv i {m_i}= \vv i 1 \cdots \vv i {m_i}= {\vv i 1 }^{m_i} = L_{i+1}^{r_{i+1}}$, as required.

It is left to deal with the case where $r_i=1$ and $|\vv i 1|\nmid n$. By Lemma~\ref{Lemma_vi_case3}, the following is a sequence of consecutive Lyndon words:$$\text{
$L_i$, \ $\vv i1$, \  $\vv i 1 \cdots \vv i {j_0+1}$, \ $\vv i 1 \cdots \vv i {j_0+2}$, \ $\dots$ , $\vv i 1 \cdots \vv i {m_i}$}$$ where $j_0$ is the maximal integer such that $|(\vv i 1)^{j_0}|<n$. As in the former case, $\frac{n}{2}<|\vv i1 \cdots \vv i {j_0+1}|<\cdots<|\vv i 1\cdots \vv i {m_i}|=n$. Since $|L_{i+1}| \mid n$, we conclude that $L_{i+1}=\vv i 1\cdots \vv i {m_i}$. Moreover, we get that $|L_{i+1}|=n$, thus $r_{i+1}=1$ and the lemma follows.
\end{proof}

By the former lemma and by Lemmas~\ref{Lemma_vi_case1},~\ref{Lemma_vi_case2} and~\ref{Lemma_vi_case3}, we also conclude:

\begin{corollary}
\label{cor:main-cor-of-5.2.1}
Take $L_i\neq k-1$, and consider the words $\vv i1,\cdots,\vv i {m_i}$, as defined in Definition~\ref{Def_vi}. 
\begin{enumerate}
    
    \item If $\class L=L_i^{r_i-1}\vv i 1 \dots \vv i j$ is a Lyndon word where $1\leq j\leq m_i$, then $L_i<\class L\leq L_{i+1}$.
    
    \item $L_i^{r_i}\vv i 1 \cdots \vv i {m_i}=L_iL_{i+1}^{r_{i+1}}$.
    
\end{enumerate}
\label{Cor:L_i-vvv_L_i+1}
\end{corollary}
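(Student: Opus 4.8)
The goal here is to prove Corollary~\ref{cor:main-cor-of-5.2.1}, which has two items. Item~1 says that if $\class L = L_i^{r_i-1}\vv i 1 \cdots \vv i j$ happens to be a Lyndon word for some $1 \leq j \leq m_i$, then $L_i < \class L \leq L_{i+1}$. Item~2 says $L_i^{r_i}\vv i 1 \cdots \vv i {m_i} = L_i L_{i+1}^{r_{i+1}}$.

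For item~2, the plan is essentially immediate from Lemma~\ref{Lemma:L_i-vvv_L_i+1}: that lemma already establishes $L_i^{r_i-1}\vv i 1 \cdots \vv i {m_i} = L_{i+1}^{r_{i+1}}$. Prepending one more copy of $L_i$ to both sides (recall $L_i^{r_i} = L_i \cdot L_i^{r_i-1}$) gives $L_i^{r_i}\vv i 1 \cdots \vv i {m_i} = L_i L_{i+1}^{r_{i+1}}$, which is exactly what is claimed. So item~2 is a one-line consequence and will present no obstacle.

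For item~1, I would argue by cases matching the three sub-lemmas, just as in the proof of Lemma~\ref{Lemma:L_i-vvv_L_i+1}. In the case $r_i > 1$, Lemma~\ref{Lemma_vi_case1} shows that $L_i, L_i^{r_i-1}\vv i 1, \ldots, L_i^{r_i-1}\vv i 1 \cdots \vv i {m_i}$ is a chain of consecutive Lyndon words under $\Duval$ (i.e., of the form $\class L_\ell, \class L_{\ell+1}, \ldots$), and Lemma~\ref{Lemma:L_i-vvv_L_i+1} identifies the last of these with $L_{i+1}$. Since $\Duval$ produces the next Lyndon word in lexicographic order (Theorem~\ref{theoram_Duval_Li_Li}), any $\class L$ appearing in this chain strictly after $L_i$ and at or before $L_{i+1}$ satisfies $L_i < \class L \leq L_{i+1}$; and every word of the form $L_i^{r_i-1}\vv i 1 \cdots \vv i j$ with $1 \leq j \leq m_i$ that is a Lyndon word is precisely one of the words in that chain, since by Lemma~\ref{Lemma_vi_case1} it equals $\Duval^{(j)}(L_i)$. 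The cases $r_i = 1, |\vv i 1| \mid n$ and $r_i = 1, |\vv i 1| \nmid n$ are handled analogously using Lemmas~\ref{Lemma_vi_case2} and~\ref{Lemma_vi_case3} respectively: in each, the prefixes $L_i^{r_i-1}\vv i 1 \cdots \vv i j = \vv i 1 \cdots \vv i j$ that are Lyndon words form a subsequence of a $\Duval$-chain running from $L_i$ up to $L_{i+1}$, so they lie strictly between $L_i$ and $L_{i+1}$ inclusive of the right endpoint.

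The main subtlety — and the only place requiring a little care — is the case $r_i = 1$ with $|\vv i 1| \nmid n$: here not every partial product $\vv i 1 \cdots \vv i j$ is a Lyndon word (for $j < j_0$ we have $\vv i 1 \cdots \vv i j = (\vv i 1)^j$, which is periodic and hence not Lyndon), so one must be careful that the hypothesis ``$\class L$ is a Lyndon word'' rules out exactly those degenerate indices, leaving only the genuine members of the $\Duval$-chain guaranteed by Lemma~\ref{Lemma_vi_case3}. Once that observation is made, the lexicographic ordering $L_i < \class L \le L_{i+1}$ follows because $\Duval$ advances strictly in lex order and $L_{i+1}$ is the final word of the chain. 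I expect the whole corollary to be short; the bulk of the work has already been done in the preceding lemmas, and this corollary is really just repackaging them.
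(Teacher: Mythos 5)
Your proposal is correct and follows essentially the same route the paper intends: the paper states this corollary without a written proof, as a direct consequence of Lemmas~\ref{Lemma_vi_case1}--\ref{Lemma_vi_case3} and Lemma~\ref{Lemma:L_i-vvv_L_i+1}, which is exactly the case analysis and repackaging you carry out (including the correct observation that in the periodic sub-cases the hypothesis ``$\class L$ is a Lyndon word'' excludes the powers $(\vv i 1)^j$, $j>1$). Item~2 is, as you say, immediate from Lemma~\ref{Lemma:L_i-vvv_L_i+1} by prepending $L_i$.
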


\subsubsection{Analyzing the Structure of $w$}

We are ready to present our analysis concerning the structure of an $n$-word, $w$, in order to extract information that we use to compute $\cover(w)$. First, we identify a distinguished simple case, and define:

\begin{definition}
An $n$-word, $w$, is said to be an expanded Lyndon word, if $w=L_i^{r_i}$ for some $i\leq N(n,k)$.
\end{definition}

If $w=L_i^{r_i}$ is an expanded Lyndon word, then $\cover(w)=(L_i,\varepsilon)$.  
The reader may observe that procedures ${\sf find\_root}$ and ${\sf is\_Lyndon}$, both described in subsection~\ref{SubSubSection-FTG-Algo}, can be used to decide if $w$ is an expanded Lyndon word efficiently, and to extract $L_i$ in linear time in those cases. 

But what shall we do in the general case? Namely, if $w$ is a subword of $L_{i}L_{i+1}^{r_{1+i}}$, but not a suffix of this sequence? As a first step for answering this question we invoke Corollary~\ref{Cor:L_i-vvv_L_i+1}, which establishes relationships between $w$ and the words defined in Definition~\ref{Def_vi}, as the next lemma elaborates.

\begin{lemma}
\label{Lemma:w-divide}
Let $w\neq(k-1)^p0^{n-p}$ be an $n$-word which is not an expended-Lyndon-word. If $w$ is covered by $L_{i+1}$, then $w=yL_{i}^{r_{i}-1}\vv {i} 1 \cdots \vv {i} j z$, where $y$ is a proper non-empty suffix of $L_{i}$, and $z$ is a proper prefix of $\vv {i} {j+1}$.
\end{lemma}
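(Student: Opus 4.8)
The plan is to unwind the two facts we already have: by Corollary~\ref{Cor:L_i-vvv_L_i+1}(2), $L_iL_{i+1}^{r_{i+1}}=L_i^{r_i}\vv i1\cdots\vv i{m_i}$, so the statement ``$w$ is covered by $L_{i+1}$ but $w$ is not an expanded Lyndon word and $w\neq(k-1)^p0^{n-p}$'' means precisely that $w$ occurs as a subword of $L_i^{r_i}\vv i1\cdots\vv i{m_i}$ but not as a subword of $L_1\cdots L_{i-1}L_i^{r_i}$, and in particular not as a suffix of $L_i^{r_i}$ (the latter because $w$ is not an expanded Lyndon word and $|w|=n=|L_i^{r_i}|$). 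First I would record that $|L_i|<n$, i.e. $r_i>1$ is not forced but the relevant degenerate sub-case is handled: if $|L_i|=n$ then $r_i=1$ and $L_iL_{i+1}^{r_{i+1}}=L_i\vv i1\cdots\vv i{m_i}$ with each $|\vv i\cdot|$ small, and the argument below still goes through with the convention $L_i^{r_i-1}=\varepsilon$; so I would simply carry the general notation $L_i^{r_i-1}\vv i1\cdots$ throughout.

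The core argument is a window-placement argument. Since $w$ is an $n$-subword of $S:=L_i^{r_i}\vv i1\cdots\vv i{m_i}=L_i\cdot(L_i^{r_i-1}\vv i1\cdots\vv i{m_i})=L_i\cdot L_{i+1}^{r_{i+1}}$ (using Lemma~\ref{Lemma:L_i-vvv_L_i+1}), fix an occurrence of $w$ in $S$, say starting at position $p$. I claim $1\le p\le|L_i|$: if $p>|L_i|$ then $w$ would be a length-$n$ subword of $L_{i+1}^{r_{i+1}}$, hence (as $|w|=n\ge|L_{i+1}|$ and $L_{i+1}^{r_{i+1}}$ has length $n$) $w$ would be a suffix — in fact all — of $L_{i+1}^{r_{i+1}}$, making $w=L_{i+1}^{r_{i+1}}$ an expanded Lyndon word, contradiction; and $p\ge1$ trivially. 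If $p=1$ then $w=L_i\cdot(\text{prefix of }L_{i+1}^{r_{i+1}}\text{ of length }n-|L_i|)$; since $L_1\cdots L_{i-1}L_i$ ends in $L_i$ and $L_{i+1}$ begins with $L_i^{r_i-1}$ (Lemma~\ref{Lemma:L_i_to_L_i+1}, when $|L_i|<n$) or with $\vv i1$, one checks $w$ is then a suffix of $L_1\cdots L_iL_{i+1}$ that actually already sits inside $L_1\cdots L_{i-1}L_i^{r_i}$ — ruling this out reduces to the hypothesis that $w$ is not an expanded Lyndon word, so $2\le p\le|L_i|$. Write $y$ for the suffix of $L_i$ of length $|L_i|-(p-1)$; this is a \emph{proper} suffix of $L_i$ because $p\ge2$, and it is \emph{non-empty} because $p\le|L_i|$. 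Then $w=y\cdot(\text{prefix of }L_{i+1}^{r_{i+1}}\text{ of length }n-|y|)$, and since $L_{i+1}^{r_{i+1}}=L_i^{r_i-1}\vv i1\vv i2\cdots\vv i{m_i}$, the length-$(n-|y|)$ prefix of it is $L_i^{r_i-1}\vv i1\cdots\vv ij z$ where $j$ is the largest index with $|L_i^{r_i-1}\vv i1\cdots\vv ij|\le n-|y|$ and $z$ is the (possibly empty) prefix of $\vv i{j+1}$ of the residual length. This is exactly the claimed decomposition $w=yL_i^{r_i-1}\vv i1\cdots\vv ijz$ with $z$ a proper prefix of $\vv i{j+1}$.

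The step I expect to be the main obstacle is \textbf{pinning down the endpoints}, i.e. showing $p\neq1$ and $p\le|L_i|$ cleanly, and simultaneously making sure the index $j$ in the decomposition is well-defined (that the window does not overshoot past $\vv i{m_i}$, which again is exactly the statement $p\ge 2$, equivalently $w$ is not a suffix of $S$). Both endpoint exclusions are ``not an expanded Lyndon word / not a suffix of $L_1\cdots L_{i-1}L_i^{r_i}$'' in disguise, but turning the word-combinatorics into a clean contradiction requires care: for the lower end I would argue that if $w$ were a length-$n$ prefix of $S$ shifted to start exactly at $1$, periodicity of $L_i$ inside $L_i^{r_i}$ forces $w=L_i^{r_i}$ unless the window already straddles into the $\vv i\cdot$ part, but then the minimality clause in the definition of $\cover$ (Definition~\ref{def:cover(w)}) combined with Lemma~\ref{Lemma:i-to-i+1} would make $w$ covered by $L_i$, not $L_{i+1}$ — contradicting the hypothesis. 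I would isolate these two endpoint facts as the first two paragraphs of the proof and then let the decomposition fall out mechanically as above, with the case $|L_i|=n$ dispatched in one line at the start.
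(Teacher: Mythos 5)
Your proposal is correct and follows essentially the same route as the paper's (much terser) proof: rewrite $L_1\cdots L_iL_{i+1}^{r_{i+1}}$ as $L_1\cdots L_{i-1}L_i^{r_i}v_{i,1}\cdots v_{i,m_i}$ via Corollary~\ref{Cor:L_i-vvv_L_i+1}, locate the occurrence of $w$ inside the window $L_i^{r_i}v_{i,1}\cdots v_{i,m_i}$, rule out the prefix position (else $w=L_i^{r_i}$, covered by $L_i$/an expanded Lyndon word) and the suffix position (else $w=L_{i+1}^{r_{1+i}}$, an expanded Lyndon word), and read off the decomposition. Only a cosmetic remark: in your ``obstacle'' paragraph you equate $p\geq 2$ with ``$w$ is not a suffix of $S$,'' whereas $p\geq 2$ is the not-a-prefix condition and $p\leq|L_i|$ is the not-a-suffix condition that guarantees $y\neq\varepsilon$ and hence that $v_{i,j+1}$ exists; your core argument gets both endpoints right.
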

\begin{proof}

$w$ is a subword of $L_1\cdots L_{i}L_{i+1}^{r_{i+1}}$. By Corollary~\ref{Cor:L_i-vvv_L_i+1}, $L_1\cdots L_{i}L_{i+1}^{r_{i+1}}=L_1\cdots L_{i-1}L_{i}^{r_{i}}\vv {i} 1 \cdots \vv {i} {m_{i}}$. Since $w$ is not covered by $L_{i}$, $w$ is a subword, but not  a prefix, of $L_{i}^{r_{i}}\vv {i} 1 \cdots \vv {i} {m_{i}}$. Furthermore, since $w$ is not an expanded Lyndon word, $w$ is not a suffix of $L_{i}^{r_{i}}\vv {i} 1 \cdots \vv {i} {m_i}=L_iL_{i+1}^{r_{i+1}}$, which proves that $w$ is of the required form.
\end{proof}

From the proof of Lemma~\ref{Lemma:w-divide}, we also conclude:

\begin{corollary}
\label{cor:y-suffix}
Assume that an $n$-word $w\neq (k-1)^p0^{n-p}$, is not an expanded Lyndon word, and $w$ is covered by $L_{i+1}$. Write $w=yL_{i}^{r_{i}-1}v_{{i},1}\cdots v_{{i},j}z$ as in Lemma~\ref{Lemma:w-divide}. Hence, $\cover(w)=(L_{i+1},x)$ where $x$ is the $|y|$-suffix of $L_{i}L_{i+1}^{r_{i+1}}$.
\end{corollary}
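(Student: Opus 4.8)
The plan is to read the required occurrence of $w$ straight out of the proof of Lemma~\ref{Lemma:w-divide}, verify condition~1 of Definition~\ref{def:cover(w)} directly, and then obtain the minimality condition by exploiting that $L_iL_{i+1}^{r_{i+1}}$ is a factor of the prefer-min De Bruijn sequence, so no $n$-word occurs in it twice.

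First I would recall from the proof of Lemma~\ref{Lemma:w-divide} that $L_1\cdots L_iL_{i+1}^{r_{i+1}}=L_1\cdots L_{i-1}L_i^{r_i}\vv i1\cdots\vv i{m_i}$ and $L_i^{r_i}\vv i1\cdots\vv i{m_i}=L_iL_{i+1}^{r_{i+1}}$ (Corollary~\ref{Cor:L_i-vvv_L_i+1}), and that the decomposition $w=yL_i^{r_i-1}\vv i1\cdots\vv ijz$ of Lemma~\ref{Lemma:w-divide} exhibits $w$ as the length-$n$ window of $L_iL_{i+1}^{r_{i+1}}$ that begins just after the $(|L_i|-|y|)$-prefix of $L_i$. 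Writing $L_i=y'y$ with $|y'|=|L_i|-|y|$ and $\vv i{j+1}=z\,v$, this window is the $n$-prefix of $y\,L_i^{r_i-1}\vv i1\cdots\vv i{m_i}$, which has total length $|y|+n$; hence the part of $L_iL_{i+1}^{r_{i+1}}$ following this occurrence of $w$ is the word $x=v\,\vv i{j+2}\cdots\vv i{m_i}$ with $|x|=|y|$. Consequently $wx$ is a suffix of $L_iL_{i+1}^{r_{i+1}}$, hence of $L_1\cdots L_iL_{i+1}^{r_{i+1}}$, and $x$ is the $|y|$-suffix of $L_iL_{i+1}^{r_{i+1}}$; this gives condition~1 of Definition~\ref{def:cover(w)} for the candidate pair $(L_{i+1},x)$.

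For the minimality condition~2, I would use that $\cover(w)$ is well-defined and that $w$ is covered by $L_{i+1}$ by hypothesis, so $\cover(w)=(L_{i+1},x^{*})$ for a unique $x^{*}$; it then suffices to show $x^{*}=x$. Condition~2 for the pair $(L_{i+1},x^{*})$, applied to the candidate $(L_{i+1},x)$ just built, gives $|x^{*}|\le|x|=|y|<|L_i|$. For the reverse inequality, suppose $|x^{*}|<|x|$. Then $|x^{*}|\ge1$, since $|x^{*}|=0$ would force $w$ to equal the $n$-suffix $L_{i+1}^{r_{i+1}}$ of $L_1\cdots L_iL_{i+1}^{r_{i+1}}$, i.e.\ an expanded Lyndon word, contradicting the hypothesis. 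Now $wx^{*}$ is a suffix of $L_1\cdots L_iL_{i+1}^{r_{i+1}}$ of length $n+|x^{*}|<n+|L_i|=|L_iL_{i+1}^{r_{i+1}}|$, so it is a proper suffix of $L_iL_{i+1}^{r_{i+1}}$ and yields an occurrence of $w$ in $L_iL_{i+1}^{r_{i+1}}$ starting at position $|L_i|-|x^{*}|$, strictly to the right of the occurrence found above (at position $|L_i|-|y|$). But $L_iL_{i+1}^{r_{i+1}}$ is a factor of the (cyclic) prefer-min De Bruijn sequence of length $n+|L_i|<k^n$, so every $n$-word appears in it at most once — a contradiction. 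Hence $|x^{*}|=|x|$, and since $wx$ and $wx^{*}$ are then suffixes of $L_1\cdots L_iL_{i+1}^{r_{i+1}}$ of equal length they coincide, so $\cover(w)=(L_{i+1},x)$.

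The step I expect to be the main obstacle is minimality, i.e.\ ruling out a strictly shorter completion of $w$ into a suffix of some $L_1\cdots L_{j-1}L_j^{r_j}$. The shortcut above reduces this to the single claim that $w$ occurs at most once in $L_iL_{i+1}^{r_{i+1}}$; there the only delicate points are the cyclic-extension case $i\ge N(n,k)$ — where $L_iL_{i+1}^{r_{i+1}}$ is still a contiguous factor of the cyclic De Bruijn sequence, of length well below $k^n$ — and the degenerate tiny-parameter cases, both of which can be handled separately.
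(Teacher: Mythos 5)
Your overall plan is sound and genuinely different from the paper's: the paper obtains this corollary for free from the proof of Lemma~\ref{Lemma:w-divide}, because there the decomposition $w=yL_i^{r_i-1}v_{i,1}\cdots v_{i,j}z$ is read off the very occurrence of $w$ determined by $\cover(w)=(L_{i+1},x^*)$ (namely $wx^*$ is a suffix of $L_1\cdots L_iL_{i+1}^{r_{i+1}}$ with $|x^*|<|L_i|$), so $y$ is by construction the $|x^*|$-suffix of $L_i$ and no further argument is needed. You instead rebuild condition~1 from the decomposition and then prove minimality via uniqueness of occurrence, which forces you to justify an extra claim the paper never needs.

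That extra claim is where the gap is. You assert that $L_iL_{i+1}^{r_{i+1}}$ is a contiguous factor of the cyclic prefer-min sequence, hence contains every $n$-word at most once, and you flag only the wrap-around case $i\geq N(n,k)$ as delicate (where the claim is indeed fine, e.g.\ $(k-1)0^n$ does occur across the wrap). The case that actually breaks the claim is $i+1=N(n,k)$, i.e.\ $L_{i+1}=k-1$: then $L_iL_{i+1}^{r_{i+1}}=L_{N(n,k)-1}(k-1)^n=(k-2)(k-1)^{2n-1}$, which is \emph{not} a factor of the cyclic sequence (it would need a run of $2n-1>n$ copies of $k-1$), and in which the $n$-word $(k-1)^n$ occurs $n$ times — so your ``appears at most once'' step fails exactly there, and your proof as written does not exclude this case. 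The gap is closable: either show that under the corollary's hypotheses $w$ cannot be covered by $L_{N(n,k)}$ at all (every $n$-window of $L_1\cdots L_{N(n,k)-1}(k-1)^n$ ending within the last $n-1$ positions equals $(k-1)^n$, which is excluded, while any window ending earlier is a window of $L_1\cdots L_{N(n,k)-1}=L_1\cdots L_{N(n,k)-2}L_{N(n,k)-1}^{r_{N(n,k)-1}}$ and hence admits a completion shorter by $n$, contradicting minimality at $j=N(n,k)$), or note that the only $n$-word repeated in $(k-2)(k-1)^{2n-1}$ is $(k-1)^n$, which the hypothesis $w\neq(k-1)^p0^{n-p}$ rules out. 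With that case handled, your argument goes through; alternatively, adopting the paper's viewpoint (track the cover occurrence through Lemma~\ref{Lemma:w-divide}) avoids the uniqueness issue entirely.
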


This corollary suggests a direction for computing $\cover(w)$. Namely, finding $L_{i+1}$ and finding the $|y|$-suffix of $L_{i}L_{i+1}^{r_{i+1}}$. For extracting this data, first, we check if the subword of $w$: $\vv {i} 1\cdots \vv {i} j$ is not empty (i.e. if $j>0$). In the case  where $j=0$, it follows that $w$ is a rotation of $L_{i}^{r_{i}}$, and this fact is used by Algorithm~\ref{Algo:filling_the_gap} to find $L_{i+1}$ and the $|y|$-suffix of $L_{i}L_{i+1}^{r_{i+1}}$. In the case where $j>0$, we use Lemma~\ref{Lemma:w-divide} and Corollary~\ref{cor:main-cor-of-5.2.1} to find a Lyndon word, $\class L$, such that $L_{i}<\class L\leq L_{i+1}$. Then, $L_{i+1}$ can be found by applying Algorithm~\ref{Algo_next_n_Lyndon} on $\class L$. It is also required to compute $|y|$ in this case. To summary, we set three goals for our analysis:
\begin{enumerate}
    \item Deciding if $j=0$.
    
    \item If $j>0$, finding a Lyndon word, $\class L$, such that $L_{i}<\class L\leq L_{i+1}$. 
    
    \item If $j>0$, computing $|y|$.
\end{enumerate}

    We start with the first goal. The next lemma provides a criterion equivalent to $j=0$.

\begin{lemma}
\label{Lemma_j_0}
Assume that an $n$-word, $w\neq (k-1)^p0^{n-p}$, is not an expanded Lyndon word, and $w$ is covered by $L_{i+1}$. Write $w=yL_{i}^{r_{i}-1}v_{{i},1}\cdots v_{{i},j}z$ as in Lemma~\ref{Lemma:w-divide}. Hence, $j>0$  if and only if $ y = (k-1)^{|y|}$.
\end{lemma}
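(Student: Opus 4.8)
The plan is to unpack the decomposition $w = y\,L_i^{r_i-1}\,v_{i,1}\cdots v_{i,j}\,z$ from Lemma~\ref{Lemma:w-divide} and examine what the first symbol of each block can be. The key structural fact is that each $v_{i,\ell}$ ends in a symbol $\sigma+1$ with $\sigma+1 \le k-1$, but more importantly, each $v_{i,\ell}$ is (by Definition~\ref{Def_vi}) obtained from a prefix of $L_i$ by stripping a run of $k-1$'s and incrementing; since $L_i \ne k-1$ is a Lyndon word, its first symbol is $\sigma_0 < k-1$ (if $L_i$ started with $k-1$ it would exceed all its rotations in the wrong direction — the only Lyndon word starting with $k-1$ is the length-one word $k-1$ itself). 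So $v_{i,\ell}$ starts with the first symbol of $L_i$, namely $\sigma_0 < k-1$, whenever $|v_{i,\ell}| \ge 1$, unless the stripped prefix consisted entirely of $k-1$'s — but that cannot happen because the prefix begins with $\sigma_0 < k-1$. Hence every $v_{i,\ell}$ begins with a symbol strictly less than $k-1$, and so does $L_i$ itself.

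**Forward direction** ($j>0 \Rightarrow y=(k-1)^{|y|}$): Suppose $j>0$. Then $w$ contains, starting at position $|y|+1$, the block $L_i^{r_i-1}v_{i,1}\cdots v_{i,j}$, and in particular position $|y|+1$ holds the first symbol of either $L_i$ (if $r_i>1$) or $v_{i,1}$ (if $r_i=1$), which is $\sigma_0 < k-1$. Now $y$ is a proper non-empty suffix of the Lyndon word $L_i$. The point is that $yL_i^{r_i-1}v_{i,1}\cdots$ must look like a rotation/continuation consistent with the cyclic structure, and the characterization we want is really saying: $y$ is the tail of $L_i$ that got "pushed past" a block boundary. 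I would argue that $wx$ being a suffix of $L_1\cdots L_{i-1}L_i^{r_i}$ forces $y$ to be a suffix of $L_i$ that, when we look one block earlier, must be preceded appropriately; combined with the minimality of the cover and the fact that $w$ is covered by $L_{i+1}$ and not by $L_i$, the only way $j>0$ (i.e. $w$ genuinely reaches into the $v$-blocks) is if $y$ sits at the very end of the $L_i^{r_i}$-region, which by comparing $y$ against the corresponding prefix of the next Lyndon material forces $y$ to be all $k-1$'s — essentially because $y$ must be lexicographically "maximal-looking" to be the overflow segment. The cleanest route is probably: since $w$ is a subword of $L_iL_{i+1}^{r_{i+1}}$ and $w = yL_i^{r_i-1}v_{i,1}\cdots v_{i,j}z$, matching $w$ against $L_iL_{i+1}^{r_{i+1}} = L_i^{r_i}v_{i,1}\cdots v_{i,m_i}$ (Corollary~\ref{Cor:L_i-vvv_L_i+1}) shows $y$ aligns with a suffix of the final $L_i$ in $L_i^{r_i}$; but then the symbols of $w$ just before the $v$-blocks are a suffix of $L_i$, and for $w$ to not already be a suffix of $L_1\cdots L_{i-1}L_i^{r_i}$ while still being a subword, $y$ must coincide with the suffix of $L_i$ that is "used up" — and a direct examination of where the $v_{i,1}$ block attaches (it replaces the $k-1$-run suffix of a prefix of $L_i$) shows the characters of $L_i$ immediately preceding $v_{i,1}$ are exactly a block of $k-1$'s, which is precisely $y$.

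**Reverse direction** ($y=(k-1)^{|y|} \Rightarrow j>0$): If $j=0$, then $w = yL_i^{r_i-1}z$ with $y$ a proper non-empty suffix of $L_i$ and $z$ a proper prefix of $v_{i,1}$. Here $w$ is a rotation of $L_i^{r_i}$ (as noted in the text right before the lemma). I would show that if additionally $y = (k-1)^{|y|}$, we reach a contradiction: $w$ would then be $(k-1)^{|y|}L_i^{r_i-1}z$; but $z$ is a proper prefix of $v_{i,1}$, and $v_{i,1}$ starts with $\sigma_0<k-1$ — tracking where $w$ sits as a rotation of $L_i^{r_i}$ and using that $L_i$ is Lyndon (so $L_i \ne (k-1)^{|L_i|}$ and its $k-1$-runs have bounded length), one finds $w$ either equals $(k-1)^p0^{n-p}$ (excluded) or fails to be a rotation of $L_i^{r_i}$ at all, or is forced to be an expanded Lyndon word — each possibility contradicting the hypotheses. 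Alternatively, and more simply: when $j=0$, the first $|y|$ symbols $y$ together with what follows must form a genuine rotation of $L_i^{r_i}$, and the specific structure of $v$'s shows that a $k-1$-run appearing right at a block boundary is exactly the signal that the cover has moved on to $L_{i+1}$, i.e. $j\ge 1$.

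**The main obstacle** I anticipate is making the alignment argument rigorous — precisely pinning down that the $|y|$ symbols of $L_i$ immediately preceding the $v_{i,1}$-block in the expansion $L_i^{r_i}v_{i,1}\cdots v_{i,m_i}$ are exactly $(k-1)^{|y|}$, and conversely that any occurrence of a maximal $k-1$-run at that position cannot occur within a "clean" rotation of $L_i^{r_i}$. This hinges on Definition~\ref{Def_vi}: $v_{i,1}$ is formed by deleting the trailing $k-1$-run from the length-$k_1=|L_i|$ prefix of $L_i$ (i.e. from $L_i$ itself) — so $L_i = (\text{prefix ending in }\sigma<k-1)\,(k-1)^{l}$ and $v_{i,1} = (\text{prefix})(\sigma+1)$; thus the $k-1$-run at the end of $L_i$ is what distinguishes the $j=0$ case ($y$ is some other suffix of $L_i$) from the $j>0$ case ($y$ must be exactly that trailing $k-1$-run, possibly a sub-run). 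I would isolate this as the crux and handle it by carefully writing $L_i = \alpha\,\beta$ where $\beta$ is its maximal $k-1$-suffix, and then case-splitting on whether $|y| \le |\beta|$.
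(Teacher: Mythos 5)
Your proposal correctly locates where the answer lives --- the decomposition $L_i=v\sigma(k-1)^l$, $v_{i,1}=v(\sigma+1)$, and the observation that everything hinges on whether $y$ fits inside the trailing run $(k-1)^l$ --- but the decisive step is never carried out, and you flag it yourself as ``the main obstacle.'' The missing ingredient is a one-line length count, which is exactly what the paper's proof uses: since $|w|=n=|L_i^{r_i}|$, the decomposition $w=yL_i^{r_i-1}v_{i,1}\cdots v_{i,j}z$ gives $|y|+|v_{i,1}\cdots v_{i,j}z|=|L_i|$. If $j>0$, then $|y|\le |L_i|-|v_{i,1}|=l$, and since $y$ is a suffix of $L_i=v\sigma(k-1)^l$ it is a suffix of $(k-1)^l$, i.e.\ $y=(k-1)^{|y|}$. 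If $j=0$, the same count gives $|y|+|z|=|L_i|$ with $z$ a proper prefix of $v_{i,1}$ (hence a prefix of $v$), so $L_i=zy$ and $y=u\sigma(k-1)^l$ for some $u$; as $\sigma<k-1$, $y\neq(k-1)^{|y|}$. Your alignment narrative in the forward direction, and your proposed case split on $|y|\le|\beta|$, gesture at this dichotomy but never rule out the bad case ($|y|>l$ when $j>0$), which is precisely what the count does; without it the argument is incomplete.

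Two of the auxiliary claims you lean on are also incorrect or off-target. The assertion that every $v_{i,\ell}$ begins with a symbol strictly less than $k-1$ is false: in the paper's own example ($k=2$, $L_i=0010111$) one has $v_{i,3}=1=k-1$, and already $v_{i,1}$ can start with $k-1$ when $L_i=\sigma_0(k-1)^l$ with $\sigma_0=k-2$. And in the reverse direction, when $j=0$ the word $w=yL_i^{r_i-1}z$ with $L_i=zy$ is \emph{always} a rotation of $L_i^{r_i}$, so no contradiction can come from $w$ ``failing to be a rotation'' or being forced into one of the excluded shapes; the contradiction (under the assumption $y=(k-1)^{|y|}$) comes from the fact that $z$ being a proper prefix of $v_{i,1}$ makes $y$ long enough that it must contain the symbol $\sigma<k-1$.
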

\begin{proof}
Write $\vv {i} 1=v(\sigma+1)$. Thus, $L_{i}=v\sigma(k-1)^l$ where $l=|L_{i}|-|\vv {i} 1|$. Assume that $j>0$ and observe that, since $|w|=n=|L_{i}^{r_{i}}|$, it follows that $|y|+|\vv {i} 1|\leq |L_{i}|$. Therefore, $|y|\leq l$. Hence,  since $y$ is a suffix of $L_{i}$, it follows that $y=(k-1)^{|y|}$.

For the other direction, assume that $j=0$. Hence, $w=yL_{i}^{r_{i}-1}z$ where $z$ is a proper prefix of $\vv {i} 1$. Write $\vv {i} 1=v(\sigma +1)= zu(\sigma+1)$, and note that $z$ is also a prefix of $L_{i}$. As $|w|=n=|yL_{i}^{r_{i}}z|$, we get that $L_{i}=zy$ and hence, $y=u\sigma(k-1)^l$. Since $\sigma<\sigma+1\leq k-1$, the claim follows.
\end{proof}

At first glance, the previous lemma does not seem applicable since we aim to compute $|y|$, but we have to know $|y|$ to determine if $y=(k-1)^{|y|}$. In fact, Lemma~\ref{Lemma_j_0} actually serves as an intermediate property, which is equivalent to another property that concerns the structure of $w$, and can be computed efficiently.

\begin{definition}
An $n$-word, $w$, is said to be almost-Lyndon, if $w={(k-1)}^lu$, and $u(k-1)^l$ is an expanded Lyndon word.  
\end{definition}

\begin{lemma}
\label{Lemma-almost_Lyndon_j_not_0}
Assume that an $n$-word, $w\neq (k-1)^p0^{n-p}$, is not an expanded Lyndon word, and $w$ is covered by $L_{i+1}$. Write $w=yL_{i}^{r_{i}-1}v_{{i},1}\cdots v_{{i},j}z$ as in Lemma~\ref{Lemma:w-divide} (possibly, $r_i=1$ and then $L_i^{r_i-1}=\varepsilon$). Thus, $w$ is almost-Lyndon if and only if $j > 0$.
\end{lemma}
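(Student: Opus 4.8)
The plan is to combine Lemma~\ref{Lemma_j_0}, which characterizes $j>0$ by the condition $y=(k-1)^{|y|}$, with an analysis of what it means for $w$ to be almost-Lyndon. Throughout, write $w=yL_i^{r_i-1}\vv i1\cdots\vv ijz$ as in Lemma~\ref{Lemma:w-divide}, so $y$ is a proper non-empty suffix of $L_i$ and $z$ is a proper prefix of $\vv i{j+1}$, and recall from Corollary~\ref{Cor:L_i-vvv_L_i+1} that $L_i^{r_i}\vv i1\cdots\vv i{m_i}=L_iL_{i+1}^{r_{i+1}}$, so $w$ is a subword (in fact a factor occurring with the indicated left offset $|y|$) of $L_iL_{i+1}^{r_{i+1}}$ that is neither a prefix nor a suffix of it.

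First I would prove the direction ``$j>0 \Rightarrow w$ is almost-Lyndon''. By Lemma~\ref{Lemma_j_0}, $j>0$ gives $y=(k-1)^{|y|}$; set $l=|y|$ and $u=L_i^{r_i-1}\vv i1\cdots\vv ijz$, so that $w=(k-1)^lu$. I must show $u(k-1)^l$ is an expanded Lyndon word. The point is that $w$ arises by reading $L_iL_{i+1}^{r_{i+1}}$ starting $l$ symbols into the block $L_i$; since $y=(k-1)^l$ is the suffix of $L_i$, and $L_i$ ends in $(k-1)^{l'}$ for some $l'\ge l$ (because $\vv i1$ is obtained from $L_i$ by stripping a $(k-1)$-run and incrementing, and $|y|\le l'$ as shown in the proof of Lemma~\ref{Lemma_j_0}), rotating $w$ left by $l$ positions — i.e.\ forming $u(k-1)^l$ — produces exactly the $n$-window of $L_iL_{i+1}^{r_{i+1}}$ aligned at the start of the second occurrence of $L_{i+1}$'s contribution; more precisely $u(k-1)^l = L_{i+1}^{r_{i+1}}$ read from its natural start, which is $L_{i+1}^{r_{i+1}}$, an expanded Lyndon word. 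I would make this precise by a length bookkeeping argument: $|y|+|L_i^{r_i-1}\vv i1\cdots\vv i{m_i}| = |y| + |L_{i+1}^{r_{i+1}}| $ and $|w|=n=|L_{i+1}^{r_{i+1}}|$, so the cyclic shift of $w$ that deletes the prefix $(k-1)^l$ and re-appends it lands on $L_{i+1}^{r_{i+1}}$ exactly.

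Next I would prove the contrapositive of the other direction: if $j=0$, then $w$ is not almost-Lyndon. When $j=0$, the proof of Lemma~\ref{Lemma_j_0} shows $L_i=zy$ with $y=u\sigma(k-1)^{l}$ where $\sigma+1\le k-1$, so $y$ (hence $w$) does \emph{not} begin with $k-1$ unless $y$ itself is empty — but $y$ is non-empty, and its first symbol is the first symbol of $u\sigma(k-1)^l$. If that first symbol is not $k-1$, then $w=(k-1)^0 w$ is the only way to write $w=(k-1)^\ell u'$ with a maximal $(k-1)$-prefix, giving $\ell=0$, so almost-Lyndon would force $w=u'(k-1)^0=w$ itself to be an expanded Lyndon word, contradicting the hypothesis. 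If instead $y$ does start with $k-1$, I would argue that stripping the maximal $(k-1)$-prefix of $w$ and re-appending it yields a cyclic rotation of $L_iL_{i+1}^{r_{i+1}}$'s window that is genuinely a non-trivial rotation of $w$ not equal to any $L_j^{r_j}$ — because $w$ sits strictly inside $L_iL_{i+1}^{r_{i+1}}$ spanning the boundary $L_i|L_{i+1}$ in a way that the two Lyndon roots $L_i\ne L_{i+1}$ prevent it from being periodic with an expanded-Lyndon period; this is the same phenomenon already used implicitly in Lemma~\ref{Lemma:w-divide}.

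I expect the main obstacle to be the careful alignment/length bookkeeping in both directions: precisely identifying where the $n$-window $w$ sits inside $L_iL_{i+1}^{r_{i+1}}$ and verifying that the specific cyclic rotation appearing in the definition of almost-Lyndon (moving the $(k-1)$-prefix to the back) is exactly the rotation that either does, or provably cannot, coincide with an expanded Lyndon word. Handling the edge case $r_i=1$ (so $L_i^{r_i-1}=\varepsilon$ and $w=y\vv i1\cdots\vv ijz$) needs a brief separate check but follows the same pattern, since Corollary~\ref{Cor:L_i-vvv_L_i+1} and Lemmas~\ref{Lemma_vi_case2},~\ref{Lemma_vi_case3} already cover the structure of $\vv i1\cdots\vv i{m_i}$ in that regime. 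Once the window placement is pinned down, both implications reduce to observing that $w$ is a rotation of an expanded Lyndon word \emph{via the $(k-1)$-prefix rotation} precisely when the prefix $y$ of the block decomposition is an all-$(k-1)$ suffix of $L_i$, which Lemma~\ref{Lemma_j_0} equates with $j>0$.
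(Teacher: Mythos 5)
Your reduction of both directions to Lemma~\ref{Lemma_j_0} is the right starting point, but the key step of your forward direction is false. You claim that when $y=(k-1)^l$, the rotation $u(k-1)^l$ of $w=(k-1)^l u$ coincides with the length-$n$ window of $L_iL_{i+1}^{r_{i+1}}$ shifted by $l$, i.e.\ that $u(k-1)^l=L_{i+1}^{r_{i+1}}$. Shifting the window appends the symbols that actually follow in the text, whereas rotating $w$ appends $y=(k-1)^l$; these agree only if the $l$ symbols of $L_{i+1}^{r_{i+1}}$ that follow the prefix $u$ all happen to equal $k-1$, which fails in general. Concretely, take the paper's running example $n=7$, $k=2$, $L_i=0010111$, $L_{i+1}=0011011$, and $w=1110011=y\,v_{i,1}$ with $y=111$, $j=1$, $z=\varepsilon$: here $u(k-1)^3=0011111\neq 0011011=L_{i+1}^{r_{i+1}}$. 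The lemma still holds ($0011111$ is an expanded Lyndon word), but not for your reason; the paper instead writes $L_{i+1}^{r_{i+1}}=L_i^{r_i-1}v_{i,1}\cdots v_{i,j}zx$ with $|x|=|y|$ and invokes an auxiliary fact (Lemma~6 of~\cite{amram2019efficient}: if $u\sigma(k-1)^l$ is an expanded Lyndon word then so is $u(k-1)^{l+1}$) $|y|$ times to replace $x$ by $(k-1)^{|y|}$ while preserving expanded-Lyndon-ness. Without an ingredient of that kind your forward direction does not go through, and the ``length bookkeeping'' you propose would only confirm that the two words have equal length, not that they are equal.

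The converse direction also has a gap in its main subcase. When $j=0$ and $y$ begins with $k-1$, you appeal to $w$ ``spanning the boundary $L_i\mid L_{i+1}$'' and to $L_i\neq L_{i+1}$ preventing periodicity; but for $j=0$ the word $w=yL_i^{r_i-1}z$ with $L_i=zy$ is exactly a rotation of $L_i^{r_i}$ and does not involve $L_{i+1}$ at all, and non-periodicity is beside the point since an expanded Lyndon word need not be periodic ($r$ may be $1$). The argument you need is lexicographic: writing $y=(k-1)^l y'$ with $y'\neq\varepsilon$ not starting with $k-1$, the word $y'L_i^{r_i-1}z(k-1)^l$ is a rotation of $L_i^{r_i}$ that is strictly larger than $L_i^{r_i}$ (because the proper suffix $y'$ of the Lyndon word $L_i$ is strictly larger than the prefix of $L_i$ of the same length), and an expanded Lyndon word must be minimal among its rotations; hence this rotation is not an expanded Lyndon word and $w$ is not almost-Lyndon. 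Your first subcase ($y$ not starting with $k-1$, so the maximal $(k-1)$-prefix of $w$ is empty and almost-Lyndon would force $w$ itself to be an expanded Lyndon word) is correct.
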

\begin{proof}

Before we prove the lemma, we mention a simple claim whose strait-forward proof can be found in~\cite{amram2019efficient} (Lemma 6).
\begin{quote}

Claim. Let $u\sigma(k-1)^l$ be an expanded Lyndon word. Then, $u(k-1)^{l+1}$ is an expanded Lyndon word. 
\end{quote}

We turn now to prove the lemma. First, assume that $j>0$ and hence, by Lemma~\ref{Lemma_j_0}, $y=(k-1)^{|y|}$. 
By Lemma~\ref{Lemma:L_i-vvv_L_i+1},   $L_{i}^{r_{i}-1}\vv {i} 1 \cdots \vv {i} j \cdots \vv {i} {m_i}=L_{i+1}^{r_{i+1}}$. 
Since $z$ is a prefix of $\vv {i} {j+1}$, for some $x$ with $|x|=|y|$, it holds that $L_{i+1}^{r_{i+1}}=L_{i}^{r_{i}-1}\vv {i} 1\cdots \vv {i} j zx$. 
As $L_{i+1}^{r_{i+1}}$ is an expanded Lyndon word, by applying the mentioned claim $|y|$ times, we get that $L_{i}^{r_{i}-1}\vv {i} 1 \cdots \vv {i} jz(k-1)^{|y|}=L_{i}^{r_{i}-1}\vv {i} 1 \cdots \vv {i} jzy$ is also an expanded Lyndon word. Therefore, $w=yL_{i}^{r_{i}-1}\vv {i} 1 \cdots \vv {i} j z=(k-1)^{|y|}L_{i}^{r_{i}-1}\vv {i} 1 \cdots \vv {i} j z$ is almost-Lyndon.

For proving the other direction of the equivalence assume that $j=0$, and we shall prove that $w$ is not almost-Lyndon. Since $j=0$, $w=yL_{i}^{r_{i}-1}z$ where $L_{i}=zy$. Moreover, since $w$ is not covered by $L_{i}$, $z\neq \varepsilon$. By Lemma~\ref{Lemma_j_0}, $y \neq (k-1)^{|y|}$ thus $y=(k-1)^ly'$ where $y'\neq \varepsilon$ does not start with $k-1$. Therefore, $L_{i}^{r_{i}}=L_{i}^{r_{i}-1}z(k-1)^ly'$, and we claim that $L_{i}^{r_{i}}<y'L_{i}^{r_{i}-1}z(k-1)^l$  which proves that the latter is not an expanded Lyndon word. Indeed, this inequality holds since $y'$ is a proper suffix of $L_{i}$ and $L_{i}$ is a Lyndon word. This guarantees that $y'$ is strictly larger than the prefix of $L_{i}$ of length $|y'|$. Clearly, an expanded Lyndon word cannot be strictly smaller than one of its non-trivial rotations thus, as said, this inequality proves that $y'L_{i}^{r_{i}-1}z(k-1)^l$ is not an expanded Lyndon word. Therefore, $w=(k-1)^ly'L_{i}^{r_{i}-1}z$ is not almost-Lyndon.
\end{proof}

So far, we identified a structural property of $w$ which testifies if $j=0$ or not. We turn now to achieve our second and third goals, which are finding a Lyndon word $\class L$ such that $L_{i}<\class L\leq L_{i+1}$, and computing $|y|$, when $j>0$. For these purposes, we use the classic result by Chen, Fox and Lyndon~\cite{CFL}. The authors of~\cite{CFL} (see also~\cite{duval1983factorizing}) proved that every non-empty word, $w$, can be uniquely factorized into Lyndon words: $w=x_1|x_2|\dots |x_l$ such that $x_1\geq x_2\geq \dots \geq x_l$. We name this decomposition: the $\CFL$-factorization of $w$. In the next lemma we show the connection between the \CFL-factorizing of an $n$-word, $w$, and the structure of $w$ as characterized in Lemma~\ref{Lemma:w-divide}.

\begin{lemma}
\label{Lemma_CFL}
Assume that an $n$-word,  $w\neq (k-1)^p0^{n-p}$ is not an expanded Lyndon word, and $w$ is covered by $L_{i+1}$. Write $w=yL_{i}^{r_{i}-1}v_{{i},1}\cdots v_{{i},j}z$ as in Lemma~\ref{Lemma:w-divide}, and assume also that $j>0$. Let $y=y_1|\cdots |y_l$ be the \CFL-factorization of $y$ and let $z=z_1|\cdots |z_m$ be the \CFL-factorization of $z$.

\begin{enumerate}
    
    \item If $r_{i}=1$ and $\vv {i} 1=\cdots = \vv {i} j$, then the \CFL-factorization of $w$ is $$w=y_1|\cdots|y_l|v_{{i},1}|\cdots |v_{{i},j}|z_1|\cdots|z_m$$
    
    \item Otherwise, the \CFL-factorization of $w$ is $$w=y_1|\cdots|y_l|L_{i}^{r_{i}-1}v_{{i},1}\cdots v_{{i},j}|z_1|\cdots|z_m$$

\end{enumerate}
\end{lemma}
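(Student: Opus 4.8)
The plan is to verify that the two candidate decompositions given in the statement are genuinely CFL-factorizations, by checking the two defining properties: each listed block is a Lyndon word, and the blocks are non-increasing from left to right. Since the CFL-factorization is unique, establishing these two facts for the displayed sequence of words suffices.

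First I would handle the building blocks coming from $y$ and $z$: the $y_s$ are Lyndon by definition of the CFL-factorization of $y$, and likewise the $z_s$. The middle blocks are the interesting part. In case 2 I would invoke item 1 of Corollary~\ref{cor:main-cor-of-5.2.1} to see that $L_i^{r_i-1}\vv i1\cdots\vv ij$ is a Lyndon word (using that it sits strictly between $L_i$ and $L_{i+1}$ — here one needs that it is actually a Lyndon word, which is exactly what Lemmas~\ref{Lemma_vi_case1} and~\ref{Lemma_vi_case3} produce as a sequence of consecutive Lyndon words). In case 1, where $r_i=1$ and $\vv i1=\cdots=\vv ij$, the middle blocks are just $j$ copies of the single Lyndon word $\vv i1=L_{i+1}$ (by Lemma~\ref{Lemma_vi_case2}), so each is trivially Lyndon.

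Next I would verify the ordering. The inequalities $y_1\ge\cdots\ge y_l$ and $z_1\ge\cdots\ge z_m$ are inherited from the CFL-factorizations of $y$ and $z$. So the work is at the three seams: $y_l$ versus the first middle block, the last middle block versus $z_1$, and (in case 1) the middle blocks among themselves — the last is immediate since they are all equal. For the seam $y_l \ge (\text{first middle block})$: recall from Lemma~\ref{Lemma_j_0} that $j>0$ forces $y=(k-1)^{|y|}$, so $y_l = k-1$, which is the lexicographically largest possible symbol and hence $\ge$ any word; this settles both cases at once. For the seam $(\text{last middle block}) \ge z_1$: the last middle block is $L_{i+1}^{r_{i+1}}$ truncated, more precisely it ends in a prefix situation with $z$ a proper prefix of $\vv i{j+1}$; the cleanest route is to use that $L_{i+1}$ is a Lyndon word and that $z$ is a proper prefix of $\vv i{j+1}$, which is itself a prefix-type fragment of $L_{i+1}$, so any Lyndon factor $z_1$ of $z$ is $\le$ a suitable rotation/prefix comparison against $L_{i+1}$ (and against $\vv i1\cdots\vv ij$ as a whole). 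This comparison should be reduced to the observation that a Lyndon word is smaller than all its proper suffixes while a proper prefix of it compares appropriately, mirroring the argument already used in the proof of Lemma~\ref{Lemma-almost_Lyndon_j_not_0}.

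The main obstacle I anticipate is the middle-to-$z$ seam in case~2, i.e.\ showing $L_i^{r_i-1}\vv i1\cdots\vv ij \ge z_1$ cleanly: one must compare a long word that agrees with $L_{i+1}^{r_{i+1}}$ on a long prefix against a Lyndon factor of the immediately-following fragment $z$, and argue the comparison goes the right way. I expect this to follow by noting that $w = y\,L_i^{r_i-1}\vv i1\cdots\vv ij\, z$ has the displayed blocks, that $L_i^{r_i-1}\vv i1\cdots\vv ij\, z (k-1)^{|y|}$ is (a rotation of) the expanded Lyndon word $L_{i+1}^{r_{i+1}}$ by Lemma~\ref{Lemma-almost_Lyndon_j_not_0}'s argument, and then using minimality of a Lyndon word under rotation to force the block after $z_1,\dots,z_m$ to be no larger. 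A secondary subtlety is making sure that in case~2 one does \emph{not} accidentally split $L_i^{r_i-1}\vv i1\cdots\vv ij$ further — this is exactly why the hypothesis "$r_i>1$ or $\vv i1\neq\vv ij$" is needed, as it is precisely the condition under which Lemmas~\ref{Lemma_vi_case1} and~\ref{Lemma_vi_case3} guarantee this word is a single Lyndon word rather than a power, so I would point to those lemmas explicitly at that step.
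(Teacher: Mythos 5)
Your overall architecture is the same as the paper's: exhibit the displayed decomposition as a factorization into Lyndon words, verify the non-increasing condition at the seams, and invoke uniqueness of the \CFL-factorization; the Lyndon-ness of the middle block via Lemmas~\ref{Lemma_vi_case1}--\ref{Lemma_vi_case3} and the $y$-seam via $y=(k-1)^{|y|}$ (Lemma~\ref{Lemma_j_0}) also match the paper. But the step you yourself flag as "the main obstacle" --- showing $L_i^{r_i-1}v_{i,1}\cdots v_{i,j}\geq z_1$ --- is left unproved, and the route you sketch does not deliver it. First, the padded word $L_i^{r_i-1}v_{i,1}\cdots v_{i,j}\,z\,(k-1)^{|y|}$ is, by the claim used in the proof of Lemma~\ref{Lemma-almost_Lyndon_j_not_0}, \emph{some} expanded Lyndon word, but in general it is neither $L_{i+1}^{r_{i+1}}$ nor a rotation of it (e.g.\ with $L_i=0010111$, $j=1$, $z=\varepsilon$, $|y|=3$ one gets $0011111\neq L_{i+1}=0011011$). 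Second, and more importantly, rotation-minimality of that expanded Lyndon word yields inequalities of the form "the rotation beginning with the middle block is $\leq$ the rotation beginning with $z$", which is the \emph{opposite} comparison from the one you need ($z_1\leq$ middle block); nothing in the sketch converts it into the required seam inequality. The missing idea is much simpler and is what the paper uses: since $z$ is a proper prefix of $v_{i,j+1}$, it is in fact a prefix of $L_i$, hence so is $z_1$; if $r_i>1$ then $z_1$ is a proper prefix of $L_i^{r_i-1}v_{i,1}\cdots v_{i,j}$ and therefore strictly smaller, and if $r_i=1$ then $v_{i,1}>L_i\geq z_1$ gives the inequality.

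Two smaller inaccuracies to fix. In case 1 you justify Lyndon-ness of the middle blocks by "$v_{i,1}=L_{i+1}$ by Lemma~\ref{Lemma_vi_case2}", but that lemma applies only when $|v_{i,1}|$ divides $n$; when $r_i=1$, $|v_{i,1}|\nmid n$ and $j\leq j_0$ (Lemma~\ref{Lemma_vi_case3}) the blocks are all equal to $v_{i,1}$ yet $v_{i,1}\neq L_{i+1}$ --- what you actually need, and what does hold, is only that $v_{i,1}=\Duval(L_i)$ is a Lyndon word. Also, "$k-1$ is $\geq$ any word" is false as a general statement (e.g.\ $(k-1)(k-1)>k-1$); the $y$-seam argument works because the middle block begins with a symbol strictly smaller than $k-1$ (its first symbol comes from $L_i\neq k-1$, respectively from $v_{i,1}$ in the non-degenerate subcase), and this should be said.
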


\begin{proof} 
 The two statements are proved by similar arguments thus we prove only the second claim. We leave for the reader to observe that $L_{i}^{r_{i}-1}\vv {i} 1 \cdots \vv {i} j$ is indeed a Lyndon word. When $r_{i}>1$, this is implied by Lemma~\ref{Lemma_vi_case1}, and when $r_{i}=1$, this follows from Lemmas~\ref{Lemma_vi_case2} and~\ref{Lemma_vi_case3} since in this case, $\neg(\vv {i} 1=\cdots =\vv {i} j)$.
 
 Therefore,  $y_1|\cdots|y_l|L_{i}^{r_{i}-1}\vv {i} 1 \cdots \vv {i} j |z_1|\cdots|z_m $ is a factorization of $w$ into Lyndon words. Since $y_1|\cdots|y_l$ and $z_1|\cdots|z_m$ are the $\cfl$-factorizations of $y$ and $z$, respectively, it is left to prove:
\begin{itemize}
    \item[(a)] $y_l\geq L_{i}^{r_{i}-1}\vv {i} 1\cdots \vv {i} j$.
    \item[(b)]  $L_{i}^{r_{i}-1}\vv {i} 1\cdots \vv {i} j\geq z_1$.
\end{itemize}
Since $j>0$, $y=(k-1)^{|y|}$ thus $y_l=k-1$, which proves (a). Now, (b) holds since $z_1$ is prefix of $L_{i}$. Indeed, if $r_{i}>1$, then $L_{i}^{r_{i}-1}\vv {i} 1 \cdots \vv {i} j> z_1$ since $z_1$ is a prefix of $L_{i}^{r_{i}-1}\vv {i} 1 \cdots \vv {i} j$. In addition, when $r_{i}=1$, since $z_1$ is a prefix of $L_{i}$, by the definition of $\vv {i} 1$, $\vv {i} 1>L_{i}>z_1$ thus $\vv {i} 1 \cdots \vv {i} j>z_1$, as required.
\end{proof}

By the previous lemma, by  Lemma~\ref{Lemma_j_0} and by Corollary~\ref{cor:main-cor-of-5.2.1}, we conclude the following consequence, which achieves the two remaining goals.

\begin{corollary}
\label{cor:main-cor-of-5.2.2}
Assume that an $n$-word,  $w\neq (k-1)^p0^{n-p}$, is not an expanded Lyndon word, and that $w$ is covered by $L_{i+1}$. Write $w=yL_{i}^{r_{i}-1}v_{{i},1}\cdots v_{{i},j}z$ as in Lemma~\ref{Lemma:w-divide}, assume that $j>0$, and let $x_1|\cdots|x_l$ be the $\CFL$-factorization of $w$. If $x_t$ is the first word in this factorization, different from $k-1$, then:
\begin{enumerate}
    \item $L_{i}<x_t\leq L_{i+1}$.
    \item $|y|=t-1$.
\end{enumerate}
\end{corollary}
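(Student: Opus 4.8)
The plan is to read off the structure of the $\cfl$-factorization from Lemma~\ref{Lemma_CFL} and then identify, in both cases of that lemma, exactly which factor is $x_t$, the first one different from $k-1$. By Lemma~\ref{Lemma_j_0}, since $j>0$ we have $y=(k-1)^{|y|}$, and so the $\cfl$-factorization $y=y_1|\cdots|y_l$ is simply $l=|y|$ copies of the single-letter Lyndon word $k-1$; in particular the first $|y|$ factors of $w$ are all equal to $k-1$. The first factor of $w$ that is \emph{not} $k-1$ is therefore the $(|y|+1)$-st factor, which (in both cases of Lemma~\ref{Lemma_CFL}) is either $\vv i1$ (case~1) or $L_i^{r_i-1}\vv i1\cdots\vv ij$ (case~2). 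This immediately gives item~2: $t-1=|y|$, i.e.\ $|y|=t-1$. One small check is that this factor really is different from $k-1$: in case~1 it is $\vv i1=v(\sigma+1)$ with $\sigma+1\le k-1$, and $|\vv i1|=|L_{i+1}|$ which divides $n>1$, so $\vv i1$ has length $\ge 1$ and, being a Lyndon word of length dividing $n$, it cannot be the single letter $k-1$ (a Lyndon word other than $k-1$ cannot equal $k-1$); in case~2 the factor has $L_i$ (a non-$(k-1)$ letter or longer word) as a prefix, so it is not $(k-1)$ either.

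For item~1, I would invoke Corollary~\ref{cor:main-cor-of-5.2.1}. In case~2 of Lemma~\ref{Lemma_CFL} the word $x_t=L_i^{r_i-1}\vv i1\cdots\vv ij$ is exactly a word of the form $L_i^{r_i-1}\vv i1\cdots\vv ij$ with $1\le j\le m_i$, so the first item of Corollary~\ref{cor:main-cor-of-5.2.1} applies verbatim and yields $L_i<x_t\le L_{i+1}$. In case~1, where $r_i=1$ and $\vv i1=\cdots=\vv ij$, we have $x_t=\vv i1$; here $L_i^{r_i-1}\vv i1\cdots\vv ij={\vv i1}^{\,j}$, which is not itself an $n$-word, so Corollary~\ref{cor:main-cor-of-5.2.1} does not literally apply, but $\vv i1$ is still a Lyndon word (this is the content of Lemma~\ref{Lemma_vi_case2} together with $|\vv i1|\mid n$, giving $\vv i1=L_{i+1}$), and by definition of $\vv i1$ we have $\vv i1>L_i$. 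Since in this case $\vv i1=L_{i+1}$ we get $L_i<x_t=L_{i+1}$, so again $L_i<x_t\le L_{i+1}$.

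The only place that needs a little care — and the main (very mild) obstacle — is handling case~1 of Lemma~\ref{Lemma_CFL} uniformly with case~2, since there the ``$L_i^{r_i-1}v_{i,1}\cdots v_{i,j}$'' block of Corollary~\ref{cor:main-cor-of-5.2.1}(1) degenerates and one must instead argue directly that $\vv i1$ is a Lyndon word with $L_i<\vv i1=L_{i+1}$. Once that case split is made, the argument is just bookkeeping on the factorization produced by Lemma~\ref{Lemma_CFL}: count the leading $k-1$'s (there are exactly $|y|$ of them by Lemma~\ref{Lemma_j_0}), observe the next factor is the claimed Lyndon word, and apply Corollary~\ref{cor:main-cor-of-5.2.1}. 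I would also note explicitly, for cleanliness, that $x_t$ exists — i.e.\ not all factors of $w$ equal $k-1$ — because $w$ is not $(k-1)^n$ (indeed $w\neq(k-1)^p0^{n-p}$ and $w$ is not an expanded Lyndon word), so the factorization has at least one factor different from $k-1$.
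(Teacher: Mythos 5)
Your overall route is the paper's own: the paper proves this corollary simply by combining Lemma~\ref{Lemma_CFL}, Lemma~\ref{Lemma_j_0} and Corollary~\ref{cor:main-cor-of-5.2.1}, and your counting argument for item~2 (the first $|y|$ factors are copies of $k-1$, the $(|y|+1)$-st is the distinguished Lyndon factor) and your treatment of case~2 of Lemma~\ref{Lemma_CFL} match that derivation. However, your handling of case~1 contains a genuine error. Case~1 of Lemma~\ref{Lemma_CFL} is ``$r_i=1$ and $v_{i,1}=\cdots=v_{i,j}$''; it does \emph{not} imply $|v_{i,1}|\mid n$, so you may not invoke Lemma~\ref{Lemma_vi_case2} and conclude $v_{i,1}=L_{i+1}$. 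By Lemma~\ref{Lemma_vi_case3}, the situation $r_i=1$, $|v_{i,1}|\nmid n$ also yields $v_{i,1}=\cdots=v_{i,j_0}$, so case~1 can occur with $v_{i,1}\neq L_{i+1}$. Concretely, for $n=7$, $k=2$, $L_i=0010111$ (the paper's own example) and $w=1110011$, the decomposition of Lemma~\ref{Lemma:w-divide} has $j=1$, the $\CFL$-factorization is $1|1|1|0011$, and $x_t=v_{i,1}=0011$, whereas $L_{i+1}=0011011$; your claimed identity $x_t=L_{i+1}$ fails (the desired inequality $L_i<x_t\leq L_{i+1}$ of course still holds, here because $0011$ is a proper prefix of $0011011$).

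The fix is the step you explicitly dismissed: Corollary~\ref{cor:main-cor-of-5.2.1}(1) does not require the word to be an $n$-word, only that $L_i^{r_i-1}v_{i,1}\cdots v_{i,j'}$ be a Lyndon word for some $1\leq j'\leq m_i$. In case~1 take $j'=1$: since $r_i=1$, this word is $v_{i,1}=\Duval(L_i)$, which is a Lyndon word, and the corollary gives $L_i<v_{i,1}\leq L_{i+1}$ directly, with no need to know whether $|v_{i,1}|$ divides $n$. (What is true is that the corollary cannot be applied to the full product $v_{i,1}^{\,j}$ for $j>1$, since that word is periodic and hence not Lyndon -- but you only need it for the single factor $x_t=v_{i,1}$.) A smaller remark: your check that $x_t\neq k-1$ in case~1 also leans on the false identity $v_{i,1}=L_{i+1}$; the clean argument is that $v_{i,1}=k-1$ would force $L_i=(k-2)(k-1)^{n-1}$, and then the decomposition of Lemma~\ref{Lemma:w-divide} forces $w=(k-1)^n$, which is excluded by the hypothesis that $w$ is not an expanded Lyndon word.
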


\subsubsection{A Linear Time $FTG$ Algorithm}
\label{SubSubSection-FTG-Algo}

We are finally ready to present our linear time $FTG$ algorithm. In addition to the algorithms described earlier, we use the following procedures, all can be computed in linear time:
\begin{description}
    
    \item[${\sf find\_root}(v)$.]  The root of a non-empty word $v$ is its shortest non-empty prefix $x$ such that  $v=x^t$ for some integer $t$. The word $x$ can be found by searching the first occurrence of $v$ within $v'v$, where $v'$ is the $(|v|-1)$-suffix of $v$. This can be done, for example, by invoking the KMP-algorithm~\cite{knuth1977fast} which runs in linear time. See~\cite{gawrychowski2008finding} for a presentation of this technique. Extensions and a detailed discussion can be found in~\cite[Chapter~8]{lothaire2005applied}.

    \item[${\sf find\_suffix}(u,v)$.] This procedure receives two words, $u,v$, where $u$ is a subword of $v$. The procedure returns a word $x$ such that $ux$ is a suffix of $v$. This procedure can be implemented by modifying the KMP-algorithm. The procedure runs in $O(|u|+|v|)$ time. We apply this procedure only on inputs of size at most $3n$ thus we refer to this procedure as a linear time procedure.
    
    \item[${\sf find\_min\_rot}(w)$.] Returns the lexicographically minimal rotation of $w$. Can be implemented by invoking Booth's algorithm~\cite{booth1980lexicographically}, or Shiloach's Algorithm~\cite{shiloach81}, both runs in linear time. 
    
    \item[${\sf is\_Lyndon}(L)$.] Tests if $L$ is a Lyndon word. $L$ is a Lyndon word if and only if it is equal to its root (tested by comparing $L$ with ${\sf find\_root}(L)$) and it is equal to its minimal rotation (tested by comparing $L$ with ${\sf find\_min\_rot}(L)$). 
    
    \item[${\sf CFL}(w)$.] Returns the \CFL-factorization of $w$. See \cite{duval1983factorizing} for details and runtime analysis.
    
    \item[${\sf is\_almost\_Lyndon}(w)$.] 
    
    Given an $n$-word, $w$, this procedure checks if it is almost-Lyndon. If $w=(k-1)^l u$, where $u$ does not start with $k-1$, we test if $w'=u(k-1)^l$ is an expanded Lyndon word. This occurs if and only if ${\sf is\_Lyndon}({\sf find\_root}(w'))$ holds.
\end{description}

\begin{algorithm}[!ht]
\caption{{\sf filling\_the\_gap}}
\textit{Input:} a word $w$ of length $n$ \\
\textit{Output:} $\ftg(w)$

\hrulefill

\begin{algorithmic}[1]

\If {$w=(k-1)^n$} \hfill // 1-7: edge cases, $w=L_i^{r_i}$ 
\State return $(k-1,\varepsilon)$\EndIf

\State $L\gets{\sf find\_root}(w)$

\If {${\sf is\_Lyndon}(L)$} 
\State return $\ctf(L,\varepsilon)$ 

\EndIf\vspace{2mm}

\If {${\sf is\_almost\_Lyndon}(w)$} \hfill // 8-10: test for $j>0$ 
\State \textbf{goto} \ref{mark}\EndIf\vspace{2mm}

\State $u\gets{\sf find\_min\_rot}(w)$ \hfill // 11-16: if $j=0$

\State $L\gets {\sf find\_root}(u)$

\State $L'\gets \lnext(L)$

\State $r'\gets n/|L'|$.

\State $x\gets {\sf find\_suffix}(w,LL'^{r'})$

\State return $\ctf(L',x)$\vspace{2mm}

\State\label{mark} $x_1|\cdots|x_l\gets {\sf CFL}(w)$ \hfill // 17-25: if $j>0$

\State $t\gets$ first integer such that $x_t\neq k-1$ 

\State $L\gets x_t$

\If {$|L|\nmid n$} 
    \State $L\gets \lnext(L)$ \EndIf

\State $r\gets n/|L|$

\State $ x \leftarrow (t-1)$-suffix of $L^r$ 
\State return $\ctf(L,x)$

\end{algorithmic}

\label{Algo:filling_the_gap}
\end{algorithm}

\begin{proposition}
\label{Prop:main-prop-of-sec5}
Algorithm~\ref{Algo:filling_the_gap} computes $FTG(w)$ in $O(n)$ time.
\end{proposition}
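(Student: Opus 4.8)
The argument splits along the structure of Algorithm~\ref{Algo:filling_the_gap}'s branches, and in each branch we must establish two things: correctness (the returned pair is indeed $\ftg(w)$) and linear running time. The running-time bound is the easy half: every subroutine invoked (${\sf find\_root}$, ${\sf is\_Lyndon}$, ${\sf is\_almost\_Lyndon}$, ${\sf find\_min\_rot}$, ${\sf CFL}$, ${\sf find\_suffix}$ on inputs of length at most $3n$, and $\ctf$) runs in $O(n)$ time by the discussion in Section~\ref{SubSubSection-FTG-Algo} and by Proposition~\ref{prop_next_n_Lyndon} (each branch calls $\Lnext$ at most once), and each branch performs only a constant number of such calls plus $O(n)$ bookkeeping. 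So the bulk of the proof is correctness, branch by branch.

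First I would dispatch the edge cases in lines 1--7. If $w=(k-1)^n$, then $w=L_{N(n,k)}^{r_{N(n,k)}}$ is an expanded Lyndon word, so $\cover(w)=(k-1,\varepsilon)$; but $(k-1,\varepsilon)$ is also $\ftg(w)$ directly since $w=L_{N(n,k)}$ is itself a suffix of $L_1\cdots L_{N(n,k)}$ — hence returning $(k-1,\varepsilon)$ is correct. If ${\sf is\_Lyndon}({\sf find\_root}(w))$ holds, then $w=L^{r}$ for the Lyndon word $L$ and $r=n/|L|$, i.e. $w$ is an expanded Lyndon word, so $\cover(w)=(L,\varepsilon)$, and Corollary~\ref{Cor:cover2ftg} (via Corollary~\ref{cor:main-cor-of-5.1}) guarantees that $\ctf(L,\varepsilon)$ returns $\ftg(w)$ — unless $w=(k-1)^p0^{n-p}$, but that word's root analysis needs care: when $w=0^n$ the root is $0$, which is Lyndon, and the second item of Corollary~\ref{cor:main-cor-of-5.1} is exactly the claim that $\ctf(0,0^n)$ yields $\ftg(0^n)$; when $w=(k-1)^p0^{n-p}$ with $0<p<n$, the root is $w$ itself, which is not Lyndon, so we fall through. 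After these lines, then, $w$ is not an expanded Lyndon word and $w\neq(k-1)^n$; write $\cover(w)=(L_{i+1},x)$ as furnished by Lemma~\ref{Lemma:w-divide}, so $w=yL_i^{r_i-1}\vv i1\cdots\vv ij z$ with $y$ a proper nonempty suffix of $L_i$ and $z$ a proper prefix of $\vv i{j+1}$.

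Next comes the test in lines 8--10: by Lemma~\ref{Lemma-almost_Lyndon_j_not_0}, ${\sf is\_almost\_Lyndon}(w)$ holds precisely when $j>0$. In the $j=0$ branch (lines 11--16), Lemma~\ref{Lemma:w-divide} with $j=0$ gives $w=yL_i^{r_i-1}z$ with $L_i=zy$, so $w$ is a rotation of $L_i^{r_i}$; thus ${\sf find\_min\_rot}(w)=L_i^{r_i}$ and its root is $L_i$, recovering $L_i$ correctly. Then $L'=\Lnext(L_i)$ equals $L_{i+1}$ by definition of $\Lnext$ and by Lemma~\ref{Lemma:L_i-L_i+1_sizes}/Corollary~\ref{corollary_L_i-L_i+1} (the smallest Lyndon word of length dividing $n$ that exceeds $L_i$ is exactly $L_{i+1}$). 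Since $w$ is a subword of $L_iL_{i+1}^{r_{i+1}}$ (Lemma~\ref{Lemma:w-divide}), and by Corollary~\ref{cor:main-cor-of-5.2.1}(2) we have $L_i L_{i+1}^{r_{i+1}} = L_i^{r_i}\vv i1\cdots\vv i{m_i}$ which has length $n + n = 2n$ but more to the point contains $w$, the call ${\sf find\_suffix}(w,L_iL_{i+1}^{r_{i+1}})$ returns the $|y|$-suffix $x$ with $wx$ a suffix of $L_iL_{i+1}^{r_{i+1}}$; by Corollary~\ref{cor:y-suffix} this $x$ is exactly the second component of $\cover(w)=(L_{i+1},x)$, so returning $\ctf(L_{i+1},x)=\ftg(w)$ is correct by Corollary~\ref{cor:main-cor-of-5.1}(1). (Here one must double-check that the string $L_iL_{i+1}^{r_{i+1}}$, of length $\le 2n$, genuinely contains $w$ as claimed and that ${\sf find\_suffix}$ is applied within its $O(|u|+|v|)$ regime — inputs of size $O(n)$ — which it is.)

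In the $j>0$ branch (lines 17--25), let $x_1|\cdots|x_l$ be the $\CFL$-factorization of $w$ and let $x_t$ be the first factor different from $k-1$. By Corollary~\ref{cor:main-cor-of-5.2.2}, $L_i < x_t \le L_{i+1}$ and $|y|=t-1$. If $|x_t|$ divides $n$ then $x_t$ is one of the $L$'s, and since $L_i < x_t \le L_{i+1}$ with no index strictly between $i$ and $i+1$, we get $x_t=L_{i+1}$; otherwise $x_t$ is a Lyndon word whose length does not divide $n$, and $\Lnext(x_t)$ is by definition the smallest $L$-word exceeding $x_t$, which again is $L_{i+1}$ (using $L_i < x_t$ so $\Lnext(x_t)\ge$ the successor of $L_i$, i.e. $L_{i+1}$, combined with $x_t\le L_{i+1}$ forcing $\Lnext(x_t)\le L_{i+1}$ — here I would lean on Corollary~\ref{corollary_L_i-L_i+1}/Corollary~\ref{corollary_Lyndon_length_above_half_n} and Lemma~\ref{Lemma:L_i-L_i+1_sizes} exactly as in the proof of Lemma~\ref{Lemma:L_i-L_i+1_sizes}). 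So after line 22, $L=L_{i+1}$ and $r=r_{i+1}$. The line-23 assignment sets $x$ to the $(t-1)$-suffix $=|y|$-suffix of $L_{i+1}^{r_{i+1}}$; but Corollary~\ref{cor:y-suffix} asks for the $|y|$-suffix of $L_iL_{i+1}^{r_{i+1}}$, and since $|y|<|L_i|\le n=|L_{i+1}^{r_{i+1}}|$ (the first inequality because $y$ is a \emph{proper} suffix of $L_i$) those two suffixes coincide. Hence $(L_{i+1},x)=\cover(w)$ and $\ctf(L_{i+1},x)=\ftg(w)$ by Corollary~\ref{cor:main-cor-of-5.1}(1).

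\textbf{Main obstacle.} The running-time accounting is routine. The delicate part is the bookkeeping in the two nontrivial branches: verifying that the concrete string on which ${\sf find\_suffix}$ or the ``$(t-1)$-suffix'' operation is performed has the exact length and contents needed so that the result matches the abstract object produced by $\cover$ and by Corollaries~\ref{cor:y-suffix} and~\ref{cor:main-cor-of-5.2.2} — in particular the identifications ``$\Lnext(L_i)=L_{i+1}$'' and ``$\Lnext(x_t)=L_{i+1}$ when $|x_t|\nmid n$'', and the length comparison $|y|<|L_i|\le n$ that lets us replace the suffix of $L_iL_{i+1}^{r_{i+1}}$ by the suffix of $L_{i+1}^{r_{i+1}}$. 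I would state each of these as a short explicit sub-claim and discharge it using the already-established lemmas, rather than re-deriving the Duval-algorithm facts.
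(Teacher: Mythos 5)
Your overall route matches the paper's own proof: dispatch the expanded-Lyndon edge cases via Corollary~\ref{cor:main-cor-of-5.1}, use Lemma~\ref{Lemma-almost_Lyndon_j_not_0} to branch on $j=0$ versus $j>0$, handle $j=0$ with the minimal rotation, $\Lnext$, and ${\sf find\_suffix}$, handle $j>0$ with the $\CFL$-factorization and Corollary~\ref{cor:main-cor-of-5.2.2}, and close both branches with Corollary~\ref{cor:y-suffix} and Corollary~\ref{cor:main-cor-of-5.1}. Your two explicit sub-claims (that line 21 indeed yields $L_{i+1}$, and that the $|y|$-suffix of $L_iL_{i+1}^{r_{i+1}}$ coincides with the $|y|$-suffix of $L_{i+1}^{r_{i+1}}$ because $|y|<|L_i|\leq n$) are correct and are in fact spelled out more carefully than in the paper.

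There is, however, one genuine gap: the inputs $w=(k-1)^p0^{n-p}$ with $0<p<n$. You correctly observe that such a $w$ falls through lines 1--7, but you then fold it into the general case by writing ``$\cover(w)=(L_{i+1},x)$ as furnished by Lemma~\ref{Lemma:w-divide}'' --- yet $\cover(w)$ is \emph{undefined} for exactly these words (Definition~\ref{def:cover(w)} excludes them), and Lemmas~\ref{Lemma:w-divide}, \ref{Lemma-almost_Lyndon_j_not_0} and Corollaries~\ref{cor:y-suffix}, \ref{cor:main-cor-of-5.2.2} are all stated only for $w\neq(k-1)^p0^{n-p}$. So your argument, as written, proves nothing about the algorithm's behaviour on these inputs. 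The fix is short and is what the paper does: such a $w$ is almost-Lyndon, so the computation jumps to line 17, the $\CFL$-factorization is $(k-1)|\cdots|(k-1)|0|\cdots|0$, giving $x_t=0$ and $t=p+1$, so the algorithm returns $\ctf(0,0^p)$, which is $\ftg(w)$ by item 2 of Corollary~\ref{cor:main-cor-of-5.1} --- but this needs to be argued directly, not via the cover machinery. A related minor slip: for $w=0^n$ the branch in lines 4--7 calls $\ctf(0,\varepsilon)$, and the relevant instance of item 2 of Corollary~\ref{cor:main-cor-of-5.1} is $p=0$, i.e.\ input $(0,\varepsilon)$, not $\ctf(0,0^n)$ as you wrote.
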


\begin{proof}
The algorithm clearly terminates after $O(n)$ steps, and we prove its correctness.
First, use Corollary~\ref{cor:main-cor-of-5.1} to observe that lines 1-7 handle correctly the case where $w$ is an expanded Lyndon word (use the second item of Corollary~\ref{cor:main-cor-of-5.1} for the case where $w=0^n$). Furthermore, if $w=(k-1)^p0^{n-p}$, where $0<p < n$, then $w$ is almost-Lyndon, and the algorithm terminates in line 25 and returns $\ctf(0,0^p)$. Again, item 2 of Corollary \ref{cor:main-cor-of-5.1} shows that a correct value is returned by the algorithm in these cases.

It remains to deal with the general case, in which $w\neq (k-1)^p0^{n-p}$ and $w$ is not an expanded Lyndon word.
Take such $w$, covered by $L_{i+1}$.
Write $w=yL_i^{r_i-1}v_{i,1}\cdots v_{i,j}z$  as in Lemma~\ref{Lemma:w-divide}, and assume, first, that $j=0$. Hence, by Lemma~\ref{Lemma-almost_Lyndon_j_not_0}, the test in line 8 returns a negative response, and the computation proceeds to line 11. Now, since $j=0$, $w$ is a rotation of $L_i^{r_i}$. As a result, in line 11, the word $L_i^{r_i}$ is assigned to variable $u$, in line 12, $L$ is assigned with $L_i$, and in line 13, $L'$ is assigned with $L_{i+1}$. Since $w$ is a subword of $L_iL_{i+1}^{r_{i+1}}$, line 15 assigns to $x$ the value that satisfies: $\cover(w)=(L_{i+1},x)$. Therefore, by Corollary~\ref{cor:main-cor-of-5.1}, the invocation of the $\ctf$ procedure in line 16 returns $\ftg(w)$.

Now, consider the case where $j>0$. By Lemma~\ref{Lemma-almost_Lyndon_j_not_0}, the test in line 8 returns true, and the computation traverses to line 17. By Corollary~\ref{cor:main-cor-of-5.2.2}, in line 19 a Lyndon word is assigned to variable $L$ such that $L_i<L\leq L_{i+1}$. Thus, when the computation reaches  line 23,  $L$ stores $L_{i+1}$. Moreover, by the same corollary, in line 24 $x$ is assigned with the $|y|$-suffix of $L_{i+1}^{r_{i+1}}$. Therefore, by Corollary~\ref{cor:y-suffix},  $\cover(w)=(L_{i+1},x)$ and hence, by Corollary~\ref{cor:main-cor-of-5.1}, the method invocation in line 25 returns $\ftg(w)$.
\end{proof}

By Propositions~\ref{prop:main-prop-of-sec4} and~\ref{Prop:main-prop-of-sec5} we conclude:
\begin{theorem}
Algorithm~\ref{algo_generalized-shift-rule} forms a generalized-shift-rule for the prefer-min DB sequence that runs in $O(n+c)$ time.
\end{theorem}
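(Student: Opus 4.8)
The theorem is the combination of two results already established in the excerpt, so the plan is essentially to chain them together and observe that no hypothesis is left dangling. First I would recall Proposition~\ref{prop:main-prop-of-sec4}, which states that \emph{if} $FTG(w)$ can be computed in $O(n)$ time, then Algorithm~\ref{algo_generalized-shift-rule} is a $\GSR$ for the prefer-min $(n,k)$-DB sequence running in $O(n+c)$ time. Then I would invoke Proposition~\ref{Prop:main-prop-of-sec5}, which says precisely that Algorithm~\ref{Algo:filling_the_gap} computes $FTG(w)$ in $O(n)$ time, thereby discharging the hypothesis of Proposition~\ref{prop:main-prop-of-sec4}. Combining the two gives the statement immediately.

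The one point worth spelling out — and what I expect to be the only real content beyond citing the two propositions — is why Algorithm~\ref{algo_generalized-shift-rule} is correct as a $\GSR$, not merely efficient. Here I would rely on Theorem~\ref{Theorem:L0L1PrefMin} (the prefer-min sequence is $L_1\cdots L_{N(n,k)}$, read cyclically), on Definition~\ref{def:ftg(w)} (so that after line~1 the pair $(L,x)$ satisfies that $wx$ is a suffix of $L_1\cdots L_i$ with $x$ of minimal length), and on Proposition~\ref{prop_next_n_Lyndon} (so that each call to $\Lnext$ in line~4 correctly produces the next Lyndon block in the block-construction). Since $wx$ is a suffix of $L_1\cdots L_i$ and each subsequent $\Lnext$ appends $L_{i+1}, L_{i+2},\dots$ in order, after the loop the string $x$ equals the block-suffix following $w$, and its $c$-prefix is exactly the $c$ symbols following $w$ in the (cyclic) prefer-min sequence. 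For the running time I would point to the paragraph following Algorithm~\ref{algo_generalized-shift-rule}: by Lemma~\ref{Lemma:L_i-L_i+1_sizes} no two consecutive Lyndon blocks both have length below $n$, so every $O(1)$ loop iterations add at least $n$ symbols to $x$, giving $O((n+c)/n)$ iterations each costing $O(n)$ — hence $O(n+c)$ total, on top of the $O(n)$ cost of computing $FTG(w)$.

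I do not anticipate a genuine obstacle: the theorem is a corollary of Propositions~\ref{prop:main-prop-of-sec4} and~\ref{Prop:main-prop-of-sec5}, and all supporting machinery (the $\Lnext$ algorithm of Section~\ref{Sec:Lnext}, the $\cover$-to-$\ftg$ reduction, and the structural analysis of $w$ via its $\CFL$-factorization) has been developed in the preceding sections. The mild care needed is just to confirm that the edge cases — $w=(k-1)^n$, $w$ an expanded Lyndon word, $w=(k-1)^p0^{n-p}$, and the ``wrap-around'' case where $\ftg(w)=(L_i,x)$ with $i>N(n,k)$ — are all handled, but this is already covered by Corollary~\ref{cor:main-cor-of-5.1} and the case analysis inside the proof of Proposition~\ref{Prop:main-prop-of-sec5}, so for the final theorem it suffices to cite those two propositions and conclude.

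\begin{proof}
Immediate from Propositions~\ref{prop:main-prop-of-sec4} and~\ref{Prop:main-prop-of-sec5}: the latter shows that Algorithm~\ref{Algo:filling_the_gap} computes $FTG(w)$ in $O(n)$ time, which is exactly the hypothesis under which the former guarantees that Algorithm~\ref{algo_generalized-shift-rule} is a generalized-shift-rule for the prefer-min $(n,k)$-DB sequence with $O(n+c)$ time complexity.
\end{proof}
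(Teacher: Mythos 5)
Your proposal is correct and matches the paper exactly: the paper derives this theorem immediately by combining Proposition~\ref{prop:main-prop-of-sec4} (the reduction to $\ftg$) with Proposition~\ref{Prop:main-prop-of-sec5} (the linear-time $\ftg$ algorithm), just as you do. The additional commentary you give on correctness and the loop analysis is already contained in the paper's discussion preceding Proposition~\ref{prop:main-prop-of-sec4}, so nothing is missing.
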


\section{Conclusion}
\label{Sec:Discussion}

We proposed the notion of a generalized-shift-rule for a De Bruijn sequence which, unlike a shift-rule, allows to construct the entire De Bruijn sequence efficiently. 
We noted that a generalized-shift-rule for an $(n,k)$-DB sequence runs in time $\Omega(n+c)$, and presented an $O(n+c)$ time generalized-shift-rule for the well-known prefer-min De Bruijn sequence. By imposing a trivial reduction, as explained in the preliminaries section, our results provide a generalized-shift-rule for the prefer-max De Bruijn sequence as well.

\bibliographystyle{elsarticle-num}
\bibliography{allbib}

\end{document}